\newenvironment{proof}[1][Proof]{\textbf{#1.} }{\hfill\rule{0.5em}{0.5em}}
{\catcode`\@=11\global\let\AddToReset=\@addtoreset
\AddToReset{equation}{section}

\newcommand{\Z}{{\mathbb Z}}
\newcommand{\N}{{\mathbb N}}
\newcommand{\F}{{\mathbb F}}

\newcommand{\C}{{\mathbb C}}

\newcommand{\cW}{{\mathcal W}}

\newcommand{\cG}{{\mathcal G}}

\newcommand{\eqccz}{\overset{\textrm{\tiny CCZ}}{\sim}}
\newcommand{\eqea}{\overset{\textrm{\tiny EA}}{\sim}}

\newcommand{\noteqea}{\overset{\textrm{\tiny EA}}{\not\sim}}

\newcommand{\clsEA}{\mathscr{C}_\textrm{EA}}
\newcommand{\clsCCZ}{\mathscr{C}_\textrm{CCZ}}

\newcommand{\Tr}{{\rm Tr}}
\newcommand{\Ker}{{\rm Ker}}
\newcommand{\Ima}{{\rm Im}}
\newcommand{\LK}{{\mathcal{LK}}}
\newcommand{\NL}{\textup{\texttt{NL}}}
\newcommand{\LL}{\textup{\texttt{L}}}
\newcommand{\caliL}{{\mathscr{L}}}
\newcommand{\eps}{\varepsilon}

\newcommand{\Fp}{\mathbb{F}_{p}}
\newcommand{\Fpn}{\mathbb{F}_{p^n}}
\newcommand{\Fpm}{\mathbb{F}_{p^m}}
\newcommand{\Fqn}{\mathbb{F}_{q^n}}

\newcommand{\Fbn}{\mathbb{F}_{2^n}}
\newcommand{\Fbm}{\mathbb{F}_{2^m}}

\newcommand{\Fq}{\mathbb{F}_{q}}

\newtheorem{theorem}{Theorem}[section]
\numberwithin{theorem}{section}
\newtheorem{lemma}[theorem]{Lemma}
\newtheorem{definition}[theorem]{Definition}
\newtheorem{corollary}[theorem]{Corollary}
\newtheorem{prop}[theorem]{Proposition}
\newtheorem{remark}[theorem]{Remark}
\newtheorem{example}[theorem]{Example}

\numberwithin{equation}{section}

\begin{document}
\title{On the Functions Which are CCZ-equivalent but not  EA-equivalent to Quadratic Functions over $\Fpn$}

 \author[1]{Jaeseong Jeong}
 \author[2]{Namhun Koo}
 \author[3]{Soonhak Kwon}
   \affil[1]{\footnotesize Department of Innovation Center for Industrial Mathematics, NIMS, Seongnam, Republic of  Korea}
   \affil[2]{\footnotesize Institute of Mathematical Sciences, Ewha Womans University, Seoul, Republic of Korea}
    \affil[3]{\footnotesize Applied Algebra and Optimization Research Center, Sungkyunkwan University, Suwon, Republic of Korea}
   \date{}
\maketitle

\begin{abstract}
For a given function $F$ from  $\mathbb F_{p^n}$ to itself, determining whether there exists a function which is
CCZ-equivalent but EA-inequivalent to $F$ is a very important and interesting problem. For example, K\"olsch
\cite{KOL21} showed that there is no function which is CCZ-equivalent but EA-inequivalent to the inverse function. On
the other hand, for the cases of Gold function $F(x)=x^{2^i+1}$ and $F(x)=x^3+\Tr(x^9)$ over $\Fbn$, Budaghyan, Carlet
and Pott (respectively, Budaghyan, Carlet and Leander) \cite{BCP06, BCL09FFTA} found functions which are CCZ-equivalent
but EA-inequivalent to $F$. In this paper, when a given function $F$ has a component function which has a linear
structure, we present functions which are CCZ-equivalent to $F$, and if suitable  conditions are satisfied, the
constructed functions are shown to be EA-inequivalent to $F$. As a consequence, for every quadratic function $F$ on
$\mathbb F_{2^n}$ ($n\geq 4$) with nonlinearity $>0$ and differential uniformity $\leq 2^{n-3}$, we explicitly
construct functions which are CCZ-equivalent but EA-inequivalent to $F$. Also for every non-planar quadratic function
on $\Fpn$ $(p>2, n\geq 4)$ with $|\cW_F|\leq p^{n-1}$ and differential uniformity $\leq p^{n-3}$, we explicitly
construct functions which are CCZ-equivalent but EA-inequivalent to $F$. As an application, for a proper divisor $m$ of
$n$, we present many examples of $(n,m)$-functions $F$ on $\Fpn$ such that the CCZ-equivalence class of $F$ is strictly
larger that the EA-equivalence class of $F$.
\end{abstract}

\section{Introduction}

Vectorial Boolean functions are widely applied in cryptography and coding theory. Many researchers have been interested
in constructing vectorial Boolean functions with good cryptographic properties. For example, related to S-Box in AES,
\emph{The big APN problem}, finding an APN permutation on even dimension, is one of the most prominent problems that
many researchers are interested in. Moreover, functions with low differential uniformity, low boomerang uniformity, and
high nonlinearity have also garnered significant interest among researchers. Please see
\cite{Can16,Car20,CS17,Hou04,MMM22,Mes16} for surveys and general theories on these subjects.

When new cryptographic functions are proposed, researchers consider whether they are cryptographically equivalent to
other known functions or not. One of such equivalence relations is proposed by Carlet, Charpin and Zinoviev \cite{CCZ},
and it is referred to as CCZ-equivalence named after the proposers. Several cryptographic properties such as
differential spectrum and extended Walsh spectrum are invariant under CCZ-equivalence. To show that the given two
functions are not CCZ-equivalent, one may compare their differential spectrum or extended Walsh spectrum. However,
proving CCZ-equivalence of two functions with the same differential spectrum and extended Walsh spectrum is not easy in
general. The EA-equivalence (extended affine equivalence) is another cryptographic equivalence relation, and it is well
known that if two functions are EA-equivalent then they are CCZ-equivalent. It is also known that EA-equivalence
preserves the algebraic degree, while two CCZ-equivalent functions may have different algebraic degrees. Two very
important topics related with CCZ and EA equivalences are
\begin{enumerate}
 \item[1.] determining whether the two equivalences are the same notion for some class of functions.
 \item[2.] finding CCZ-equivalent but EA-inequivalent functions if CCZ-equivalence is strictly more general than
           EA-equivalence for some class of functions.
\end{enumerate}

\noindent For the first topic, it is known that CCZ-equivalence and EA-equivalence are the same notion
\begin{enumerate}
 \item[-] for all planar functions over $\Fpn$ \cite{BH08, KP08}.
 \item[-] for all Boolean functions over $\Fbn$ \cite{BC10eqccz}.
 \item[-] for all vectorial bent Boolean functions over $\Fbn$ \cite{BC09eqccz}.
 \item[-] for two quadratic APN functions over $ \Fbn$ \cite{Yo12eqccz}.
 \item[-] for two non-invertible power functions over $ \Fpn$ \cite{Dem18}.
\end{enumerate}

\noindent For a permutation $F$, its compositional inverse $F^{-1}$ is CCZ-equivalent to $F$. Therefore CCZ-equivalence
class of $F$ contains EA-equivalence classes of both $F$ and $F^{-1}$.
 For power functions over $\Fpn$, Dempwolff \cite{Dem18} showed that two power functions $x^s$ and $x^t$ are
CCZ-equivalent to each other  if and only if both $s$ and $t$  (or $s$ and $t^{-1} \pmod{p^n-1}$) are in the same
cyclotomic coset $\pmod{p^n-1}$. Recently, K\"olsch \cite{KOL21} showed that every function over $\Fbn$ ($n\geq 5$)
that is CCZ-equivalent to the inverse function $x^{2^n-2}$ is already EA-equivalent to it.

Our main interest in this paper is the second topic. That is, finding functions which are CCZ-equivalent but
EA-inequivalent to a given function, if that is possible.
 Since
both of CCZ and EA equivalences are not easy to show, there are few research results on this topic.
 Canteaut et al. \cite{CCP22} presented a new algorithm that efficiently solves the EA-recovery problem for quadratic functions.
  Kaleyski\cite{Kal22} studied invariants of EA-equivalence and their
  efficient computations.
In \cite{BCP06}, Budaghyan, Carlet and Pott presented functions with algebraic degrees $\geq 3$ which are
CCZ-equivalent to Gold functions over binary fields. Also,  Budaghyan, Carlet and Leander \cite{BCL09FFTA} presented
APN functions with degrees $\geq 3$ which are CCZ-equivalent to $F(x) = x^3+\Tr (x^9)$ over $\Fbn$. To the authors'
knowledge, these two functions ($x^{2^i+1}$ and $x^3+\Tr (x^9)$) are the only known  examples of quadratic functions
which have CCZ-equivalent but EA-inequivalent functions (which are not compositional inverses) with algebraic degrees
$\geq 3$. Recently, the relations between CCZ and EA equivalences are studied by Canteaut and Perrin \cite{CP19} by
introducing the notions of $t$-twisting and thickness spectrum,
 and also by Budaghyan, Calderini and Villa \cite{BCV20} by further analyzing the linear structure of given functions.

In this paper, by extending previous works \cite{BCP06, BCL09FFTA, CP19, BCV20}, we show that every quadratic $(n,n)$-
function over arbitrary finite field $\Fpn$ have CCZ-equivalent functions whose algebraic degrees are greater than $2$,
if both the linearity and the differential uniformity of the target quadratic function are not so bad. Our general
approach is to consider functions over $\Fpn$ that have a component function which has a linear structure. For those
functions, we construct a class of CCZ-equivalent functions, where the algebraic degrees are either $3$ or $4$.

 In
particular, for every quadratic $(n,n)$-function $F$ on $\mathbb F_{2^n}$ ($n\geq 4$) with nonlinearity $>0$ and
differential uniformity $\leq 2^{n-3}$, we explicitly construct functions which are CCZ-equivalent but EA-inequivalent
to $F$. Also, for every non-planar quadratic $(n,n)$-function $F$ on $\mathbb F_{p^n}$ ($p$ is odd and $n\geq 4$) with
linearity $\leq p^{n-1}$ and differential uniformity $\leq p^{n-3}$, we  construct functions which are CCZ-equivalent
but EA-inequivalent to $F$. As a straightforward generalization of our methods, we show that, for every quadratic
$(n,m)$-function $F$ on $\Fpn$ with $m|n$ and $4\leq m <n$ satisfying $|\cW_F|\leq p^{n-1}$ and $p^{n-m}< \Delta_F \leq
p^{n-3}$ where $p$ is an arbitrary prime, we show that the CCZ-equivalence class of $F$ is strictly larger than the
EA-equivalence class of $F$.


The rest of this paper is organized as follows. In Section 2, we introduce some preliminaries which are necessary in
subsequent sections. In Section 3, we introduce several properties of linear structure, linear translator and linear
kernel. In Section 4, we mention the properties of Walsh transforms of strongly plateaued functions over $\Fpn$. In
Section 5, we construct CCZ-equivalent functions to the functions with a component function having a  linear structure.
In Section 6, considering quadratic functions with good linearity and differential uniformity, we show that the
algebraic degrees of the resulting functions are greater than two, and hence we have functions that are CCZ-equivalent
but EA-inequivalent to quadratic functions. In Section 7, we give concrete examples. In Section 8, we discuss
applications of our results to $(n,m)$-functions. In Section 9, we give concluding remarks.

\section{Preliminaries}\label{preli}

\noindent Let  $\F_{p^n}$ the finite field of $p^n$ elements where $n\in \N$ and $p$ is a prime number.   Let
$\F_{p^n}^\times$ be the multiplicative group consisting of all invertible elements in $\F_{p^n}$.
  To measure the resistance of the block cipher (especially a $S$-box inside of the block cipher) against
the differential cryptanalysis, for every $a\in \F_{p^n}^\times$ and $b\in \F_{p^n}$, we define the difference
distribution table $\Delta_F(a,b)$ and the differential uniformity $\Delta_F$ for a  vectorial $(n,n)$-function
$F:\F_{p^n}\to \F_{p^n}$ as follows
\begin{align*}
\Delta_F(a,b)=\#\{x\in \F_{p^n}: F(x+a)-F(x)=b\}, \\
\Delta_F =\max\{\Delta_F(a,b): a\in \F_{p^n}^\times, b\in \F_{p^n}\}.
\end{align*}
 We also
  define the \textit{differential} of $F$   to the direction of $a\in \F_{p^n}^\times$ by
$$D_{a}F(x)=F(x+a)-F(x) \mbox{ for all } x\in \F_{p^n} $$
so that $ \Delta_F(a,b)=\#\{x\in \F_{p^n}: D_{a}F(x)=b\}$. Then the lower the value of $\Delta_F$ is, the better $F$
resists a differential attack. We say that $F$ is planar or perfect nonlinear (PN) if $\Delta_F=1$, and almost perfect
nonlinear (APN) if $\Delta_F=2$.

On the other hand,  there is one important equivalence notion of vectorial Boolean functions called
Carlet-Charpin-Zinoviev (CCZ) equivalence. This notion was introduced in \cite{CCZ}, and the term
 CCZ-equivalence was first used in \cite{BCP06}. This equivalence is broader than other known equivalences
 such as affine equivalence and extended affine equivalence (EA-equivalence).

\begin{definition}
Two functions $F, F': \F_{p^n}\to \F_{p^n}$ are said to be
 \begin{enumerate}
 \item[--] affine equivalent if $F' = A_1 \circ F \circ A_2$,
 where $A_1$ and $A_2$ are affine permutations on $\F_{p^n}$.
 \item[--] extended affine equivalent (EA-equivalent) if
 $F' = A_1 \circ F \circ A_2  + A_3$, where $A_1$ and $A_2$ are affine
 permutations on $\F_{p^n}$ and $A_3: \F_{p^n}\to \F_{p^n}$ is an affine function.
 \item[--] CCZ-equivalent if there
exists an affine permutation $\mathscr{L}$ on $\Fpn \times \F_{p^n}$ such that $\mathscr{L}(\cG_F)= \cG_{F'}$ where
$\cG_F$ is the graph of the function $F$, that is $\cG_F = \{(x,F(x)) :  x \in \Fpn\} \subset \Fpn \times \F_{p^n}$.
 \end{enumerate}
\end{definition}

\begin{remark}\label{rmkequv1}
 We briefly write $F \overset{\textrm{\tiny affine}}{\sim} F'$ if
$F$ and $F'$ are affine equivalent. In a similar way, we write $F \overset{\textrm{\tiny EA}}{\sim} F'$ (resp. $F
\overset{\textrm{\tiny CCZ}}{\sim} F'$) if $F$ and $F'$ are EA-equivalent (resp. CCZ-equivalent). Then, we have the
following.
$$
 F \overset{\textrm{\tiny affine}}{\sim} F'  \Rightarrow
   F \overset{\textrm{\tiny EA}}{\sim} F' \Rightarrow
   F \overset{\textrm{\tiny CCZ}}{\sim} F'
$$
\end{remark}

\begin{remark}\label{rmkequv2}
Since the above mentioned three relations are equivalence relations, we may naturally think of the equivalence classes
for each equivalence relations. Namely, we denote $\mathscr{C}_\textrm{affine}(F)$ the set consisting of all $F'$ such
that $F' \overset{\textrm{\tiny affine}}{\sim} F$. In a similar manner, we denote $\mathscr{C}_\textrm{EA}(F)$ (resp.
$\mathscr{C}_\textrm{CCZ}(F)$) the set consisting of all $F'$ such that $F' \overset{\textrm{\tiny EA}}{\sim} F$ (resp.
$F' \overset{\textrm{\tiny CCZ}}{\sim} F$).
\end{remark}

 \noindent
 It is well-known that the differential uniformity is invariant under CCZ-equivalence, and consequently
 under EA-equivalence and affine equivalence. Another important notion invariant under CCZ-equivalence is
the linearity, which is the maximum absolute value of all Walsh coefficients of a given function.
 To explain Walsh coefficient and Walsh transform,  we need a notion of trace function
 (a very special form of linear function).
Let $n\in \N$ and $m$ be a positive divisor of $n$ such that $n=ms$. We denote $\Tr^n_m(x)$ the trace function from
$\F_{p^n}$ to $\F_{p^m}$, that is
$$\Tr^n_m(x)=\sum_{j=0}^{s-1}x^{p^{jm}}= x+x^{p^m}+x^{p^{2m}}+\cdots +x^{p^{(s-1)m}}. $$
When $m=1$, we write $\Tr (x)$ instead of $\Tr^n_1(x)$ if there is no confusion of related fields. For $d|m$ and $m|n$,
it is trivial to check
 $\Tr_d^n(x)=\Tr_d^m\Tr_m^n(x)$ for all $x\in \Fpn$.

Let $\xi=e^{\frac{2\pi i}{p}} \in \mathbb C$ be a complex primitive $p^{th}$ root of unity. For $(n,1)$-function
$f:\F_{p^n}\to \F_p$, its \textit{Walsh-Hadamard transform} $\cW_f$ is the Fourier transform of the function
$\xi^{f(x)}$, that is,
$$\cW_f(a)=\sum_{x\in \F_{p^n}}\xi^{f(x)-\Tr(ax)} \mbox{ for every } a\in \F_{p^n}.$$
We define $|\cW_f|$ be the maximum absolute value of all $\cW_f(a)$ for $a\in \Fpn$. That is,
   \begin{align}
   |\cW_f|=\max_{a\in \Fpn} |\cW_f(a)|,
   \end{align}
   where $|\cW_f(a)|^2=\cW_f(a) \overline{\cW_f}(a)$ and  $\overline{\cW_f}(a)$ is the complex conjugate of $ \cW_f(a)$.
   Due to Parseval's identity, one always have $|\cW_f|^2\geq p^n$, and
    a $(n,1)$-function $f$ achieving the lower bound $|\cW_f|=p^\frac{n}{2}$ is called a (generalized or $p$-ary) bent function.
    Please refer the book of Mesnager \cite{Mes16} for more about bent functions.

 For  $(n,n)$-function $F:\F_{p^n}\to \F_{p^n}$, we define its \textit{Walsh transform} $\cW_F:\F_{p^n}\times \F_{p^n}\to \C$
 by
$$\cW_F(a,b)= \cW_{\Tr(bF)}(a)=\sum_{x\in \F_{p^n}}\xi^{\Tr(bF(x)-ax)} \mbox{ for every } a, b\in \F_{p^n},$$
where value $\cW_F(a,b)$ is called the Walsh coefficient of $F$, and $\Tr(bF)$ is called a
 component function of $F$. In a similar way, we
define
\begin{align}
 |\cW_F|\overset{\textrm{def}}{=}\max_{a\in \Fpn, b\in \Fpn^\times} |\cW_F(a,b)|
   = \max_{a\in \Fpn, b\in \Fpn^\times} |\cW_{\Tr(bF)}(a)| = \max_{b\in \Fpn^\times} |\cW_{\Tr(bF)}|.
\end{align}
The value $ |\cW_F|$ is called the linearity of $F$ and is written as $\LL(F)=|\cW_F|$. For a binary case (i.e.,
$p=2$), we also define $\NL(F)$ (the nonlinearity of $F$) as
$\NL(F)=2^{n-1}-\frac{1}{2}|\cW_F|=2^{n-1}-\frac{1}{2}\LL(F)$. For arbitrary characteristic $p$,
 it is well-known that $ |\cW_F|$ is invariant under CCZ-equivalence.

\medskip
There is one more notion (which we will discuss very often) so called {\it algebraic degree} which is invariant under
 EA-equivalence but not under CCZ-equivalence.
 Any $(n,n)$-function $F$ on $\Fpn$ is uniquely represented by a univariate polynomial $F(x) \in \Fpn[x]$
 of degree $< p^n$. That is,
 $$
  F(x)=\sum_{i=0}^{p^n-1} a_i x^i, \quad a_i\in \Fpn.
 $$
For any integer $0 \leq k < p^n$ with its $p$-ary expansion $k=\sum_{j=0}^{n-1} k_j p^j$, we define $p$-weight of $k$,
$w_p(k)$, as $w_p(k)=\sum_{j=0}^{n-1}k_j$. An algebraic degree of $F$, $d^\circ(F)$,   is the maximum value among all
$w_p(i)$ with $a_i\neq 0$;
$$
  d^\circ(F)= \max_{\substack{{0\leq i< p^n}\\{a_i\neq 0} }} w_p(i) \quad \text{ where }
   F(x)=\sum_{i=0}^{p^n-1} a_i x^i.
$$

\noindent An $(n,n)$-function $F$ with its algebraic degree {\it two} is called a quadratic function, and a quadratic
function with no affine term,
$$
  F(x)=\sum_{i\leq j} a_{i,j} x^{p^i+p^j}, \quad a_{i,j} \in \Fpn,
$$
is called a DO-polynomial (Dembowski-Ostrom polynomial). It is understood that the summation indices $i\leq j$
 becomes strict (i.e., $i<j$) if $p=2$.

\section{Linear Structure, Linear Translator and Linear Kernel}

Let $q=p^n$ be a power of a prime $p$ and let $f : \Fqn \rightarrow \Fq$ be a $(n,1)$-function on $\Fqn$.
\begin{definition}
Let $\gamma\in \Fqn$ and $b\in \Fq$. We say $\gamma$ is a $b$-linear structure of $f$ if
\begin{align*}
 f(x+\gamma)-f(x)=b  \quad \text{ for all } x\in \Fqn,
\end{align*}
and $\gamma$ is called a linear structure of $f$ if $\gamma$ is a $b$-linear structure of $f$ for some $b \in \Fq$.
 We also say $f$ has a (nontrivial) linear structure if there exists $\gamma\in \Fqn^\times$ such that $\gamma$ is a linear
 structure of $f$.
\end{definition}

\begin{definition}
Let $\gamma\in \Fqn$ and $b\in \Fq$. We say $\gamma$ is a $b$-linear translator of $f$ if
\begin{align*}
 f(x+u\gamma)-f(x)=ub  \quad \text{ for all } x\in \Fqn \text{ and } \text{ for all } u\in \Fq,
\end{align*}
 and  $\gamma$ is called a linear translator of $f$ if $\gamma$ is a $b$-linear translator of $f$ for some $b \in \Fq$.  We also
say $f$ has a (nontrivial) linear translator if there exists $\gamma\in \Fqn^\times$ such that $\gamma$ is a linear
 translator of $f$.
\end{definition}

\noindent If $\gamma$ is a $b$-linear translator, then $\gamma$ is also a $b$-linear structure. However,
 it turns out that, over a prime field $\Fp$, `linear
translator' and `linear structure' are the same notion as is shown below.
\begin{lemma}\label{eqtranslator}
Let $b\in \Fp$ and $\gamma\in \Fpn^\times$. Let $f : \Fpn \rightarrow \Fp$ be a $(n,1)$-function. Then $\gamma$ is a
$b$-linear structure of $f$ if and only if $\gamma $ is a $b$-linear translator of $f$.
\end{lemma}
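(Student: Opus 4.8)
The plan is to dispose of the easy implication first and then establish the substantive one by a telescoping argument that exploits the prime field hypothesis in an essential way.

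First I would observe that the implication ``translator $\Rightarrow$ structure'' is immediate and in fact holds in general: if $\gamma$ is a $b$-linear translator, so that $f(x+u\gamma)-f(x)=ub$ for all $x\in\Fpn$ and all $u\in\Fp$, then specializing to $u=1$ gives $f(x+\gamma)-f(x)=b$ for every $x$, which is exactly the statement that $\gamma$ is a $b$-linear structure. This is the general direction already noted just before the lemma, and needs no hypothesis on the field.

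For the converse, assume $\gamma$ is a $b$-linear structure, i.e.\ $f(x+\gamma)-f(x)=b$ for all $x\in\Fpn$. Fix $u\in\Fp$ and take its integer representative $0\le u\le p-1$. The key point is that, because $\Fp$ is the \emph{prime} field, the scalar $u$ equals the $u$-fold sum $1+\cdots+1$, so that $u\gamma=\gamma+\cdots+\gamma$ ($u$ summands) as an element of $\Fpn$. I would then telescope:
\[
f(x+u\gamma)-f(x)=\sum_{k=0}^{u-1}\bigl(f(x+(k+1)\gamma)-f(x+k\gamma)\bigr).
\]
Applying the linear-structure identity at the shifted point $x+k\gamma$ shows that each summand equals $b$, so the total is the $u$-fold sum of $b$, which is $ub$ in $\Fp$ (again since $u$ is iterated addition of $1$); the case $u=0$ is the empty sum, giving $0=0\cdot b$. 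Equivalently, one can phrase this as an induction on the integer representative of $u$, the inductive step being a single application of the $b$-linear structure property.

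There is no genuine obstacle here beyond a conceptual subtlety that is worth flagging: the argument uses the prime field crucially, since it is precisely the identification of scalars $u$ with iterated sums of $1$ that lets the single-step hypothesis (structure) bootstrap up to the all-scalars hypothesis (translator). Over a larger field $\Fqn$ with $q=p^m$ and $m>1$ this identification fails—a scalar $u$ need not be an integer multiple of $1$—and the two notions genuinely diverge, which is exactly why the hypotheses that $b\in\Fp$ and that the codomain is $\Fp$ cannot be dropped. So the only care required is to make sure both prime-field hypotheses (on the scalars $u$ and on the target value $b$) are invoked consistently.
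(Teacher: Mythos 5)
Your proof is correct and follows essentially the same route as the paper's: the forward direction by specializing to $u=1$, and the converse by telescoping $f(x+u\gamma)-f(x)$ over the $u$ intermediate shifts, each contributing $b$. Your explicit remark that the prime-field hypothesis is what identifies the scalar $u$ with an iterated sum of $1$'s is left implicit in the paper but is exactly the point on which the argument rests.
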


\noindent
\begin{proof}
$(\Leftarrow)$ is clear by taking $u=1$.
 $(\Rightarrow)$ For any $u\in \Fpn$,
 \begin{align*}
 f(x+u\gamma)-f(x) =\sum_{k=1}^u f(x+k\gamma)-f(x+(k-1)\gamma)=\sum_{k=1}^u f(x_k+\gamma)-f(x_k)=\sum_{k=1}^u b=ub,
 \end{align*}
 where $x_k=x+(k-1)\gamma$.
\end{proof}

\begin{remark}
It is well-known (e.g. see \cite{Kyu11}) that the set
$$\LK_f\overset{\text{\rm def}}{=} \{\gamma \in \Fpn : \gamma
\text{ is a linear structure of } f\}$$
 is a vector subspace of $\Fpn$ over $\Fp$, which is briefly sketched below for
the case of component functions. The subspace $\LK_f$ is called the linear kernel of $f$.
\end{remark}


Let $F(x) \in \F_{p^n}[x]$ be an $(n,n)$-function and let
\begin{align*}
L_\gamma(x) &\overset{\text{def}}{=}  L_{\gamma,F}(x)= F(x+\gamma)-F(x)-(F(\gamma)-F(0))=D_\gamma F(x)- D_\gamma F(0)
 \end{align*}
 be the differential $D_{\gamma}F(x)$ with its constant term
 killed.  Let $\beta \in \Fpn^\times$ and let
 \begin{equation}\label{lkexplain}
 \begin{split}
 \mathcal{LK}&\overset{\text{def}}{=}\mathcal{LK}_{\Tr(\beta F)}\\
 &=\{\gamma\in \F_{p^n} :
\Tr(\beta D_\gamma F(x))=\,\text{constant}\, =\Tr(\beta D_\gamma F(0)) \textrm{ for all } x\in \mathbb
F_{p^n}  \}  \\
 &=\{\gamma\in \F_{p^n} :
\Tr(\beta L_\gamma(x))=0 \textrm{ for all } x\in \mathbb F_{p^n} \} \\
 &=\{\gamma \in \Fpn : \Ima L_\gamma \subset \langle \beta
\rangle^{\perp} \} =\{ \gamma\in \Fpn : \beta \in (\Ima L_\gamma)^\perp\}
\end{split}
\end{equation}
be the linear kernel of the component function $\Tr(\beta F)$,
 where $W^{\perp} \overset{\textrm{def}}{=} \{ x\in \Fpn : \Tr(x w)=0
           \text{ for all } w\in W   \}$ for a subset $W$ of $\Fpn$.

 \begin{remark}\label{lkremark}
 From the definition of the linear kernel, one has
\begin{center}
 $\gamma$ is a linear structure of $\Tr(\beta F)$ $\Leftrightarrow$ $\gamma \in \LK_{\Tr(\beta F)}$
  $\Leftrightarrow$ $\Tr(\beta L_\gamma(x))=0$ for all  $x\in \Fpn$
\end{center}
 \end{remark}

\noindent
 From
 \begin{align}
 L_{\gamma+\gamma'}(x)&=F(x+\gamma+\gamma')-F(x)-(F(\gamma+\gamma')-F(0)) \notag \\
 &=F(x+\gamma+\gamma')-F(x+\gamma)-(F(\gamma')-F(0)) +F(x+\gamma)-F(x)-(F(\gamma)-F(0)) \notag\\
                &\qquad  +F(\gamma')-F(0)  +F(\gamma)-F(0)  -(F(\gamma+\gamma')-F(0)) \notag\\
      &=L_{\gamma'}(x+\gamma)+L_\gamma(x)+F(\gamma')+F(\gamma)-F(\gamma+\gamma')-F(0) \notag\\
      &=L_{\gamma'}(x+\gamma)+L_\gamma(x)-L_\gamma(\gamma'), \label{L-pseudolinear}
 \end{align}
 one has
  $$
 \Tr(\beta L_{\gamma+\gamma'}(x))=\Tr(\beta  L_{\gamma'}(x+\gamma))+\Tr(\beta L_\gamma(x))
                -\Tr(\beta L_\gamma(\gamma')),
  $$
  which implies that $\LK=\LK_{\Tr(\beta F)}$ is indeed a vector (sub)space of $\Fpn$.
  (Please refer Remark \ref{lkremark}.)
Note that $L_\gamma(\gamma')=L_{\gamma'}(\gamma)$ for all $\gamma,\gamma'\in \Fpn$. Also note that
     $L_\gamma(x)$ is linear if $F$ is quadratic.
     Therefore Equation \eqref{L-pseudolinear} for quadratic $F$ is simplified as
 \begin{align}
 L_{\gamma+\gamma'}(x)=L_\gamma(x)+L_{\gamma'}(x) \label{L-linear}.
 \end{align}


\begin{remark}
 \noindent
\begin{enumerate}
 \item[1.] For a binary case, it is proven in \cite{CK08} that, up to the cyclotomic equivalence of the exponent,
 Gold function $x^{2^i+1}$ is the only (nonlinear) power function $x^d$ whose component function $\Tr(\beta x^{d})$
 has a linear structure for some $\beta$.
 \item[2.] It is almost trivial that PN function $F$ on $\Fpn$ ($p\neq 2$) does not have any (nonzero)
 component function which has a linear structure,
 since $\Delta_F=1$ implies that $D_\gamma F$ is surjective for all $\gamma\in \Fpn$. Therefore
  $D_\gamma \Tr(\beta F(x))=\Tr(\beta D_\gamma F(x))$
 cannot be a constant function for $\beta\neq 0$.
\end{enumerate}
\end{remark}

\begin{definition}\label{pbentdef}
Let $f: \Fpn \rightarrow \Fp$ be an $(n,1)$-function. We say $f$ is ($p$-ary) partially-bent if
 $D_\gamma f$ is either a constant or  a balanced function for every $\gamma \in \Fpn$.
  A vectorial $(n,n)$-function $F$ on $\Fpn$ is called
 strongly plateaued if $\Tr(bF)$ is partially-bent for every $b\in \Fpn$.
\end{definition}

\begin{remark}\label{splateaued}
The above definition of `strongly plateaued' comes from Definition $67$ in \cite{Car20}, where binary case was dealt.
For $p$-ary $(n,n)$-function, the above definition also make sense since Proposition \ref{pbentwalsh} implies that
every nonzero component function has one single nonzero absolute Walsh value.
\end{remark}

\begin{remark}\label{pbentremark}
 \noindent
 \begin{enumerate}
 \item[1.] Note that $p$-ary $(n,1)$-function $f$ is called balanced if
 $f^{-1}(i)\overset{\text{def}}{=}\{x\in \Fpn : f(x)=i\}$ is a subset of size $p^{n-1}$ for every $0 \leq i<p $.
 \item[2.] If $(n,n)$-function $F$ on $\Fpn$ is quadratic, then since $L_\gamma$ is linear,
  $\Tr(bL_{\gamma}(x))$ is either zero function or balanced function. That is, every quadratic $(n,n)$-function is
  strongly plateaued.
 \end{enumerate}
\end{remark}

\begin{prop}\label{pt-partially-bent}
 Properties of strongly plateaued $(n,n)$-functions varies depending on the characteristic $p$ as follows:
\begin{enumerate}
 \item[1.]
    When $p=2$, every strongly plateaued $(n,n)$-function $F$ on $\Fbn$ has a (nonzero) component function which has
    a (nontrivial) linear structure.
 \item[2.]
    When $p\neq 2$, a strongly plateaued $(n,n)$-function $F$ on $\Fpn$ has a (nonzero) component function which has
    a (nontrivial) linear structure
     if and only if $F$ is not PN(Perfect Nonlinear).
\end{enumerate}
\end{prop}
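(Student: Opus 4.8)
The plan is to reduce the entire statement to a single identity relating the linear-kernel dimensions of the component functions of $F$ to the differential spectrum of $F$, after which both cases fall out of one elementary inequality. Throughout write $f_b=\Tr(bF)$ and $d_b=\dim_{\Fp}\LK_{f_b}$. First I would record the reformulation of the conclusion: by the characterization of linear structures via the linear kernel (Remark~\ref{lkremark}), the function $F$ has a nonzero component function with a nontrivial linear structure if and only if $d_b\geq 1$ for some $b\in\Fpn^\times$; since $p^{d_b}\geq 1$ always, this holds if and only if $\sum_{b\neq 0}p^{d_b}>p^n-1$, with the value $p^n-1$ attained exactly when every $d_b=0$.

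The analytic core is to compute the fourth Walsh moment $\sum_{a,b}|\cW_F(a,b)|^4$ in two ways. On one side, a standard autocorrelation computation gives, for any $(n,1)$-function $f$, that $\sum_a|\cW_f(a)|^4=p^n\sum_\gamma |S_\gamma(f)|^2$ with $S_\gamma(f)=\sum_x\xi^{D_\gamma f(x)}$. When $f$ is partially-bent, $D_\gamma f$ is constant for $\gamma\in\LK_f$ (so $|S_\gamma(f)|=p^n$) and balanced otherwise (so $S_\gamma(f)=0$), whence $\sum_a|\cW_f(a)|^4=p^{3n+\dim\LK_f}$. Applying this to each component of the strongly plateaued $F$ and summing over $b\in\Fpn$ yields $\sum_{a,b}|\cW_F(a,b)|^4=p^{3n}\sum_b p^{d_b}$. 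On the other side, summing $p^n\sum_\gamma|S_\gamma(f_b)|^2$ over $b$ and using $\sum_b\xi^{\Tr(b(u-v))}=p^n[u=v]$ turns the inner sum into a collision count for $D_\gamma F$, producing the familiar identity $\sum_{a,b}|\cW_F(a,b)|^4=p^{2n}\sum_{\gamma,c}\Delta_F(\gamma,c)^2$. Equating the two expressions and stripping the $b=0$ and $\gamma=0$ terms gives
$$\sum_{b\neq 0}p^{d_b}=p^{-n}\sum_{\gamma\neq 0}\sum_{c}\Delta_F(\gamma,c)^2.$$

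To finish, I would use the elementary bound $\sum_c\Delta_F(\gamma,c)^2\geq p^n$ for each fixed $\gamma\neq 0$ (from $\sum_c\Delta_F(\gamma,c)=p^n$ spread over at most $p^n$ nonzero entries), with equality precisely when $D_\gamma F$ is a bijection. Hence the right-hand side above is $\geq p^n-1$, with equality if and only if $D_\gamma F$ is bijective for every $\gamma\neq 0$, i.e. if and only if $F$ is PN. Combining with the reformulation of the first paragraph: no nonzero component of $F$ has a nontrivial linear structure $\Leftrightarrow \sum_{b\neq 0}p^{d_b}=p^n-1 \Leftrightarrow F$ is PN. For $p\neq 2$ this is exactly statement (2). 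For $p=2$, the relation $D_\gamma F(x)=D_\gamma F(x+\gamma)$ forces every $D_\gamma F$ to be at least $2$-to-$1$, so $F$ is never PN; the equality case is then impossible and some component always has a nontrivial linear structure, which is statement (1).

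The main obstacle I anticipate is pinning down the exact dependence $\sum_a|\cW_f(a)|^4=p^{3n+\dim\LK_f}$ for partially-bent $f$, that is, verifying that balancedness of $D_\gamma f$ off the linear kernel makes the character sum $S_\gamma(f)$ vanish and that $|\LK_f|=p^{d_b}$ controls the moment; once this is in place the two global identities only require care with complex conjugates in the odd-$p$ case (where the Walsh values are genuinely complex, so one must track $\cW_F\overline{\cW_F}$ rather than squares). The concluding collision bound and its equality condition are routine.
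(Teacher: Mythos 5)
Your proof is correct, but it takes a genuinely different route from the paper's. The paper handles the two cases separately and qualitatively: for $p=2$ it argues that if no nonzero component had a linear structure then every component $\Tr(\beta F)$ would be bent, contradicting Nyberg's theorem that no vectorial bent $(n,n)$-function exists over $\F_2$; for $p\neq 2$ it observes directly that a nontrivial linear structure of $\Tr(\beta F)$ is the same as a nonzero $\beta\in(\Ima L_\gamma)^\perp$, hence the same as $D_\gamma F$ failing to be surjective, i.e.\ $F$ failing to be PN (the converse is stated as clear). Your fourth-moment computation instead establishes the single quantitative identity $\sum_{b\neq 0}p^{\dim\LK_{\Tr(bF)}}=p^{-n}\sum_{\gamma\neq 0,\,c}\Delta_F(\gamma,c)^2$ for strongly plateaued $F$ and reads off both cases from the equality analysis of the two sides; I checked the moment computations ($p^n\sum_\gamma|S_\gamma(f)|^2=p^{3n+\dim\LK_f}$ for partially-bent $f$, and the collision-count identity after summing over $b$) and the stripping of the $b=0$ and $\gamma=0$ terms, and they are all sound. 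What your approach buys is self-containedness and uniformity: you do not need to import the non-existence of vectorial bent functions (your observation that $D_\gamma F(x)=D_\gamma F(x+\gamma)$ forces $\Delta_F\geq 2$ in characteristic $2$ re-derives exactly the special case needed), you prove a stronger quantitative statement, and you make the ``clear'' converse direction of part 2 fully explicit. What the paper's route buys is brevity --- each direction is a two-line argument once the relevant background is cited. One small presentational point: when you invoke Remark~\ref{lkremark} you should note that for a partially-bent component the notions of linear structure and membership in $\LK_{\Tr(bF)}$ coincide exactly as in the paper's Equation~\eqref{lkexplain}, so $d_b\geq 1$ is indeed equivalent to the existence of a nontrivial linear structure; this is immediate but worth a sentence.
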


\noindent
\begin{proof}

 \noindent
  {\it 1.}  If $F$ has no component function $\Tr(\beta F)$ which has a linear structure except for $\beta=0$,
     then $D_\gamma(\Tr(\beta F))$ is balanced for all $\beta, \gamma \in \Fbn^\times$, which implies that
     $\Tr(\beta F)$ is bent for all $\beta\in \Fbn^\times$. This is impossible because there is no bent $(n,n)$-function
     for $p=2$ (See \cite{Nyb92}).

 \smallskip
   \noindent
  {\it 2.} If $\gamma$ is a linear structure of  $\Tr(\beta F)$ for some $\beta,\gamma\in \Fpn^\times$, then
   one has $\Tr(\beta L_\gamma(x))=0$ for all $x\in \Fpn$, i.e., $\beta \in (\Ima L_\gamma)^\perp$. Therefore $D_\gamma F$
   is not surjective. The other direction is also clear.
\end{proof}

\bigskip

\noindent As a special case of strongly plateaued functions, we have a more detailed description on the linear
structures of component functions of quadratic function $F$ below.
\begin{prop}\label{LS-quad}
Let $F$ be a quadratic $(n,n)$-function on $\Fpn$ such that $\Delta_F>1$. Then there exist $\gamma, \beta  \in
\Fpn^\times$ such that $\gamma$ is a linear structure of $\Tr(\beta F(x))$ (i.e., $\Tr(\beta L_\gamma(x))=0$ for all
$x\in \Fpn$). More precisely,
 \begin{enumerate}
 \item[1] If $p=2$, then for every $\gamma\in \Fpn^\times$, there exists $\beta\in \Fpn^\times$ such that
   $\Tr(\beta L_\gamma(x))=0$ for all $x\in \Fpn$.
 \item[2] If $p>2$ and $\Delta_F>1$ (i.e., $F$ is not PN), then there exist $\gamma, \beta\in \Fpn^\times$ such that
   $\Tr(\beta L_\gamma(x))=0$ for all $x\in \Fpn$.
 \end{enumerate}
\end{prop}

\noindent
\begin{proof}
\newline
\noindent {\it 1.} Since $L_\gamma(x)=F(x+\gamma)+F(x)+F(\gamma)+F(0)$ is linear such that
$L_\gamma(0)=0=L_\gamma(\gamma)$, one has $|\Ker L_\gamma|\geq 2$. Therefore $\Ima L_\gamma$ is a subspace of $\Fpn$ of
 dimension $\leq n-1$, that is, $L_\gamma$ is not surjective.
 Consequently there exists nonzero
 $\beta \in (\Ima L_\gamma)^\perp =\{b\in \Fpn : \Tr(b L_\gamma(x))=0 \text{ for all } x\in \Fpn \}$.

 \noindent
 {\it 2.} Since $F$ is not a PN function, one has $\Delta_F=p^s$ for some $1\leq s\leq n$.
  Therefore there exists $\gamma\in \Fpn^\times$ such that $|\Ker L_\gamma |= \Delta_F=p^s$.
  For this choice of $\gamma$, one has a subspace $\Ima L_\gamma$ of dimension $ n-s\leq n-1$. In the same way,
  there exists nonzero $\beta\in (\Ima L_\gamma)^\perp$.
\end{proof}

\section{Walsh Transform of ($p$-ary) Strongly Plateaued Functions over Arbitrary Characteristic}

The subsequent survey also applies to quadratic functions as a special case. Moreover, we will give concrete examples
 of  $F'$ satisfying $F\eqccz F'$ but $F\noteqea F'$
 only for quadratic $F$ in Section \ref{sec-concrete}.
  For a binary case, Walsh transform of strongly plateaued functions
   is well-documented (See for example \cite{CC03}). In fact, The case $p\neq 2$ is not so much
different from the case $p=2$. However, for the clarity of the exposition for general $p$,
 we briefly survey Walsh transform of strongly plateaued functions on $\Fpn$ with arbitrary
characteristic $p$.

Let $F:\F_{p^n}\to \F_{p^n}$ be a strongly plateaued function (or a quadratic function). Let $b\in \Fpn^\times$ and let
 $\mathcal{LK}=\mathcal{LK}_{\Tr(b F)}=\{\gamma\in \F_{p^n} :
\Tr(bL_\gamma(x))=0 \textrm{ for all } x\in \mathbb F_{p^n}  \}$ be the linear kernel of $\Tr(bF)$ with
$L_\gamma(x)=F(x+\gamma)-F(x)-(F(\gamma)-F(0))$
 (See Equation \eqref{lkexplain}).  Noting that $\bar{\xi}=\xi^{-1}$,
\begin{align*}
 |\cW_F(a,b)|^2 &=\sum_{y\in \F_{p^n}}\xi^{\Tr(bF(y)-ay)} \cdot
   \sum_{x\in \F_{p^n}}(\xi^{-1})^{\Tr(bF(x)-ax)}  \\
   &= \sum_{x,y\in \F_{p^n}}\xi^{-\Tr(a(y-x))} \xi^{\Tr(b[F(y)-F(x)])} \\
   &=  \sum_{x,\gamma\in \F_{p^n}}\xi^{-\Tr(a\gamma)}
   \xi^{\Tr(b[F(x+\gamma)-F(x)])}\\
 &=  \sum_{\gamma\in \F_{p^n}}\xi^{-\Tr(a\gamma)} \sum_{x\in \F_{p^n}}
   \xi^{\Tr(b[F(x+\gamma)-F(x)])}\\
   &=  \sum_{\gamma\in \F_{p^n}}\xi^{-\Tr(a\gamma)} \sum_{x\in \F_{p^n}}
   \xi^{\Tr(bL_\gamma(x))+\epsilon_\gamma}  \qquad (\epsilon_\gamma=\Tr(b[F(\gamma)-F(0)])) \\
    &=  \sum_{\gamma \in \mathcal{LK}}\xi^{-\Tr(a\gamma)} \sum_{x\in
    \F_{p^n}}
   \xi^{\epsilon_\gamma} + \sum_{\gamma \notin \mathcal{LK}}\xi^{-\Tr(a\gamma)} \sum_{x\in \F_{p^n}}
   \xi^{\Tr(bL_\gamma(x))+\epsilon_\gamma}.
\end{align*}
For $\gamma \notin \mathcal{LK}$, since $\Tr(bL_\gamma(x))$ is a balanced function if $F$ is strongly plateaued (or
since $\Tr(bL_\gamma(x))$ is nonzero linear function if $f$ is quadratic), $\Tr(bL_\gamma(x))$ takes every $u \in
\mathbb F_p$ equally many times. Therefore since $\sum_{u=0}^{p-1} \xi^u=0$, one gets
\begin{align*}
  |\cW_F(a,b)|^2  =  p^n\sum_{\gamma \in \mathcal{LK}}\xi^{-\Tr(a\gamma)+\epsilon_\gamma}.
\end{align*}

Now let $ \{\gamma_1, \gamma_2, \cdots, \gamma_k  \}$ be a basis for $\mathcal{LK}$ over $\mathbb F_p$ and let
$\epsilon_i \overset{\textrm{def}}{=}\epsilon_{\gamma_i}=\Tr(b[F(\gamma_i)-F(0)])$ for $1\leq i\leq k$. Then using the
vector space property of $\mathcal{LK}$, for any $\gamma=\sum_{i=1}^k e_i \gamma_i \in \mathcal{LK}$ with $e_i \in
\mathbb F_p$, one can express $\epsilon_\gamma$  as
$$
\epsilon_\gamma=\epsilon_{\sum_{i=1}^k e_i\gamma_i}=\sum_{i=1}^k e_i \epsilon_{\gamma_i}=\sum_{i=1}^k e_i \epsilon_i
\,\, \in \Fp.
$$
Extend $ \{\gamma_1, \gamma_2, \cdots, \gamma_k \}$ to a basis
 $ \{\gamma_1, \gamma_2, \cdots, \gamma_k, \gamma_{k+1},\cdots, \gamma_{n}
 \}$ for $\mathbb F_{p^n}$ and let $\{\alpha_1, \cdots,  \alpha_n
 \}$ be a unique dual basis of $\{\gamma_1, \cdots, \gamma_n\}$, i.e., $\Tr(\alpha_i\gamma_j)=1$ if $i=j$ and
  $\Tr(\alpha_i\gamma_j)=0$
 if $i\neq j$.

Let $a=\sum_{i=1}^n a_i \alpha_i \in \Fpn$ ($a_i\in \Fp$). Then
\begin{align*}
  |\cW_F(a,b)|^2  &=  p^n\sum_{\gamma \in \mathcal{LK}}
  \xi^{-\Tr\left((\sum_{i=1}^n a_i\alpha_i) (\sum_{j=1}^k
  e_j\gamma_j)\right)+\epsilon_\gamma}\\
  &=  p^n\sum_{\gamma \in \mathcal{LK}}
  \xi^{-(\sum_{i=1}^k a_ie_i)+\epsilon_\gamma} =
   p^n\sum_{\gamma \in \mathcal{LK}}
  \xi^{\sum_{i=1}^k (\epsilon_i-a_i)e_i}.
\end{align*}
As   $\gamma=\sum_{i=1}^k e_i \gamma_i$ runs through all elements of $\mathcal{LK}$, the vectors
 $\langle e_1, \cdots, e_k  \rangle$ runs through all vectors of
 $\mathbb F_p^k$. Therefore, if $\langle \epsilon_1-a_1, \cdots, \epsilon_k-a_k  \rangle$ is a nonzero vector,
 $\sum_{i=1}^k (\epsilon_i-a_i)e_i \in \mathbb F_p$ is balanced
 for all inputs $\langle e_1, \cdots, e_k  \rangle$. With these observations, one has the following well-known
 result of Walsh transform of strongly plateaued (or quadratic) functions over arbitrary characteristic $p$.

\begin{prop}{(Walsh transform of strongly plateaued functions)}\label{pbentwalsh}
 Let $F(x)\in \Fpn[x]$ be an $(n,n)$-function such that $F$ is strongly plateaued (or quadratic).
 Let $a, b\in \Fpn$. let $\LK=\LK_{\Tr(bF)}$ be the linear kernel of $\Tr(bF)$ of dimension $0\leq k\leq n$
  such that $\LK$ has a basis $\{\gamma_1,\cdots, \gamma_k\}$, and $\{ \gamma_1,\cdots,\gamma_k,\gamma_{k+1},\cdots \gamma_n  \}$
  is a basis for $\Fpn$. Let $\{\alpha_1,\cdots, \alpha_n\}$
   be a unique dual basis of $\{\gamma_1,\cdots, \gamma_n\}$.
   Let $a =\sum_{i=1}^n a_i\alpha_i$ and  $\epsilon_i =\Tr(b[F(\gamma_i)-F(0)])$ $(1\leq i\leq k)$. Then one has
\begin{equation*}\label{plateau_walsh}
 |\cW_F(a,b)|^2  = p^n\sum_{\gamma(=\sum_{j=1}^k e_j\gamma_j) \in \mathcal{LK}}
  \xi^{\sum_{i=1}^k (\epsilon_i-a_i)e_i}= \begin{cases}
p^{n+k} &\text{ if }  \epsilon_i=a_i \text{ for all } 1\leq i\leq k \\
0 &\text{ if not}.
\end{cases}
\end{equation*}
Consequently one has,
\begin{align*}
 |\cW_{\Tr(bF)}| &= p^\frac{n+k}{2}  \quad \text{ where } k= \dim \LK_{\Tr(bF)}, \\
 |\cW_{F}| &= p^\frac{n+k_m}{2} \quad \text{ where } k_m= \max_{b\in \Fpn^\times} \dim \LK_{\Tr(bF)}.
\end{align*}
\end{prop}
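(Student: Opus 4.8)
The plan is to begin from the identity already established in the computation immediately preceding the statement, namely that for strongly plateaued (or quadratic) $F$ one has
\[
|\cW_F(a,b)|^2 = p^n \sum_{\gamma=\sum_{j=1}^k e_j\gamma_j \,\in\, \LK} \xi^{\sum_{i=1}^k (\epsilon_i - a_i)e_i},
\]
where the collapse of the outer sum over all $\gamma$ onto the linear kernel $\LK$ used precisely that $\Tr(bL_\gamma(x))$ is balanced (equivalently, a nonzero linear function in the quadratic case) for $\gamma\notin\LK$, so that $\sum_{u=0}^{p-1}\xi^u=0$ kills those terms. Thus the only thing left is to evaluate this last sum and to deduce the two displayed consequences.

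First I would record that, as $\gamma=\sum_{i=1}^k e_i\gamma_i$ runs over $\LK$, the coordinate vector $\langle e_1,\ldots,e_k\rangle$ runs exactly once over $\Fp^k$, using the vector-space structure of $\LK$, that $\{\gamma_1,\ldots,\gamma_k\}$ is a basis, and that $\epsilon_\gamma=\sum_i e_i\epsilon_i$ is $\Fp$-linear in the $e_i$. Since the summand factors as $\prod_{i=1}^k \xi^{(\epsilon_i-a_i)e_i}$ with the indices $e_1,\ldots,e_k$ varying independently, the sum splits into a product:
\[
\sum_{\langle e_1,\ldots,e_k\rangle\in\Fp^k} \xi^{\sum_{i=1}^k (\epsilon_i-a_i)e_i} = \prod_{i=1}^k \Big(\sum_{e_i\in\Fp}\xi^{(\epsilon_i-a_i)e_i}\Big).
\]
Each inner factor is a geometric sum $\sum_{e\in\Fp}(\xi^{\epsilon_i-a_i})^e$: it equals $p$ when $\epsilon_i=a_i$, while if $\epsilon_i\neq a_i$ then $\xi^{\epsilon_i-a_i}$ is a nontrivial $p$-th root of unity and the factor vanishes by the standard additive character identity $\sum_{u\in\Fp}\xi^{cu}=0$ for $c\neq 0$. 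Hence the product is $p^k$ when $\epsilon_i=a_i$ for every $i$ and $0$ otherwise, giving $|\cW_F(a,b)|^2=p^{n+k}$ or $0$ accordingly.

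For the two concluding formulas I would first fix $b\in\Fpn^\times$ and maximize over $a$. Because $|\cW_F(a,b)|^2$ takes only the values $p^{n+k}$ and $0$, and because the value $p^{n+k}$ is genuinely attained—take $a=\sum_{i=1}^k \epsilon_i\alpha_i$, the remaining coordinates $a_{k+1},\ldots,a_n$ being irrelevant since they do not enter the sum—the maximum absolute value is $p^{(n+k)/2}$, that is $|\cW_{\Tr(bF)}|=\max_a|\cW_F(a,b)|=p^{(n+k)/2}$ with $k=\dim\LK_{\Tr(bF)}$. Maximizing next over $b\in\Fpn^\times$, the exponent is increasing in $k=\dim\LK_{\Tr(bF)}$, so the linearity is largest at the maximal kernel dimension $k_m=\max_b\dim\LK_{\Tr(bF)}$, yielding $|\cW_F|=p^{(n+k_m)/2}$. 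I do not expect a genuine obstacle here, since the substantive reduction to a character sum over $\LK$ is already carried out before the statement and what remains is orthogonality over $\Fp^k$; the only points requiring care are purely bookkeeping—verifying that $p^{n+k}$ is actually achieved (so the maximum is an equality, not merely a bound) by exhibiting the maximizing $a$, and keeping straight the identification $\cW_F(a,b)=\cW_{\Tr(bF)}(a)$ when passing between the two displayed consequences.
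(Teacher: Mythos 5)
Your proposal is correct and follows essentially the same route as the paper: it starts from the identity $|\cW_F(a,b)|^2 = p^n\sum_{\gamma\in\LK}\xi^{\sum_i(\epsilon_i-a_i)e_i}$ derived just before the statement and evaluates the character sum over $\F_p^k$ by orthogonality (your product-of-geometric-sums factorization is just a rephrasing of the paper's ``balanced exponent'' observation). Your explicit verification that the value $p^{n+k}$ is attained at $a=\sum_{i=1}^k\epsilon_i\alpha_i$ is a small but worthwhile detail the paper leaves implicit.
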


\section{CCZ-Equivalence of Some Functions Having Linear Structure}\label{sec-ccz}

\begin{lemma}\label{lemma-cczmap}
Let $\alpha, \beta \in \Fpn$ and $\gamma\in \Fpn^\times$. Let $\caliL : \Fpn \times \Fpn \rightarrow \Fpn \times \Fpn$
be defined by
\begin{align*}
 \mathscr{L} : \Fpn \times \Fpn &\rightarrow \Fpn \times \Fpn \\
(x,y) &\mapsto (x+\gamma\Tr(\alpha x+\beta y), y).
\end{align*}
Then, $\caliL$ is a linear permutation on $\Fpn \times \Fpn$ if and only if $\Tr(\alpha \gamma)\neq -1$.
\end{lemma}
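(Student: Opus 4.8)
The plan is to separate the two words in the statement: being \emph{linear} and being a \emph{permutation}. First I would observe that linearity is automatic and carries no content. Since $\Tr=\Tr^n_1$ is $\Fp$-linear and multiplication by the fixed element $\gamma$ is $\Fp$-linear, the assignment $(x,y)\mapsto (x+\gamma\Tr(\alpha x+\beta y),y)$ is $\Fp$-linear on the $2n$-dimensional $\Fp$-vector space $\Fpn\times\Fpn$ for \emph{every} choice of $\alpha,\beta,\gamma$. Hence the entire content of the lemma is the bijectivity of $\caliL$, and because $\caliL$ is an $\Fp$-linear endomorphism of a finite set, bijectivity is equivalent to triviality of its kernel.

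Next I would compute $\ker\caliL$. Setting $\caliL(x,y)=(0,0)$, the second coordinate forces $y=0$ immediately, and the first coordinate becomes $x+\gamma\Tr(\alpha x)=0$, i.e. $x=-\gamma\,\Tr(\alpha x)$. Writing $t\overset{\text{def}}{=}\Tr(\alpha x)\in\Fp$, this says $x=-t\gamma$; applying $\Tr(\alpha\,\cdot\,)$ to both sides gives $t=\Tr(\alpha x)=-t\,\Tr(\alpha\gamma)$, that is,
\begin{align*}
 t\bigl(1+\Tr(\alpha\gamma)\bigr)=0 \quad \text{in } \Fp.
\end{align*}
This scalar equation, which lives entirely in $\Fp$ because $\Tr(\alpha\gamma)\in\Fp$, is the crux and cleanly produces the stated dichotomy.

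For the direction $\Tr(\alpha\gamma)\neq -1$: then $1+\Tr(\alpha\gamma)\neq 0$ in $\Fp$, forcing $t=0$ and hence $x=-t\gamma=0$, so $\ker\caliL=\{(0,0)\}$ and $\caliL$ is a permutation. (If one prefers an explicit inverse to a kernel count, solving $s\bigl(1+\Tr(\alpha\gamma)\bigr)=\Tr(\alpha x'+\beta y')$ for the scalar $s=\Tr(\alpha x+\beta y')$ yields $x=x'-\gamma\,\bigl(1+\Tr(\alpha\gamma)\bigr)^{-1}\Tr(\alpha x'+\beta y')$ and $y=y'$, which is legitimate precisely when $1+\Tr(\alpha\gamma)\neq0$.) Conversely, when $\Tr(\alpha\gamma)=-1$ I would simply exhibit a nonzero kernel vector: taking $x=-\gamma$ (so the intended $t=1$) one checks $\Tr(\alpha(-\gamma))=-\Tr(\alpha\gamma)=1$, whence $-\gamma+\gamma\cdot 1=0$ and $\caliL(-\gamma,0)=(0,0)$ with $-\gamma\neq0$ since $\gamma\in\Fpn^\times$; thus $\caliL$ fails to be injective.

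I do not expect a genuine obstacle here. The only point requiring care is the bookkeeping that $t\bigl(1+\Tr(\alpha\gamma)\bigr)=0$ is an identity in the prime field $\Fp$, so that the hypothesis ``$\Tr(\alpha\gamma)\neq-1$'' must be read as an inequality in $\Fp$ (equivalently, $1+\Tr(\alpha\gamma)$ is an invertible scalar), which is exactly what makes both the forward implication and the explicit inverse go through.
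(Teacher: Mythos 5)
Your proof is correct and follows essentially the same route as the paper: the paper also dismisses linearity as clear and reduces the claim to injectivity, deriving the same scalar identity $u\bigl(\Tr(\alpha\gamma)+1\bigr)=0$ in $\Fp$ (with $u=\Tr(\alpha(x-x'))$ playing the role of your $t$), the only cosmetic difference being that you phrase it via the kernel while the paper works with a difference of two preimages. Your explicit kernel vector $(-\gamma,0)$ in the case $\Tr(\alpha\gamma)=-1$ is a slightly cleaner way to close the converse direction.
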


\noindent
\begin{proof}
Since the linearity of $\caliL$ is clear, it is enough to show
$$
\caliL \text{ is one to one } \Longleftrightarrow \Tr(\alpha\gamma)\neq -1
$$
Now,
\begin{align*}
\caliL(x,y)=\caliL(x',y') &\Leftrightarrow (x+\gamma\Tr(\alpha x+\beta y), y)=(x'+\gamma\Tr(\alpha x'+\beta y'), y')\\
      &\Leftrightarrow y'=y \text{ and } x+\gamma\Tr(\alpha x)=x'+\gamma\Tr(\alpha x')\\
       &\Leftrightarrow y'=y \text{ and } x-x'= -\gamma\Tr(\alpha (x-x'))=-\gamma u \quad (u=\Tr(\alpha(x-x'))\in \Fp) \\
      &\Leftrightarrow y'=y \text{ and } -\gamma u = \gamma u\Tr(\alpha\gamma) \\
      &\Leftrightarrow y'=y \text{ and }  u(\Tr(\alpha\gamma)+1)=0 \quad (\because \gamma\neq 0) \\
      &\Leftrightarrow y'=y \text{ and }  x'=x \quad \text{or} \quad y'=y  \text{ and } \Tr(\alpha\gamma)=-1 \quad (x-x'=-\gamma u)
\end{align*}
\end{proof}

\begin{prop}\label{key-prop}
  Let $p$ be a prime. Suppose that $F_0(x) \in \Fpn[x]$ is an $(n,n)$-function such
  that $\gamma$ is a linear structure of $\Tr(\beta F_0)$
  for some $\beta, \gamma \in \Fpn^\times$. Then, choosing $\alpha, c \in \Fpn$ satisfying
  \begin{align*}
     \Tr(\alpha\gamma)=-2, \qquad  \Tr(c\beta\gamma)= -\Tr(\beta D_\gamma F_0(0)),
  \end{align*}
  and letting $F(x)=F_0(x)+cx$, the function
  \begin{align*}
 H(x)=x+\gamma\Tr\left(\alpha x +\beta F(x)\right)
  \end{align*}
  is an involution, i.e. $H\circ H(x)=x$ for all $x\in \Fpn$.
\end{prop}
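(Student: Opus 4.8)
The plan is to verify $H\circ H=\mathrm{id}$ by a direct computation, the whole point being that the scalar produced by the inner trace lies in the prime field $\Fp$, which makes every correction term tractable.

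First I would fix $x\in\Fpn$ and abbreviate $u=\Tr(\alpha x+\beta F(x))\in\Fp$, so that $H(x)=x+\gamma u$ and
\[
 H(H(x)) = (x+\gamma u)+\gamma\,\Tr\bigl(\alpha(x+\gamma u)+\beta F(x+\gamma u)\bigr).
\]
Because $\gamma\neq 0$, showing $H\circ H=\mathrm{id}$ is equivalent to showing that the scalar
\[
 S:=u+\Tr\bigl(\alpha(x+\gamma u)+\beta F(x+\gamma u)\bigr)\in\Fp
\]
vanishes for every $x$.

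The key fact I would exploit is $u\in\Fp$, which has two consequences. On one hand, by $\Fp$-linearity of the trace every shift term splits off cleanly: $\Tr(\alpha\gamma u)=u\,\Tr(\alpha\gamma)$, and, writing $F=F_0+cx$, also $\Tr(\beta c\gamma u)=u\,\Tr(\beta c\gamma)$. On the other hand, since $\gamma$ is a linear structure of $\Tr(\beta F_0)$ with associated constant $\Tr(\beta D_\gamma F_0(0))\in\Fp$, Lemma~\ref{eqtranslator} upgrades this to the linear-translator identity
\[
 \Tr(\beta F_0(x+u\gamma))=\Tr(\beta F_0(x))+u\,\Tr(\beta D_\gamma F_0(0)) \quad\text{for } u\in\Fp,
\]
which is precisely what lets me handle the possibly nonlinear part $F_0$ evaluated at the shifted argument $x+\gamma u$.

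Substituting these into $S$ collects an \emph{base} contribution $\Tr(\alpha x)+\Tr(\beta F_0(x))+\Tr(\beta c x)$ together with $u$ times a sum of constants. The base contribution equals $\Tr(\alpha x+\beta F(x))=u$ by the definition of $u$, so
\[
 S=u+u+u\bigl(\Tr(\alpha\gamma)+\Tr(\beta D_\gamma F_0(0))+\Tr(\beta c\gamma)\bigr)
  =u\bigl(2+\Tr(\alpha\gamma)+\Tr(\beta D_\gamma F_0(0))+\Tr(\beta c\gamma)\bigr).
\]
Now the two hypotheses $\Tr(\alpha\gamma)=-2$ and $\Tr(c\beta\gamma)=-\Tr(\beta D_\gamma F_0(0))$ make the parenthesis vanish, giving $S=0$ and hence $H(H(x))=x$. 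The only thing needing care—more bookkeeping than genuine obstacle—is the justified use of $u\in\Fp$, both to pull scalars out of traces and to invoke the translator form in Lemma~\ref{eqtranslator}; once that is in place the cancellation is forced exactly by the prescribed choice of $\alpha$ and $c$.
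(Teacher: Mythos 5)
Your proof is correct and follows essentially the same route as the paper: a direct computation of $H\circ H$ that hinges on the inner trace value lying in $\Fp$ and on Lemma~\ref{eqtranslator} upgrading the linear structure to a linear translator. The only cosmetic difference is that the paper first normalizes so that $\gamma$ becomes a $0$-linear structure of $\Tr(\beta F)$ and then applies the translator identity to $F$, whereas you apply it to $F_0$ and cancel the constants $\Tr(\alpha\gamma)+2$ and $\Tr(c\beta\gamma)+\Tr(\beta D_\gamma F_0(0))$ at the very end.
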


\noindent
\begin{proof}
Since
 \begin{align*}
 L_{\gamma, F}(x)&= F(x+\gamma)-F(x)-(F(\gamma)-F(0)) \\
  &= F_0(x+\gamma)-F_0(x)+c\gamma -(F_0(\gamma)+c\gamma -F_0(0))=L_{\gamma,F_0}(x),
 \end{align*}
  $L_\gamma(x)=L_{\gamma,F}(x)=L_{\gamma,F_0}(x)$ is independent of choices of  $F(x)=F_0(x)+cx$ for any
  $c\in \Fpn$.
  On the other hand, $D_\gamma F(x)=D_\gamma F_0(x)+c\gamma$ so that
  \begin{align*}
 \Tr(\beta D_\gamma F(x))=\Tr\left(\beta (D_\gamma F_0(x)+c\gamma)\right)= \Tr(\beta D_\gamma F_0(x)) +\Tr(c\beta\gamma).
  \end{align*}
  Since $\gamma$ is a  linear structure of $\Tr(\beta F_0(x))$
  (i.e., $\Tr(\beta D_\gamma F_0(x))= \Tr(\beta D_\gamma F_0(0))$ for all $x\in \Fpn$),
   choosing any $c\in \Fpn$ satisfying
   \begin{align*}
  \Tr(c\beta\gamma)= -\Tr(\beta D_\gamma F_0(0))=-\Tr(\beta [F_0(\gamma)-F_0(0)]),
  \end{align*}
  one finds that $\gamma$ is a $0$-linear structure of $\Tr(\beta F(x))$ with $F(x)=F_0(x)+cx$. That is,
  \begin{align}\label{maineq1}
  \Tr(\beta D_\gamma F(x))=0 \text{ for all } x\in\Fpn.
  \end{align}

 \noindent
  Define $\eps$ as an $(n,1)$-function
  \begin{align*}
  \eps=\eps(x)=\Tr\left(\alpha x +\beta F(x)\right)
  \end{align*}
so that $H(x)=x+\gamma\eps$. Then
 \begin{align}
  F\circ H(x)=F(x+\gamma\eps)=F(x)+F(x+\gamma\eps)-F(x)=F(x)+D_{\gamma\eps}F(x), \label{fhfirst}
  \end{align}
  where  $ D_{\gamma\eps}F(x)$ is  defined as
 \begin{align*}
 D_{\gamma\eps}F(x) = F(x+\gamma\eps)-F(x)= F\left(x+\gamma \Tr(\alpha x +\beta F(x))\right)-F(x).
 \end{align*}

   \noindent
   Then
  \begin{align}
 H\circ H(x)&=H(x)+\gamma\Tr\left(\alpha H(x)+\beta F(H(x))  \right) \notag \\
        &=x+\gamma\eps +\gamma\Tr\left( \alpha(x+\gamma\eps)+\beta [F(x)+D_{\gamma\eps} F(x)]  \right) \label{fhsecond} \\
        &=x+\gamma\eps +\gamma\eps +\gamma\Tr\left(\alpha\gamma\eps +\beta D_{\gamma\eps} F(x)  \right) \notag \\
   &=x+\gamma\eps (2+\Tr(\alpha\gamma)) +\gamma\Tr\left(\beta D_{\gamma\eps} F(x)  \right) \notag \\
   &=x+\gamma\Tr\left(\beta D_{\gamma\eps} F(x)  \right).  \qquad (\because \Tr(\alpha\gamma)=-2) \label{maineq2}
  \end{align}
 Also, from Equation \eqref{maineq1}, one has $\Tr(\beta D_\gamma F(x))=0$  for all $x\in\Fpn$.
 Moreover, from Lemma \ref{eqtranslator} by taking $f(x)=\Tr(\beta F(x))$, one has
 \begin{align*}
 \Tr\left(\beta F(x+\gamma u) -\beta F(x)\right)
 =\Tr(\beta D_{\gamma u} F(x))=0
 \end{align*}
 for all $x\in \Fpn$ and for all $ u\in \Fp$. Therefore, from Equation (\ref{maineq2}), one has
  $H\circ H (x)=x$ for all $x\in \Fpn$.
  \end{proof}

\begin{theorem}\label{thm1}
Under the same conditions as in Proposition \ref{key-prop}, $F_0$ and $F\circ H$ are CCZ-equivalent. More precisely, we
have the following equivalence
$$
   F_0 \overset{\textrm{\tiny EA}}{\sim} F \overset{\textrm{\tiny CCZ}}{\sim} F\circ H.
$$
\end{theorem}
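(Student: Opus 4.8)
The plan is to establish the two claimed equivalences separately. The first relation, $F_0 \eqea F$, is essentially immediate: by construction $F(x) = F_0(x) + cx$, and since $cx$ is a linear function, adding it to $F_0$ falls under the definition of EA-equivalence (with $A_1 = A_2 = \mathrm{id}$ and $A_3(x) = cx$). So the real content is the second relation $F \eqccz F\circ H$, and my entire effort would go there.

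For the CCZ-equivalence, the natural strategy is to exhibit an affine permutation $\caliL$ of $\Fpn \times \Fpn$ that maps the graph $\cG_F$ onto the graph $\cG_{F\circ H}$. The obvious candidate is a map built from $H$ acting on the input coordinate. Concretely, I would consider the linear map
\begin{align*}
\caliL : \Fpn \times \Fpn &\to \Fpn \times \Fpn, \\
(x,y) &\mapsto \bigl(x + \gamma\Tr(\alpha x + \beta y),\, y\bigr),
\end{align*}
which is exactly the map from Lemma \ref{lemma-cczmap}. First I would verify that $\caliL$ is a permutation: Lemma \ref{lemma-cczmap} says this holds precisely when $\Tr(\alpha\gamma) \neq -1$, and since Proposition \ref{key-prop} imposes $\Tr(\alpha\gamma) = -2$, the condition is satisfied (here we use $-2 \neq -1$ in $\Fp$, which holds for every prime $p$). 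Next I would check that $\caliL(\cG_F) = \cG_{F\circ H}$. A point of $\cG_F$ has the form $(x, F(x))$, and applying $\caliL$ gives
$$
\caliL(x, F(x)) = \bigl(x + \gamma\Tr(\alpha x + \beta F(x)),\, F(x)\bigr) = \bigl(H(x),\, F(x)\bigr),
$$
using the definition $H(x) = x + \gamma\Tr(\alpha x + \beta F(x)) = x + \gamma\eps(x)$ from Proposition \ref{key-prop}. Thus $\caliL(\cG_F) = \{(H(x), F(x)) : x \in \Fpn\}$.

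The key remaining point is to recognize this image as the graph of $F\circ H$. Since Proposition \ref{key-prop} shows $H$ is an involution, $H$ is in particular a bijection, so as $x$ ranges over $\Fpn$ we may substitute $x = H(z)$ (equivalently $z = H(x)$), and the image set becomes $\{(z,\, F(H(z))) : z \in \Fpn\} = \cG_{F\circ H}$. This is where the involution property of $H$ does its essential work: it guarantees both that $H$ is a permutation (so $\caliL(\cG_F)$ really is the graph of a function of the new first coordinate) and that the second component evaluates to $F\circ H$ of the new first coordinate. Therefore $\caliL$ is an affine (indeed linear) permutation carrying $\cG_F$ to $\cG_{F\circ H}$, establishing $F \eqccz F\circ H$.

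The step I expect to require the most care is confirming that the map $\caliL$ used to witness CCZ-equivalence is genuinely a permutation and that its image is the full graph $\cG_{F\circ H}$ rather than merely a subset — but both follow cleanly once the involution property of $H$ from Proposition \ref{key-prop} is invoked, since an involution is automatically bijective. No delicate estimate is needed; the entire argument is the observation that the same linear map $\caliL$ both realizes $H$ on first coordinates and, by bijectivity of $H$, reparametrizes the graph of $F$ as the graph of $F\circ H$.
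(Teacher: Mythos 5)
Your proposal is correct and follows essentially the same route as the paper: both use the linear map $\caliL$ of Lemma \ref{lemma-cczmap} (a permutation since $\Tr(\alpha\gamma)=-2\neq -1$ in $\Fp$), compute $\caliL(x,F(x))=(H(x),F(x))$, and invoke the involution property of $H$ from Proposition \ref{key-prop} to reparametrize the image as $\{(y,F\circ H(y))\}$. Nothing is missing.
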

\begin{proof}
 $F_0 \overset{\textrm{\tiny EA}}{\sim} F$ is clear because $F(x)=F_0(x)+cx$.
  From Lemma \ref{lemma-cczmap}, one gets
  \begin{align*}
 \mathscr{L}(x, F(x))&=(x+\gamma\Tr(\alpha x +\beta F(x)), F(x))=(H(x),F(x)) \\
            &=(H(x), F\circ H\circ H(x))  \quad \because H \text{ is an involution. }\\
            &=(y, F\circ H(y))   \qquad (y=H(x))
  \end{align*}
  Since $\mathscr{L}$ is a permutation on $\Fpn\times\Fpn$ if $\Tr(\alpha\gamma)\neq -1$, and since
  $H$ is a permutation on $\Fpn$, the mapping $\mathscr{L}$ gives a graph isomorphism between
  the two sets $\{(x, F(x)) : x\in \Fpn\}$ and $\{(x, F\circ H(x)) : x\in \Fpn\}$.
\end{proof}

\begin{remark}
The resulting function $F\circ H$ is essentially, as you can see the expression in Proposition \ref{key-prop2}, a
generalized form arising from switching technique. This idea is originally due to Dillon and developed in many articles
such as \cite{EP09, LRS22}.
\end{remark}

\begin{remark}
It should be mentioned that, for a binary field, similar observations and results are found in \cite{CP19, BCV20}. In
\cite{CP19}, Canteaut and Perrin used the notion of function twisting and linear structure to derive a similar result
as above. It is also mentioned in page $242$ that ``\ldots it is very unlikely that the resulting function has degree
$2$.". Also in \cite{BCV20},  Budaghyan, Calderini and Villa obtained a similar construction using a linear structure.
In page $15$, they also remarked that ``\ldots such a transformation could lead to a function of degree $3$.". However,
none of the papers \cite{CP19, BCV20} give a new explicit example of cubic function which is CCZ-equivalent to given
quadratic function. At this moment, to the authors' knowledge, the following two functions

\smallskip
 \quad -- $x^{2^i+1}$ in \cite{BCP06},

 \quad -- $x^3+a^{-1}\Tr (a^3x^9)$ in \cite{BCL09FFTA},
\smallskip

\noindent are the only explicit examples of quadratic functions which have (non-obvious) CCZ-equivalent functions of
algebraic degree $\geq 3$.
\end{remark}

\medskip
 As a result of Theorem \ref{thm1}, it is natural to ask whether $F$ and $F\circ H$ is EA-equivalent or not. A straightforward approach is to
 examine the algebraic degree of $F\circ H$. We expect that $d^\circ (F) < d^\circ (F\circ H)$ under mild restrictions
 on the differential uniformity and the (non)linearity of $F$, but it seems to be a complicated problem for general situations.
  However, for a quadratic case (i.e., $d^\circ(F)=2$), we have a satisfactory result saying that
 \begin{equation*}
d^\circ(F\circ H)= \begin{cases}
3 &\text{ if }  p=2 \\
3 \text{ or } 4 &\text{ if }  p>2.
\end{cases}
\end{equation*}
To validate the above result, we first need to find a polynomial expression of $F\circ H$ when $d^\circ(F)=2$.
 Let
 $$
  F(x)=F_{\text{\tiny DO}}(x)+ F_{\ell}(x)+F(0),
 $$
 where $F_\ell$ is the linear part of $F$ and
 $$
 F_{\text{\tiny DO}}(x)=\sum_{i\leq j} a_{i,j}x^{p^i+p^j}
    \qquad (\text{When } p=2, \text{ the ordering is strict, i.e., } i<j)
 $$
 is the DO-part of $F(x)$.

\begin{prop}\label{key-prop2}
Let $F(x)$ be any quadratic $(n,n)$-function and let $H(x)=x+\gamma\Tr(\alpha x+\beta F(x))$
 with $\alpha, \beta, \gamma \in \Fpn$. Then, letting $\eps=\Tr(\alpha x+\beta F(x))$, one has
 \begin{align*}
  F\circ H(x) =F(x)+\eps D_\gamma F(x)
              + F_{\text{\tiny DO}}(\gamma)(\eps^2-\eps),
  \end{align*}
  where $F_{\text{\tiny DO}}$ is the DO-part of $F$. More precisely,
   \begin{enumerate}
   \item[1.] If $p=2$, then
            \begin{align}
            F\circ H(x) & =F(x)+\eps D_\gamma F(x). \label{eqalg3}
            \end{align}
   \item[2.] If $p>2$, then
               \begin{align}
            F\circ H(x) & =F(x)+\eps D_\gamma F(x) +F_{\text{\tiny DO}}(\gamma)(\eps^2-\eps). \label{eqalg4}
             \end{align}
  \end{enumerate}
\end{prop}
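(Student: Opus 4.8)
The plan is to compute $F\circ H(x)=F(x+\gamma\eps)=F(x)+D_{\gamma\eps}F(x)$ and to identify $D_{\gamma\eps}F(x)$ with the claimed expression. The structural fact that drives everything is that $\eps=\Tr(\alpha x+\beta F(x))$ takes values in the prime field $\Fp$; consequently $\eps^{p^i}=\eps$ for every $i$, so the displacement $\gamma\eps$ behaves like a genuine $\Fp$-scalar multiple of $\gamma$ under the Frobenius monomials appearing in $F$. Thus, rather than expanding $F(x+\gamma\eps)$ blindly, I would exploit how the different homogeneous pieces of $F$ scale under the substitution $x\mapsto x+\gamma\eps$.

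Concretely, writing $F=F_{\text{\tiny DO}}+F_\ell+F(0)$, I would compute $D_{\gamma\eps}$ of each summand. The constant cancels. Since $F_\ell$ is $\Fp$-linear and $\eps\in\Fp$, one has $D_{\gamma\eps}F_\ell(x)=F_\ell(\gamma\eps)=\eps F_\ell(\gamma)$. For the quadratic part I would introduce its associated symmetric $\Fp$-bilinear form $B(x,y)=F_{\text{\tiny DO}}(x+y)-F_{\text{\tiny DO}}(x)-F_{\text{\tiny DO}}(y)$, giving $D_{\gamma\eps}F_{\text{\tiny DO}}(x)=F_{\text{\tiny DO}}(\gamma\eps)+B(x,\gamma\eps)$. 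Here the two homogeneity degrees diverge: because $F_{\text{\tiny DO}}$ is degree-$2$ homogeneous under $\Fp$-scaling, $F_{\text{\tiny DO}}(\gamma\eps)=\eps^2F_{\text{\tiny DO}}(\gamma)$, whereas the bilinear cross term scales linearly, $B(x,\gamma\eps)=\eps B(x,\gamma)$. This asymmetry is the whole point and is the step demanding the most care.

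I would then compare with $\eps D_\gamma F(x)$. Using $D_\gamma F(x)=F_{\text{\tiny DO}}(\gamma)+B(x,\gamma)+F_\ell(\gamma)$, multiplication by $\eps$ reproduces the cross term $\eps B(x,\gamma)$ and the linear term $\eps F_\ell(\gamma)$ exactly, while the degree-$2$ contributions differ by $\eps^2F_{\text{\tiny DO}}(\gamma)-\eps F_{\text{\tiny DO}}(\gamma)$. Assembling, $D_{\gamma\eps}F(x)=\eps D_\gamma F(x)+(\eps^2-\eps)F_{\text{\tiny DO}}(\gamma)$, which is precisely the general formula after adding back $F(x)$. Finally I would split on $p$: when $p=2$ every value $\eps\in\F_2$ satisfies $\eps^2=\eps$, so the correction vanishes and \eqref{eqalg3} results, whereas for $p>2$ the factor $\eps^2-\eps$ survives and \eqref{eqalg4} stands.

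As an alternative staying closer to the telescoping computation in Lemma \ref{eqtranslator}, I could write $D_{\gamma\eps}F(x)=\sum_{k=1}^{\eps}D_\gamma F\big(x+(k-1)\gamma\big)$, use that $L_\gamma$ is linear for quadratic $F$ (cf. \eqref{L-linear}) so that $D_\gamma F\big(x+(k-1)\gamma\big)=D_\gamma F(x)+(k-1)L_\gamma(\gamma)$, and evaluate $L_\gamma(\gamma)=F(2\gamma)-2F(\gamma)+F(0)=2F_{\text{\tiny DO}}(\gamma)$. Since $\sum_{k=1}^{\eps}(k-1)=\binom{\eps}{2}$, this again yields the correction $2\binom{\eps}{2}F_{\text{\tiny DO}}(\gamma)=(\eps^2-\eps)F_{\text{\tiny DO}}(\gamma)$, the factor of $2$ conveniently cancelling the binomial denominator so that the identity holds in every characteristic.
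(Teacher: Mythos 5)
Your main argument is correct and is essentially the paper's own proof: both decompose $F=F_{\text{\tiny DO}}+F_\ell+F(0)$, exploit that $\eps\in\Fp$ forces $\eps^{p^i}=\eps$, and isolate the discrepancy $\eps^2 F_{\text{\tiny DO}}(\gamma)$ versus $\eps F_{\text{\tiny DO}}(\gamma)$ in the quadratic part (the paper does this by expanding $(x+\gamma\eps)^{p^i+p^j}$ monomial by monomial, you by the equivalent bilinear-form bookkeeping $D_{\gamma\eps}F_{\text{\tiny DO}}=F_{\text{\tiny DO}}(\gamma\eps)+B(x,\gamma\eps)$). Your second, telescoping argument via $D_{\gamma\eps}F(x)=\sum_{k=1}^{\eps}D_\gamma F(x+(k-1)\gamma)$ and $L_\gamma(\gamma)=2F_{\text{\tiny DO}}(\gamma)$ is a genuinely different and also correct route, closer in spirit to Lemma \ref{eqtranslator}; it has the minor virtue of avoiding any expansion of the DO-monomials, with the factor $2$ in $L_\gamma(\gamma)$ cancelling the denominator of $\binom{\eps}{2}$ so the identity holds uniformly in all characteristics.
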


\noindent
\begin{proof}
 Using $H(x)=x+\gamma\eps$,
 \begin{align}
 F\circ H(x)&=F(x+\gamma\eps)=F_{\text{\tiny DO}}(x+\gamma\eps)+ F_{\ell}(x+\gamma\eps)+F(0). \label{fh1}
 \end{align}
Since
\begin{align}
 F_{\text{\tiny DO}}(x+\gamma\eps)&=\sum_{i\leq j} a_{i,j}(x+\gamma\eps)^{p^i+p^j}
    =\sum_{i\leq j} a_{i,j}(x^{p^i}+\gamma^{p^i}\eps)(x^{p^j}+\gamma^{p^j}\eps) \notag \\
    &=\sum_{i\leq j} a_{i,j}\left(  x^{p^i+p^j}+
    (\gamma^{p^j}x^{p^i}+\gamma^{p^i}x^{p^j}+\gamma^{p^i+p^j})\eps+\gamma^{p^i+p^j}(\eps^2-\eps) \right)\notag \\
    &=F_{\text{\tiny DO}}(x)+ \eps D_\gamma F_{\text{\tiny DO}}(x)+ F_{\text{\tiny DO}}(\gamma)(\eps^2-\eps), \label{fh2}
 \end{align}
 Combining Equation \eqref{fh1} and \eqref{fh2}, one has
  \begin{align}
   F\circ H(x)&=F_{\text{\tiny DO}}(x+\gamma\eps)+ F_{\ell}(x+\gamma\eps)+F(0) \notag \\
            &=F_{\text{\tiny DO}}(x)+\eps D_\gamma F_{\text{\tiny DO}}(x)+ F_{\text{\tiny DO}}(\gamma)(\eps^2-\eps)
            + F_{\ell}(x)+F_\ell(\gamma)\eps+F(0) \notag \\
            &=\{F_{\text{\tiny DO}}(x)+ F_{\ell}(x)+F(0)\}+\eps \{D_\gamma F_{\text{\tiny DO}}(x)+F_\ell(\gamma)\}
              + F_{\text{\tiny DO}}(\gamma)(\eps^2-\eps) \notag \\
            &=F(x)+\eps D_\gamma F(x)
              + F_{\text{\tiny DO}}(\gamma)(\eps^2-\eps). \notag
  \end{align}
  \end{proof}

\medskip
 \noindent
 From the above Proposition, to show $d^\circ (F\circ H)>2$, it is enough to show that the algebraic degrees of
 $\eps D_\gamma F(x)$ and $\eps^2$ are greater than two. In the following section, we will show that
 \begin{align*}
  d^\circ(\eps D_\gamma F(x))&=3 \quad \text{ for any } p, \quad \text{ and } \\
   \quad d^\circ(\eps^2) &=4 \quad \text{ for odd } p,
 \end{align*}
 are satisfied under certain mild restrictions on $F$.

\section{Algebraic Degrees of Products of Quadratic Functions}\label{sec-degree}
For quadratic $F(x)\in \Fpn[x]$, recall that $L_\gamma(x)=F(x+\gamma)-F(x)-(F(\gamma)-F(0))$ is bilinear with respect
to $\gamma, x \in \Fpn.$ That is, $L_{\gamma_1+\gamma_2}(x)=L_{\gamma_1}(x)+L_{\gamma_2}(x)$ and
$L_{\gamma}(x+y)=L_{\gamma}(x)+L_{\gamma}(y)$. Moreover $L_\gamma(x)$ is symmetric in the sense that
$L_\gamma(\beta)=L_\beta(\gamma)$.

\begin{lemma}\label{lemF3}
Let $F(x)\in \Fpn[x]$ be a quadratic $(n,n)$-function and let
$$F_3(x)=\Tr(\beta F(x))D_\gamma F(x)$$
 with $\beta, \gamma \in
\Fpn$. Then for any $\gamma_1, \gamma_2, \gamma_3 \in \F_{p^n}$, we have
$$
D_{\gamma_3} D_{\gamma_2} D_{\gamma_1} F_3(x)=\Tr(\beta L_{\gamma_2}(\gamma_3))L_\gamma(\gamma_1) +\Tr(\beta
L_{\gamma_1}(\gamma_3))L_\gamma(\gamma_2) +\Tr(\beta L_{\gamma_1}(\gamma_2))L_\gamma(\gamma_3).
$$
\end{lemma}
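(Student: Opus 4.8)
The plan is to differentiate the product $F_3=\Tr(\beta F)\cdot D_\gamma F$ three times by repeatedly applying the discrete Leibniz rule
$$D_a(gh)(x)=g(x)\,D_a h(x)+\bigl(D_a g\bigr)(x)\,h(x)+\bigl(D_a g\bigr)(x)\bigl(D_a h\bigr)(x),$$
which one checks at once by expanding $g(x+a)h(x+a)-g(x)h(x)$. The point is that both factors of $F_3$ have very low algebraic degree: writing $g=\Tr(\beta F)$ and $h=D_\gamma F$, the quadraticity of $F$ makes $g$ of degree $\le 2$ and $h$ affine, so after one or two derivatives the factors collapse to constants in $x$, and these constants are controlled by the bilinear form $L$.

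First I would record three elementary identities coming from the bilinearity and symmetry of $L$. Since $D_\gamma F(x)=L_\gamma(x)+(F(\gamma)-F(0))$ and $L_\gamma$ is linear in $x$, one has $D_a(D_\gamma F)(x)=L_\gamma(a)$, a constant in $x$. Since $D_a\Tr(\beta F)(x)=\Tr(\beta L_a(x))+\Tr(\beta(F(a)-F(0)))$ is affine in $x$, its further derivative is the constant $D_b D_a\Tr(\beta F)(x)=\Tr(\beta L_a(b))$. These are the only derivatives of $g$ and $h$ that will ever appear, and because they are constant (in $x$) the shifted/unshifted ambiguity in the Leibniz rule is harmless.

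Then I would apply $D_{\gamma_1}$, then $D_{\gamma_2}$, then $D_{\gamma_3}$, using the Leibniz rule at each stage together with the identities above to evaluate the derivatives of the factors. After $D_{\gamma_1}$ the expression is a sum of $g$ times the constant $L_\gamma(\gamma_1)$, the affine function $D_{\gamma_1}g$ times $h$, and a constant; after $D_{\gamma_2}$ every summand is either affine in $x$ or constant, the affine pieces being $L_\gamma(\gamma_1)D_{\gamma_2}g$, $(D_{\gamma_1}g)L_\gamma(\gamma_2)$, and $\Tr(\beta L_{\gamma_1}(\gamma_2))\,h$. Applying $D_{\gamma_3}$ annihilates all the constant terms and replaces each remaining affine factor by its (constant) derivative, leaving exactly
$$L_\gamma(\gamma_1)\Tr(\beta L_{\gamma_2}(\gamma_3))+L_\gamma(\gamma_2)\Tr(\beta L_{\gamma_1}(\gamma_3))+\Tr(\beta L_{\gamma_1}(\gamma_2))L_\gamma(\gamma_3),$$
which is the asserted formula. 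The only genuine care needed is the bookkeeping of which terms are constant in $x$ (and hence die under a later $D_{\gamma_i}$) versus which remain affine; the symmetry $L_{\gamma_i}(\gamma_j)=L_{\gamma_j}(\gamma_i)$ both produces the displayed symmetric form and serves as a consistency check, since the triple derivative of a degree-$3$ function must be symmetric in $\gamma_1,\gamma_2,\gamma_3$. I expect this tracking of terms, rather than any conceptual difficulty, to be the main obstacle.
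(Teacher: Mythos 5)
Your proof is correct and is essentially the paper's own argument in different packaging: the paper's first-step identity $D_{\gamma_1}(gh)=(D_{\gamma_1}g)\,h+g(\cdot+\gamma_1)\,D_{\gamma_1}h$ applied to $g=\Tr(\beta F)$, $h=L_\gamma$ is exactly your discrete Leibniz rule, and both arguments then discard terms of degree $\le 1$ in $x$ before the remaining derivatives. One cosmetic slip: the cross term $(D_{\gamma_1}g)(D_{\gamma_1}h)=L_\gamma(\gamma_1)\,D_{\gamma_1}g(x)$ is affine in $x$ rather than constant, but since it has degree at most $1$ it is annihilated by $D_{\gamma_3}D_{\gamma_2}$ either way, so the final formula is unaffected.
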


\noindent
\begin{proof}
From $F_3(x)=\Tr(\beta F(x))D_\gamma F(x)= \Tr(\beta F(x))\left(L_\gamma(x)+(F(\gamma)-F(0))\right)$, one gets
$D_{\gamma_3} D_{\gamma_2} D_{\gamma_1} F_3(x)=D_{\gamma_3} D_{\gamma_2} D_{\gamma_1} \left(\Tr(\beta F(x))
L_\gamma(x)\right)$ because $d^\circ(\Tr(\beta F(x))\leq 2$. Now
\begin{align*}
D_{\gamma_1} \left(\Tr(\beta F(x)) L_\gamma(x)\right)
 &= \Tr(\beta F(x+\gamma_1))L_\gamma(x+\gamma_1)-\Tr(\beta F(x))L_\gamma(x) \\
  &= \Tr\left(\beta F(x+\gamma_1)-\beta F(x)\right)L_\gamma(x)+\Tr(\beta F(x+\gamma_1))L_\gamma(\gamma_1) \\
  &= \Tr\left(\beta L_{\gamma_1}(x)+\beta (F(\gamma_1)-F(0))\right)L_\gamma(x)+\Tr(\beta F(x+\gamma_1))L_\gamma(\gamma_1) \\
  &= \Tr\left(\beta L_{\gamma_1}(x)\right)L_\gamma(x)+\Tr(\beta F(x+\gamma_1))L_\gamma(\gamma_1) +\Tr(\beta
  (F(\gamma_1)-F(0)))L_\gamma(x).
\end{align*}
Since $d^\circ(L_\gamma(x))\leq 1$, one has
$$
D_{\gamma_3}D_{\gamma_2}D_{\gamma_1} \left(\Tr(\beta F(x)) L_\gamma(x)\right)=
  D_{\gamma_3}D_{\gamma_2} \left[ \Tr\left(\beta L_{\gamma_1}(x)\right)L_\gamma(x)+\Tr(\beta F(x+\gamma_1))L_\gamma(\gamma_1)
  \right],
$$
where
\begin{align*}
D_{\gamma_2} & \left[ \Tr\left(\beta L_{\gamma_1}(x)\right)L_\gamma(x)+\Tr(\beta F(x+\gamma_1))L_\gamma(\gamma_1)
  \right] \\
  =\,  & \Tr\left(\beta L_{\gamma_1}(x+\gamma_2)\right)L_\gamma(x+\gamma_2)+\Tr(\beta
  F(x+\gamma_1+\gamma_2))L_\gamma(\gamma_1)\\
  &- \Tr\left(\beta L_{\gamma_1}(x)\right)L_\gamma(x)-\Tr(\beta F(x+\gamma_1))L_\gamma(\gamma_1) \\
  =\, & \Tr\left(\beta L_{\gamma_1}(x)\right)L_\gamma(\gamma_2)
   +\Tr\left(\beta L_{\gamma_1}(\gamma_2)\right)L_\gamma(x) +\Tr\left(\beta L_{\gamma_1}(\gamma_2)\right)L_\gamma(\gamma_2)\\
    &+\Tr\left(\beta L_{\gamma_2}(x+\gamma_1)+\beta(F(\gamma_2)-F(0))\right)L_\gamma(\gamma_1).
\end{align*}
Since $D_{\gamma_3}$ kills all the constant terms, we get
$$
D_{\gamma_3}D_{\gamma_2}D_{\gamma_1} \left(\Tr(\beta F(x)) L_\gamma(x)\right)=
 \Tr(\beta L_{\gamma_1}(\gamma_3))L_\gamma(\gamma_2) +\Tr(\beta L_{\gamma_1}(\gamma_2))L_\gamma(\gamma_3)
 +\Tr(\beta L_{\gamma_2}(\gamma_3))L_\gamma(\gamma_1).
$$
\end{proof}

\begin{prop}\label{cubic-degree}
 Suppose that $F$ is a quadratic $(n,n)$-function with $n\geq 4$
  and suppose that
  $$
  | \cW_{F}|\overset{\text{\rm def}}{=} \max_{b \in \Fpn^\times} | \cW_{\Tr(b F)}| < p^n \quad \textrm{and} \quad \Delta_F\leq p^{n-3}.
  $$
 Let $\eps=\Tr(\alpha x +\beta F(x))$ with $\alpha\in \Fpn$ and $\beta\in \Fpn^\times$.
  Then, one has $d^\circ (\eps D_\gamma F)=3$ for any $\gamma\in \Fpn^\times$.
\end{prop}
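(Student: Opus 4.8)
The plan is to reduce the claim to the nonvanishing of a single third-order derivative of $F_3(x)=\Tr(\beta F(x))D_\gamma F(x)$ (the function of Lemma \ref{lemF3}), and then to feed the two hypotheses into the explicit formula from that lemma. First I would split $\eps D_\gamma F = \Tr(\alpha x)D_\gamma F + F_3$. Since $F$ is quadratic, $\Tr(\alpha x)$ is $\Fp$-linear and $D_\gamma F$ is affine, so the first summand has algebraic degree at most $2$ and cannot affect any degree-$3$ component; moreover $\eps$ has degree $\le 2$ and $D_\gamma F$ degree $\le 1$, so $d^\circ(\eps D_\gamma F)\le 3$. Thus it suffices to prove $d^\circ(F_3)=3$. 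Because each derivative lowers the algebraic degree by at least one, every term of degree $\le 2$ has vanishing third derivative, so $d^\circ(F_3)=3$ is equivalent to exhibiting $\gamma_1,\gamma_2,\gamma_3\in\Fpn$ for which the constant $D_{\gamma_3}D_{\gamma_2}D_{\gamma_1}F_3$ supplied by Lemma \ref{lemF3} is nonzero.

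Next I would encode the hypotheses. Set $B(u,v)=\Tr(\beta L_u(v))$; the symmetry $L_u(v)=L_v(u)$ makes $B$ a symmetric $\Fp$-bilinear form on $\Fpn$ whose radical is precisely $\LK_{\Tr(\beta F)}$. By Proposition \ref{pbentwalsh}, the condition $|\cW_F|<p^n$ forces $\dim\LK_{\Tr(bF)}\le n-1$ for every $b\in\Fpn^\times$, in particular for $b=\beta$; hence the radical of $B$ is a proper subspace, $B\not\equiv 0$, and I may fix $\gamma_1,\gamma_2$ with $c:=B(\gamma_1,\gamma_2)\ne 0$. On the other hand, for quadratic $F$ one has $\Delta_F=\max_{\gamma\ne 0}|\Ker L_\gamma|$, so $\Delta_F\le p^{n-3}$ yields $\dim\Ker L_\gamma\le n-3$, i.e. $\operatorname{rank}L_\gamma\ge 3$ for the fixed $\gamma\in\Fpn^\times$.

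With these two facts I would view the Lemma \ref{lemF3} expression, for the fixed $\gamma_1,\gamma_2$, as a linear map in $\gamma_3$, namely $\Phi(\gamma_3)=B(\gamma_2,\gamma_3)L_\gamma(\gamma_1)+B(\gamma_1,\gamma_3)L_\gamma(\gamma_2)+c\,L_\gamma(\gamma_3)$, and show $\Phi\not\equiv 0$. Indeed, if $\Phi\equiv 0$ then $c\,L_\gamma(\gamma_3)$ lies in $\operatorname{span}_{\Fp}\{L_\gamma(\gamma_1),L_\gamma(\gamma_2)\}$ for all $\gamma_3$, a space of dimension at most $2$; but $\Ima(cL_\gamma)=\Ima L_\gamma$ has dimension $\operatorname{rank}L_\gamma\ge 3$, a contradiction. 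Hence some $\gamma_3$ gives $\Phi(\gamma_3)\ne 0$, so $d^\circ(F_3)=3$ and therefore $d^\circ(\eps D_\gamma F)=3$.

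I expect the main obstacle to be the clean translation of the two hypotheses into the usable facts ``$B\not\equiv 0$'' and ``$\operatorname{rank}L_\gamma\ge 3$,'' and in particular the dimension-counting contradiction: the argument rests entirely on the strict separation $2<3$ between the at-most-two-dimensional span produced by the cross terms and the at-least-three-dimensional image of $L_\gamma$ guaranteed by the differential-uniformity bound. A secondary point requiring care is the justification that all terms of degree $\le 2$ have vanishing third derivative, which legitimizes reading off $d^\circ(F_3)=3$ from a single nonzero third derivative.
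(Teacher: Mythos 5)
Your proposal is correct, and it reaches the conclusion by a genuinely different route in the one place where the paper's proof does real work: the nonvanishing of the third derivative supplied by Lemma \ref{lemF3}. Both arguments make the same reduction to $F_3=\Tr(\beta F)D_\gamma F$, both extract $\dim\Ima L_\gamma\ge 3$ from $\Delta_F\le p^{n-3}$, and both extract $\LK_{\Tr(\beta F)}\subsetneq\Fpn$ from $|\cW_F|<p^n$ via Proposition \ref{pbentwalsh}. But the paper then runs a coset-counting argument on the decomposition of $\Fpn$ modulo $\Ker L_\gamma$ to produce two elements $\gamma_1,\gamma_2\notin\LK$ with $L_\gamma(\gamma_1),L_\gamma(\gamma_2)$ linearly independent, and finishes with a case split on whether $\Tr(\beta L_{\gamma_1}(\gamma_2))$ vanishes. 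You instead choose $\gamma_1,\gamma_2$ with $B(\gamma_1,\gamma_2)=\Tr(\beta L_{\gamma_1}(\gamma_2))\ne 0$ --- which exists precisely because the radical of the symmetric form $B$ is $\LK_{\Tr(\beta F)}$, a proper subspace --- and then observe that if $\Phi(\gamma_3)=B(\gamma_2,\gamma_3)L_\gamma(\gamma_1)+B(\gamma_1,\gamma_3)L_\gamma(\gamma_2)+B(\gamma_1,\gamma_2)L_\gamma(\gamma_3)$ vanished identically, then $\Ima L_\gamma$ would sit inside the at-most-two-dimensional span of $L_\gamma(\gamma_1),L_\gamma(\gamma_2)$, contradicting $\dim\Ima L_\gamma\ge 3$. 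This dimension argument subsumes both of the paper's cases at once (in the paper's second case the choice of $\gamma_3$ outside $\spann\{L_\gamma(\gamma_1),L_\gamma(\gamma_2)\}$ is exactly your contradiction made constructive), and it avoids the coset bookkeeping entirely; what the paper's version buys in exchange is a fully explicit recipe for $\gamma_1,\gamma_2,\gamma_3$, which is convenient for the concrete computations in Section \ref{sec-concrete}. Your two flagged points of care --- that a nonzero third derivative certifies degree $\ge 3$ while the trivial upper bound gives $\le 3$, and that the cross terms span at most a two-dimensional $\Fp$-space because their coefficients $B(\cdot,\cdot)$ lie in $\Fp$ --- are both handled correctly.
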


\noindent
 \begin{proof}
Since $d^\circ(F)=2$ and $d^\circ(\Tr(\alpha x)D_\gamma F(x))\leq 2$, to prove the proposition, it is enough to show
that $d^\circ(F_3)=3$ where $F_3(x)=\Tr(\beta F(x))D_\gamma F(x)$. On the contrary, suppose that $d^\circ(F_3)\leq 2$.
Then for any $\gamma_1, \gamma_2, \gamma_3 \in \Fpn$, one has $D_{\gamma_3} D_{\gamma_2} D_{\gamma_1} F_3(x)=0$ for all
$x\in \Fpn$. On the other hand,  by Lemma \ref{lemF3}, one has
 \begin{align}\label{3rd_derivative}
  D_{\gamma_3} D_{\gamma_2} D_{\gamma_1}
F_3(x)=\Tr(\beta L_{\gamma_2}(\gamma_3))L_\gamma(\gamma_1) +\Tr(\beta L_{\gamma_1}(\gamma_3))L_\gamma(\gamma_2)
+\Tr(\beta L_{\gamma_1}(\gamma_2))L_\gamma(\gamma_3),
 \end{align}
which is a constant independent of $x$ (but depending on $\gamma_1,\gamma_2,\gamma_3$). We will show that the
derivative is nonzero for some suitable choices of $\gamma_1,\gamma_2$ and $\gamma_3$. From the assumption, one gets
\begin{align}
p^s\overset{\textrm{def}}{=}|\Ker L_\gamma|\leq \Delta_F\leq p^{n-3} \label{n-s}.
\end{align}
Also, one has the following isomorphism between vector spaces over $\Fp$,
$$
\Fpn /\Ker L_\gamma \cong \Ima L_\gamma,
$$
which implies that $|\Ima L_\gamma|=p^{n-s}\geq p^3$ by the Equation (\ref{n-s}). Therefore $\Ima L_\gamma$ is a vector
subspace of dimension $n-s\geq 3$ over $\Fp$ and there exist at least three independent (nonzero) vectors in $\Ima
L_\gamma$. Moreover, we have a disjoint (additive) coset decomposition of $\Fpn$ as
\begin{align*}
\Fpn = \cup_{i=0}^{p^{n-s}-1}=K_0 \cup K_1 \cup \cdots \cup K_{p^{n-s}-1} \qquad (K_i=\theta_i +\Ker L_\gamma)
\end{align*}
with $\theta_0=0$ and $K_0=\Ker L_\gamma$, and different cosets maps to different
 elements under $L_\gamma$, i.e., $L_\gamma(\theta_i+x)=L_\gamma (\theta_i)$ for all $x\in \Ker L_\gamma$.
Recall that the linear kernel of $\Tr(\beta F)$
 \begin{align}
 \mathcal{LK}&=\mathcal{LK}_{\Tr(\beta F)} \notag \\
 &=\{\gamma\in \F_{p^n} :
\Tr(\beta F(x+\gamma)-bF(x))=\Tr(\beta F(\gamma)-bF(0)) \textrm{ for all } x\in \mathbb
F_{p^n}  \} \notag \\
 &=\{\gamma\in \F_{p^n} :
\Tr(\beta L_\gamma(x))=0 \textrm{ for all } x\in \mathbb F_{p^n}  \}
 =\{\gamma \in \Fpn : \Ima L_\gamma \subset \langle \beta \rangle^{\perp} \} \label{LK}
\end{align}
is a subspace of dimension (say) $k_\beta$ such that $| \cW_{\Tr(\beta F)}|= p^\frac{n+k_\beta}{2}$. Our assumption of
$$ | \cW_{F}|= \max_{b \in \Fpn^\times} | \cW_{\Tr(b F)}| < p^n $$
says that the dimension $k_b$ of linear kernel $\mathcal{LK}_{\Tr(b F)}$ strictly smaller than $n$, i.e., $k_b < n$ for
all $b \in \Fpn^\times$. (When $p=2$, an equivalent assumption is $\NL(F)\neq 0$ (or $\LL(F)<2^n$).) Therefore we get
$$
\dim \mathcal{LK}=\dim \mathcal{LK}_{\Tr{(\beta F)}} =k_\beta \leq n-1,
$$
which implies
$$
 |\mathcal{LK}|\leq p^{n-1}  \quad \textrm{i.e.,}\,\, |\Fpn \setminus \mathcal{LK}| \geq p^n-p^{n-1}.
$$
We claim that the number of $1\leq i\leq p^{n-s}-1$ such that $K_i \subset \LK$ (i.e., the number of nonzero cosets
$K_i$ contained in $\LK$) is at most $p^{n-s-1}$. If not, there are $p^{n-s-1}+1$ different $K_i$ such that $K_i\subset
\LK$. Since all cosets are disjoint, one gets $p^{n-1}+p^s=(p^{n-s-1}+1)p^s\leq |\LK|\leq p^{n-1}$ which is a
contradiction. Therefore, the number of $K_i (1\leq i\leq p^{n-s}-1)$ such that $K_i\not\subset \LK$ is at least
$p^{n-s}-1-p^{n-s-1}=(p-1)p^{n-s-1}-1$. Pick any nonzero coset $K_i \not\subset \LK$ and choose any $\gamma_1 \in
K_i\setminus \LK$. Since there are $p-1$ cosets $tK_i=t\theta_i+\Ker L_\gamma$ $(1\leq t\leq p-1)$ such that $tK_i$ and
$K_i$ are dependent over $\Fp$ and since $p-1 < (p-1)p^2-1 \leq (p-1)p^{n-s-1}-1$, there exists $K_j (\neq tK_i)$ such
that $K_j\not\subset \LK$. Now, choose any $\gamma_2 \in K_j\setminus \LK$. Then one has $\gamma_1, \gamma_2 \not\in
\LK$ and, via the isomorphism $\Fpn/\Ker L_\gamma \cong \Ima L_\gamma$, one gets two linearly independent vectors
$L_\gamma(\gamma_1), L_\gamma(\gamma_2)$  over $\Fp$. Since $\gamma_1, \gamma_2 \not\in \LK=\LK_{\Tr(\beta F)}$, in
view of the definition of $\LK$ in (\ref{LK}), there exist $x_1, x_2 \in \Fpn^\times$ such that
\begin{align}\label{eqbetaL}
 \Tr(\beta L_{\gamma_1}(x_1))=1  \quad \textrm{and}\quad  \Tr(\beta L_{\gamma_2}(x_2))=1.
\end{align}
So far, $\gamma_1$ and $\gamma_2$ are determined and we will choose $\gamma_3$ to make $D_{\gamma_3} D_{\gamma_2}
D_{\gamma_1} F_3(x)$ nonvanishing. From Equation (\ref{3rd_derivative}), we see that $D_{\gamma_3} D_{\gamma_2}
D_{\gamma_1} F_3(x)$ is a $\Fp$-linear combination of three vectors $L_\gamma(\gamma_j)\,\, (1\leq j\leq 3)$, where
$L_\gamma(\gamma_1)$ and $L_\gamma(\gamma_2)$ are linearly independent over $\Fp$.
 If $\Tr(\beta L_{\gamma_1}(\gamma_2))=0$, then from Equation (\ref{eqbetaL}), choose $\gamma_3$ satisfying
 $\Tr(\beta L_{\gamma_2}(\gamma_3))=1$. Then one has
 $$
 D_{\gamma_3} D_{\gamma_2} D_{\gamma_1}
F_3(x)=L_\gamma(\gamma_1) +\Tr(\beta L_{\gamma_1}(\gamma_3))L_\gamma(\gamma_2)
$$
which is nonzero. If $\Tr(\beta L_{\gamma_1}(\gamma_2))\neq 0$, since $\dim \Ima L_\gamma =n-s\geq 3$, one can choose
$\gamma_3\in \Fpn$ such that $L_\gamma(\gamma_3)$ is not spanned by $\{L_\gamma(\gamma_1), L_\gamma(\gamma_2)\}$. Then
$$
D_{\gamma_3} D_{\gamma_2} D_{\gamma_1} F_3(x)=\Tr(\beta L_{\gamma_2}(\gamma_3))L_\gamma(\gamma_1) +\Tr(\beta
L_{\gamma_1}(\gamma_3))L_\gamma(\gamma_2) +\Tr(\beta L_{\gamma_1}(\gamma_2))L_\gamma(\gamma_3)
$$
is a nonzero vector of $\Fpn$.
 \end{proof}

\begin{remark}
Please be aware that if any of the two conditions $ | \cW_{F}|<p^n$ and $\Delta_F\leq p^{n-3}$ is not satisfied, then
there are some functions $F$ such that $d^\circ (\eps D_\gamma F)<3$. Let us give such examples for a binary case.
 Let $F(x)=x\Tr(x)$ be a quadratic $(n,n)$-function on $\Fbn$. It is clear that $\NL(F)=0$ since $\Tr(F(x))=\Tr(x)$.
  Let $\gamma=1=\beta$ and $\alpha=0$. Then, since $d^\circ(\Tr(F))=1$, one gets $d^\circ \left(\Tr(F(x)) D_1 F(x)\right)<3$.
   Also, for the same $F(x)=x\Tr(x)$, by choosing $\gamma \in \Fbn^\times$ and $\beta\in \Fbn\setminus\F_2$ satisfying
   $\Tr(\gamma)=0$ and $\Tr(\beta\gamma)=0$, one gets $\Tr(\beta D_\gamma F(x))=0$ for all $x\in\Fbn$ where
   $\Tr(\beta F(x))=\Tr(x)\Tr(\beta x)$ and $D_\gamma F(x)=\gamma \Tr(x)$. Therefore one has
    $d^\circ \left(\Tr(\beta F) D_\gamma F\right)=2$ even if $d^\circ (\Tr(\beta F))=2$ and $d^\circ (D_\gamma F)=1$.
   For another example, let $F(x)=x^5=x^{2^2+1} \in \F_{2^4} [x]$ be a Gold function on $\F_{2^4}$. Note that
   $\Delta_F=4>2=2^{n-3}$. Then one has $\Tr(F(x))=x^5+x^{10}+x^{5}+x^{10}=0$ for all $x$, which implies that
  $d^\circ \left(\Tr(F(x)) D_\gamma F(x)\right)=0$ for any $\gamma$. For another extreme case, let $F(x)=x^{2^i+1}+x^{2^j+1}+x^{2^i+2^j}$
  (with $i\neq j$) be a quadratic $(n,n)$-function on $\Fbn$. It is clear that $\Delta_F=2^n$ since
   $D_1F(x)=x^{2^i}+x+1+x^{2^j}+x+1+x^{2^i}+x^{2^j}+1=1$. Therefore one has
   $d^\circ \left(\Tr(\beta F) D_1 F\right)<3$ for any $\beta$.
\end{remark}

\begin{lemma}\label{quartic}
    Let $p$ be an odd prime and let $F(x)\in \Fpn[x]$ be a quadratic $(n,n)$-function.
    Let $F_4(x)=\Tr(\beta F(x))^2$. Then in a similar manner as in the proof of Lemma \ref{lemF3},
    for any $\gamma, \gamma_1, \gamma_2, \gamma_3 \in \Fpn$,  one has
    \begin{align*}
 \textstyle\frac{1}{2}D_{\gamma_3}D_{\gamma_2}&D_{\gamma_1}D_{\gamma} F_4(x) \\
   &= \Tr(\beta L_{\gamma_2}(\gamma_3))\Tr(\beta L_\gamma(\gamma_1)) +
 \Tr(\beta L_{\gamma_1}(\gamma_3))\Tr(\beta L_\gamma(\gamma_2))
 +\Tr(\beta L_{\gamma_1}(\gamma_2))\Tr(\beta L_\gamma(\gamma_3)).
    \end{align*}
\end{lemma}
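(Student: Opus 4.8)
The plan is to compute the fourth-order derivative $D_{\gamma_3}D_{\gamma_2}D_{\gamma_1}D_\gamma F_4(x)$ of $F_4(x) = \Tr(\beta F(x))^2$ directly, mimicking the iterated-derivative bookkeeping already carried out for $F_3$ in Lemma \ref{lemF3}. The key structural fact I will lean on is that $g(x) := \Tr(\beta F(x))$ is itself a quadratic $(n,1)$-function, so that $F_4 = g^2$ has algebraic degree at most $4$ and its fourth derivative must be a constant (independent of $x$). This is why we expect a clean closed form. Since $F$ is quadratic, $L_\gamma(x) = F(x+\gamma)-F(x)-(F(\gamma)-F(0))$ is bilinear and symmetric, and $D_\gamma g(x) = \Tr(\beta L_\gamma(x)) + \Tr(\beta(F(\gamma)-F(0)))$ splits into a linear-in-$x$ piece plus a constant.

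The cleanest route is to establish first the one-step derivative of a product $g(x)h(x)$, where here both factors will be $g$. Writing $D_{\gamma_1}(gh)(x) = g(x+\gamma_1)h(x+\gamma_1) - g(x)h(x)$ and expanding via $g(x+\gamma_1) = g(x) + D_{\gamma_1}g(x)$, I obtain the Leibniz-type identity
\begin{align*}
 D_{\gamma_1}(g\,h)(x) = D_{\gamma_1}g(x)\cdot h(x) + g(x)\cdot D_{\gamma_1}h(x) + D_{\gamma_1}g(x)\cdot D_{\gamma_1}h(x).
\end{align*}
Specializing to $h=g$ gives $D_{\gamma_1}F_4(x) = 2g(x)D_{\gamma_1}g(x) + (D_{\gamma_1}g(x))^2$. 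Now I iterate. Because $D_{\gamma_1}g(x) = \Tr(\beta L_{\gamma_1}(x)) + c_{\gamma_1}$ with $c_{\gamma_1}$ constant, the term $(D_{\gamma_1}g)^2$ is already quadratic in $x$, and the surviving product $2g(x)D_{\gamma_1}g(x)$ is the only cubic-degree contributor. Applying $D_{\gamma_2}$, then $D_{\gamma_3}$, the constant terms $c_{\gamma_i}$ get killed by each new derivative exactly as in Lemma \ref{lemF3}, and the bilinearity of $L$ collapses each derivative of $\Tr(\beta L_{\gamma_i}(x))$ in the $x$-variable into a symmetric constant $\Tr(\beta L_{\gamma_i}(\gamma_j)) = \Tr(\beta L_{\gamma_j}(\gamma_i))$. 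Tracking which two of the four derivation directions land on each of the two $g$-factors, the three surviving monomials correspond exactly to the three ways of pairing $\gamma$ with one of $\gamma_1,\gamma_2,\gamma_3$ while the remaining two indices pair with each other, which yields the stated symmetric sum; the factor $\tfrac12$ accounts for the $2g\,D g$ coefficient.

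The main obstacle is purely organizational rather than conceptual: keeping careful track of the cross terms through four successive derivatives so that (i) all $x$-dependence genuinely cancels, confirming the fourth derivative is constant, and (ii) every term not of the three stated symmetric types cancels. The most error-prone point is the bookkeeping of the constant pieces $c_{\gamma_i}$ and of the mixed products such as $D_{\gamma_2}g(x)\cdot D_{\gamma_1}g(x)$ that arise from the Leibniz expansion; I will rely on the fact that any term of degree $\leq 2$ in $x$ is annihilated by three further distinct derivatives and any constant is annihilated by one, so only the fully-contracted symmetric products survive. Because $p$ is odd, division by $2$ is legitimate, which is why the factor $\tfrac12$ appears and why this quartic analysis has no binary analogue — matching the degree dichotomy flagged after Proposition \ref{key-prop2}.
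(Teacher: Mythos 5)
Your argument is correct, and your opening move --- the discrete Leibniz identity giving $D_\gamma F_4 = 2g\,D_\gamma g + (D_\gamma g)^2$ with $g=\Tr(\beta F)$, followed by discarding $(D_\gamma g)^2$ because its degree is at most $2$ --- is exactly how the paper begins. Where you diverge is in handling the remaining three derivatives: the paper does no further bookkeeping at all. It observes that $g(x)\,D_\gamma g(x)=\Tr(\beta F(x))\Tr(\beta D_\gamma F(x))$ is obtained from $F_3(x)=\Tr(\beta F(x))D_\gamma F(x)$ of Lemma \ref{lemF3} by multiplying by $\beta$ and taking the trace, and since $\Tr(\beta\,\cdot)$ commutes with $D_{\gamma_1},D_{\gamma_2},D_{\gamma_3}$, the stated formula follows by simply applying $\Tr(\beta\,\cdot)$ to the conclusion of Lemma \ref{lemF3}. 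Your alternative --- carrying the Leibniz expansion through all four derivatives and keeping only the terms in which each copy of $g$ absorbs exactly two directions, each unordered pairing occurring with multiplicity $2$ --- is also valid: the cross terms such as $(D_a g)(D_a h)$ and $(D_bD_a g)(D_b h)$ indeed have too small a degree to survive, as you note, and your count correctly produces the sum over the three perfect matchings of $\{\gamma,\gamma_1,\gamma_2,\gamma_3\}$, which is a nice structural insight the paper's shortcut hides. The cost is that you must actually track those cross terms, which the trace trick avoids entirely. One small slip to fix: in the middle of your write-up you apply only $D_{\gamma_2}$ and $D_{\gamma_3}$ after the first derivative, i.e.\ three derivatives in total, even though your final pairing argument correctly uses four directions; take the innermost direction to be $\gamma$ and then apply $D_{\gamma_1}$, $D_{\gamma_2}$, $D_{\gamma_3}$.
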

\begin{proof}
One has
 \begin{align*}
 D_\gamma F_4(x)&=\Tr(\beta F(x+\gamma))^2-\Tr(\beta F(x))^2   \\
  &=\{\Tr(\beta F(x+\gamma))-\Tr(\beta  F(x))+2\Tr(\beta F(x))\}\{\Tr(\beta F(x+\gamma))-\Tr(\beta F(x))\} \\
  &=\{\Tr(\beta D_\gamma F(x))+ 2\Tr(\beta F(x))\}\Tr(\beta D_\gamma F(x)).
 \end{align*}
Since $d^\circ \{ \Tr(\beta D_\gamma F)^2\}\leq 2$ and since $p\neq 2$, one has
\begin{align}
 \textstyle\frac{1}{2}D_{\gamma_3}D_{\gamma_2}D_{\gamma_1}D_{\gamma} F_4(x)
  = D_{\gamma_3}D_{\gamma_2}D_{\gamma_1} \{\Tr(\beta F(x))\Tr(\beta D_\gamma F(x))\}. \label{qt1}
\end{align}
From Lemma \ref{lemF3},
\begin{align*}
D_{\gamma_3} & D_{\gamma_2}  D_{\gamma_1} \{ \Tr(\beta F(x))D_\gamma F(x) \} \\
     &=  \Tr(\beta L_{\gamma_2}(\gamma_3))L_\gamma(\gamma_1)
+\Tr(\beta L_{\gamma_1}(\gamma_3))L_\gamma(\gamma_2) +\Tr(\beta
L_{\gamma_1}(\gamma_2))L_\gamma(\gamma_3).
\end{align*}
Multiply $\beta$ to both sides of the above equation and take the trace, then since the trace map commutes with
differentials,
\begin{align}
D_{\gamma_3} & D_{\gamma_2}  D_{\gamma_1} \{ \Tr(\beta F(x))\Tr(\beta D_\gamma F(x)) \} \label{qt2} \\
 &=  \Tr(\beta L_{\gamma_2}(\gamma_3))\Tr(\beta L_\gamma(\gamma_1))
+\Tr(\beta L_{\gamma_1}(\gamma_3))\Tr(\beta L_\gamma(\gamma_2))
 +\Tr(\beta L_{\gamma_1}(\gamma_2))\Tr(\beta L_\gamma(\gamma_3)). \notag
\end{align}
Combining Equations \eqref{qt1} and \eqref{qt2}, we get the desired result.
\end{proof}

\begin{prop}\label{quartic-degree}
Let $p$ be an odd prime and let $F$ be a quadratic $(n,n)$-function with $n\geq 4$.
    Suppose that
    $$|\cW_F|\overset{\text{\rm def}}{=}
    \max_{b\in \Fpn^\times}|\cW_{\Tr(b F)}| \leq p^{n-1}.$$
  Let $\eps=\Tr(\alpha x +\beta F(x))$ with $\alpha\in \Fpn$ and $\beta\in \Fpn^\times$.
  Then, one has $d^\circ (\eps^2)=4$.
\end{prop}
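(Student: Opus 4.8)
The plan is to isolate the top-degree behaviour of $\eps^2$ inside the single summand $F_4(x)=\Tr(\beta F(x))^2$ and then to produce one nonvanishing fourth-order derivative of $F_4$ via Lemma \ref{quartic}. First I would write $\eps=\Tr(\alpha x)+\Tr(\beta F(x))$, where $d^\circ(\Tr(\alpha x))=1$ and $d^\circ(\Tr(\beta F))\le 2$, so that $d^\circ(\eps^2)\le 4$. Expanding
\[
\eps^2=\Tr(\alpha x)^2+2\,\Tr(\alpha x)\Tr(\beta F(x))+F_4(x),
\]
the first two summands have algebraic degree $\le 2$ and $\le 3$ respectively, so the degree-$4$ part of $\eps^2$ coincides with that of $F_4$. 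Hence it suffices to prove $d^\circ(F_4)=4$; since $F_4=\Tr(\beta F)^2$ already gives $d^\circ(F_4)\le 4$, and since a function of algebraic degree $<4$ has vanishing fourth-order derivatives, this amounts to exhibiting directions for which the fourth-order derivative of $F_4$ is a nonzero constant.

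Next I would encode the right-hand side of Lemma \ref{quartic} through the symmetric $\Fp$-bilinear form
\[
B(u,v)\overset{\text{def}}{=}\Tr(\beta L_u(v)),
\]
whose symmetry comes from $L_u(v)=L_v(u)$ and whose bilinearity comes from Equation \eqref{L-linear}. In this notation Lemma \ref{quartic} reads
\[
\tfrac12 D_{\gamma_3}D_{\gamma_2}D_{\gamma_1}D_\gamma F_4
 =B(\gamma_2,\gamma_3)B(\gamma,\gamma_1)+B(\gamma_1,\gamma_3)B(\gamma,\gamma_2)+B(\gamma_1,\gamma_2)B(\gamma,\gamma_3),
\]
a quantity independent of $x$. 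The radical of $B$ is precisely $\LK=\LK_{\Tr(\beta F)}$, so $\operatorname{rank} B=n-\dim\LK$.

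The decisive step is to convert the linearity hypothesis into a rank bound. By Proposition \ref{pbentwalsh} one has $|\cW_{\Tr(\beta F)}|^2=p^{\,n+\dim\LK}$, and since $|\cW_{\Tr(\beta F)}|\le|\cW_F|\le p^{n-1}$ this forces $\dim\LK\le n-2$, i.e.\ $\operatorname{rank} B\ge 2$. Because $p$ is odd, $B$ admits a basis $e_1,\dots,e_n$ of $\Fpn$ over $\Fp$ that is orthogonal for $B$, with at least two nonzero diagonal values; reordering, say $B(e_1,e_1),B(e_2,e_2)\ne 0$ while $B(e_1,e_2)=0$. Substituting $\gamma=\gamma_1=e_1$ and $\gamma_2=\gamma_3=e_2$ collapses the displayed expression to $B(e_2,e_2)B(e_1,e_1)\ne 0$, so $d^\circ(F_4)=4$ and therefore $d^\circ(\eps^2)=4$.

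I expect the only genuine subtlety to be the choice of the four directions. The most symmetric guess $\gamma=\gamma_1=\gamma_2=\gamma_3=e_1$ yields $3\,B(e_1,e_1)^2$, which vanishes exactly when $p=3$; this is why one must spend a second $B$-orthogonal vector $e_2$ and split the directions as $(e_1,e_1,e_2,e_2)$, so that the value $B(e_1,e_1)B(e_2,e_2)$ stays nonzero uniformly over all odd $p$. The passage from $|\cW_F|\le p^{n-1}$ to $\operatorname{rank} B\ge 2$ is the other load-bearing point, but it is immediate once $B$ is recognized as the bilinear form whose radical is the linear kernel of $\Tr(\beta F)$.
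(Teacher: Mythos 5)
Your proof is correct, and while it shares the paper's overall skeleton (reduce $d^\circ(\eps^2)=4$ to $d^\circ(F_4)=4$ for $F_4=\Tr(\beta F)^2$, invoke Lemma \ref{quartic}, specialize to two directions each taken twice, and extract $\dim\LK_{\Tr(\beta F)}\le n-2$ from $|\cW_F|\le p^{n-1}$ via Proposition \ref{pbentwalsh}), the final nonvanishing argument is genuinely different. The paper substitutes $\gamma_1=\gamma$, $\gamma_3=\gamma_2$ to get $\tau=\Tr(\beta L_{\gamma_2}(\gamma_2))\Tr(\beta L_\gamma(\gamma))+2\Tr(\beta L_\gamma(\gamma_2))^2$ and then runs a two-case counting argument: either some $\gamma\notin\LK$ has $\Tr(\beta L_\gamma(\gamma))=0$, in which case balancedness of $x\mapsto\Tr(\beta L_\gamma(x))$ supplies $\gamma_2$ with $\Tr(\beta L_\gamma(\gamma_2))=1$ and $\tau=2$; or else every $\gamma\notin\LK$ is anisotropic and one counts (using $|\LK|\le p^{n-2}$ against the $p^{n-1}$ zeros of the balanced form) a $\gamma_2\notin\LK$ with $\Tr(\beta L_\gamma(\gamma_2))=0$, making $\tau$ a product of two nonzero scalars. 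You instead observe that $B(u,v)=\Tr(\beta L_u(v))$ is a symmetric $\Fp$-bilinear form with radical $\LK$, hence of rank $n-\dim\LK\ge 2$, and use diagonalizability of symmetric forms in odd characteristic to produce $e_1,e_2$ with $B(e_1,e_1),B(e_2,e_2)\ne 0$ and $B(e_1,e_2)=0$; the substitution $(\gamma,\gamma_1,\gamma_2,\gamma_3)=(e_1,e_1,e_2,e_2)$ then gives the single clean value $B(e_1,e_1)B(e_2,e_2)\ne 0$. Your route buys a uniform, case-free argument and makes transparent exactly what is needed (rank at least $2$), at the cost of importing the classification of symmetric bilinear forms; the paper's route is more elementary and self-contained, using only the balancedness facts already established for strongly plateaued functions. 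Your side remark that the fully symmetric choice $\gamma=\gamma_1=\gamma_2=\gamma_3$ yields $3B(e_1,e_1)^2$, which dies precisely at $p=3$, correctly identifies why two independent anisotropic directions are unavoidable.
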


\noindent
\begin{proof}
In a similar manner as in Proposition \ref{cubic-degree}, it is enough to show $d^\circ(F_4)=4$ where $F_4(x)=\Tr(\beta
F(x))^2$. Let $\tau=\frac{1}{2}D_{\gamma_3}D_{\gamma_2}D_{\gamma_1}D_{\gamma} F_4(x)$. If $d^\circ (F_4)\leq 3$, then
one has $\tau=0$ for all $\gamma, \gamma_1, \gamma_2$ and $\gamma_3$ in $\Fpn$. However, we will show that $\tau\neq 0$
for some suitably chosen $\gamma, \gamma_1, \gamma_2, \gamma_3$. Letting $\gamma_1=\gamma$ and $\gamma_3=\gamma_2$,
\begin{align}
 \tau &=\Tr(\beta L_{\gamma_2}(\gamma_3))\Tr(\beta L_\gamma(\gamma_1)) +
 \Tr(\beta L_{\gamma_1}(\gamma_3))\Tr(\beta L_\gamma(\gamma_2))
  +\Tr(\beta L_{\gamma_1}(\gamma_2))\Tr(\beta L_\gamma(\gamma_3)) \notag \\
 &=\Tr(\beta L_{\gamma_2}(\gamma_2))\Tr(\beta L_\gamma(\gamma))+2\Tr(\beta L_{\gamma}(\gamma_2))^2. \label{quartic-eq}
\end{align}
 Letting $\dim \LK_{\Tr(\beta F)}=k_\beta$
 ($\dim \LK_{\Tr(\beta F)}$ is the dimension of the linear kernel of $\Tr(\beta F)$, one has
 $p^{n+{k_\beta}}\leq |\cW_F|^2\leq p^{2n-2}$, where the first $\leq$ comes from Proposition \ref{pbentwalsh}
  and the second $\leq$ comes from the assumption. Therefore $k_\beta\leq n-2$, or equivalently,
 $|\LK_{\Tr(\beta F)}|=p^{k_\beta}\leq p^{n-2}$.
Now, choosing any $\gamma \in \Fpn \setminus \LK_{\Tr(\beta F)}$,  $\Tr(\beta L_\gamma(x))$
 is a balanced (linear) function on $\Fpn$, i.e., $\Tr(\beta L_\gamma(x))$ takes $j\in \Fp$ exactly $p^{n-1}$ times
 for every $0\leq j<p$. Also, for all $\gamma'\in \LK_{\Tr(\beta F)}$, one has
 $\Tr(\beta L_\gamma(\gamma'))=\Tr(\beta L_{\gamma'}(\gamma))=0$.
 Since all elements  $\gamma'$ of $\LK_{\Tr(\beta F)}$ satisfies $\Tr(\beta L_\gamma(\gamma'))=0$
  with $|\LK_{\Tr(\beta F)}|\leq p^{n-2}$, and since
 $\Tr(\beta L_\gamma(x))$ is a balanced function $(\because \gamma\notin \LK_{\Tr(\beta F)})$,
 $\Tr(\beta L_\gamma(x))=0$ is satisfied for {\it at least} $p^{n-2}$ values of $x\in \Fpn \setminus \LK_{\Tr(\beta
 F)}$,
 and $\Tr(\beta L_\gamma(x))=j$ is satisfied for {\it exactly} $p^{n-1}$ values of $x\in \Fpn \setminus \LK_{\Tr(\beta F)}$
 for every $1\leq j\leq p-1$. Now, from Equation \eqref{quartic-eq}, we consider two cases:
  If $\Tr(\beta L_\gamma(\gamma))=0$ for some $\gamma \in \Fpn \setminus \LK_{\Tr(\beta F)}$, then choose
 $\gamma_2\in \Fpn \setminus \LK_{\Tr(\beta F)}$ satisfying $\Tr(\beta L_\gamma(\gamma_2))=1$
  (there are exactly $p^{n-1}$ such $\gamma_2$ in $\Fpn \setminus \LK_{\Tr(\beta F)}$).
   Then one has $\tau=2\Tr(\beta L_{\gamma}(\gamma_2))^2=2\neq 0$.
   If $\Tr(\beta L_\gamma(\gamma))\neq 0$ for all $\gamma \in \Fpn \setminus \LK_{\Tr(\beta F)}$,
   then choose $\gamma_2\in \Fpn \setminus \LK_{\Tr(\beta F)}$ satisfying $\Tr(\beta L_\gamma(\gamma_2))=0$
  (there are at least $p^{n-2}$ such $\gamma_2$ in $\Fpn \setminus \LK_{\Tr(\beta F)}$).
   Then one has $\tau=\Tr(\beta L_{\gamma_2}(\gamma_2))\Tr(\beta L_\gamma(\gamma))\neq 0$.
\end{proof}

\medskip
\noindent It also turns out that our constructed function $F\circ H$ is EA-inequivalent to any power function.
 To prove that, we recall Proposition $4$ in \cite{BCP06}.
\begin{prop}\label{buda-prop}
 Let $F$ be an $(n,n)$-function on $\Fpn$. If there exists $c\in \Fpn^\times$ such that
  $d^\circ(\Tr(cF)) \not\in \{0,1,d^\circ(F)\}$, then $F$ is EA-inequivalent to any power function.
\end{prop}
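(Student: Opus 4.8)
The plan is to argue by contrapositive: I will show that if $F \eqea G$ for some power function $G(x)=x^d$, then \emph{every} nonzero component $\Tr(cF)$ has algebraic degree in $\{0,1,d^\circ(F)\}$, contradicting the hypothesis. Since a component always satisfies $d^\circ(\Tr(cF))\le d^\circ(F)$, the existence of a component of degree $\ge 2$ forces $d^\circ(F)\ge 2$, so I may assume this. Because EA-equivalence preserves algebraic degree, we also have $d^\circ(F)=d^\circ(G)=w_p(d)$.

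The crucial ingredient is a degree property of power functions. Writing
\[
\Tr(c\,x^{d})=\sum_{j=0}^{n-1}c^{p^{j}}\,x^{\,d p^{j}\bmod (p^{n}-1)},
\]
I claim every exponent $d p^{j}\bmod(p^{n}-1)$ has the same $p$-weight $w_p(d)$. Indeed, since $p^{n}\equiv 1\pmod{p^{n}-1}$, multiplication by $p$ modulo $p^{n}-1$ cyclically permutes the $p$-ary digits and hence preserves their sum. Thus $\Tr(c\,x^{d})$ is a sum of monomials all of $p$-weight exactly $w_p(d)$; cancellation among them can only delete such monomials, never produce one of smaller weight. Therefore $d^\circ(\Tr(c\,x^{d}))\in\{0,w_p(d)\}=\{0,d^\circ(G)\}$ for every $c$, the value $0$ occurring only when the component is constant.

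Next I transport this through the equivalence. Write $F=A_{1}\circ G\circ A_{2}+A_{3}$ with $A_{1}(z)=L_{1}(z)+b_{1}$, where $L_{1}$ is the invertible linear part of $A_{1}$, and $A_{2},A_{3}$ affine. Then
\[
\Tr(cF(x))=\Tr\!\big(c\,L_{1}(G(A_{2}(x)))\big)+\Tr(c A_{3}(x))+\Tr(c b_{1}).
\]
The map $z\mapsto\Tr(c\,L_{1}(z))$ is an $\Fp$-linear form, so it equals $\Tr(c'z)$ for a unique $c'$, and $c\mapsto c'$ is a bijection since $L_{1}$ is invertible; in particular $c'\neq 0$ whenever $c\neq 0$. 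Hence $\Tr(cF)=\Tr(c'\,G\circ A_{2})+(\text{affine})$. Pre-composition with the affine permutation $A_{2}$ preserves the algebraic degree of a scalar component, so $d^\circ(\Tr(c'\,G\circ A_{2}))=d^\circ(\Tr(c'G))\in\{0,d^\circ(F)\}$ by the previous paragraph. Adding an affine function then yields $d^\circ(\Tr(cF))\in\{0,1,d^\circ(F)\}$: if the power-function part has degree $d^\circ(F)\ge 2$ it dominates the affine term, while if it is constant the sum is affine. As this holds for all $c\neq 0$, we reach the required contradiction.

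The main obstacle is the degree bookkeeping, which must be handled with care at two points: first, checking that reducing $d p^{j}$ modulo $p^{n}-1$ genuinely preserves the $p$-weight, so that no lower-weight monomial can sneak in; and second, justifying that substituting $A_{2}$ into the input neither raises nor lowers the degree of a component. The latter follows because an $\Fp$-affine bijection turns each weight-$w$ monomial into a product of $w$ affine factors (degree $\le w$), and applying the same estimate to $A_{2}^{-1}$ gives the reverse inequality, so the degree is exactly preserved.
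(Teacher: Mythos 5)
Your proof is correct. The paper itself gives no argument here beyond citing Proposition 4 of \cite{BCP06} and asserting that the odd-characteristic case is identical; your write-up is a faithful and complete reconstruction of exactly that argument (all exponents $dp^j \bmod (p^n-1)$ share the $p$-weight $w_p(d)$, so every nonzero component of a power function has degree in $\{0, w_p(d)\}$, and this property is transported through $A_1$ via the trace-adjoint bijection $c\mapsto c'$ and through $A_2$, $A_3$ by the usual degree preservation), so it takes essentially the same approach as the paper's source.
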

\begin{proof}
 The case for $p=2$ is proven in \cite{BCP06}. The case for arbitrary characteristic is exactly the same
  as in \cite{BCP06}.
\end{proof}

\begin{theorem}{\bf(Main Theorem)}\label{mainthm}
 Let $F_0(x)$ be a quadratic $(n,n)$-function on $\Fpn$ with $n\geq 4$. Then,  one concludes the
 followings.
 \begin{enumerate}
 \item[1.] When $p=2$: Suppose that $|\cW_{F_0}|<2^n$ (i.e., $\NL(F_0)\neq 0$) and $\Delta_{F_0}\leq 2^{n-3}$.
       Then for any $\gamma\in\Fbn^\times$, one can choose $\beta\in\Fbn^\times$ such that $\Tr(\beta D_\gamma F(x))=0$
           for all $x\in \Fbn$, where $F(x)=F_0(x)+cx$ and $c\in \Fbn$ is determined by
           the condition $\Tr(c\beta\gamma)=\Tr(\beta (F_0(\gamma)+F_0(0)))$.
            Choosing $\alpha\in \Fbn$ satisfying $\Tr(\alpha\gamma)=0$, the function
            $$
              F\circ H(x)=F(x)+\eps D_\gamma F(x) \quad (\,\,\text{with }\eps=\Tr(\alpha x +\beta F(x)) \,\,)
            $$
         has an algebraic degree $3$ and satisfies $F\circ H \eqccz F_0$ and $F\circ H \noteqea F_0$. Moreover
         $F\circ H$ is EA-inequivalent to any power function.

 \item[2.] When $p>2$: Suppose that $|\cW_{F_0}|\leq p^{n-1}$ and $1<\Delta_{F_0}\leq p^{n-3}$.
         Then there exist $\gamma, \beta \in\Fpn^\times$ such that $\Tr(\beta D_\gamma F(x))=0$
           for all $x\in \Fpn$, where $F(x)=F_0(x)+cx$ and $c\in \Fpn$ is determined by
           the condition $\Tr(c\beta\gamma)=-\Tr(\beta (F_0(\gamma)-F_0(0)))$.
            Choosing $\alpha\in \Fpn$ satisfying $\Tr(\alpha\gamma)=-2$, the function
            $$
              F\circ H(x)=F(x)+\eps D_\gamma F(x) +F_{\text{\tiny DO}}(\gamma)(\eps^2-\eps)
                \quad (\,\,\text{with }\eps=\Tr(\alpha x +\beta F(x)) \,\,)
            $$
         satisfies $F\circ H \eqccz F_0$ and $F\circ H \noteqea F_0$, where
         $d^\circ (F\circ H)=3$ if $F_{\text{\tiny DO}}(\gamma)=0$, and
         $d^\circ (F\circ H)=4$ if $F_{\text{\tiny DO}}(\gamma)\neq 0$.
         Furthermore, when $F_{\text{\tiny DO}}(\gamma)=0$, then
         $F\circ H$ is EA-inequivalent to any power function. Also, when $F_{\text{\tiny DO}}(\gamma)\neq 0$,
         by assuming that $F_0$ is a DO-polynomial (one can always assume this up to EA-equivalence),
        $F\circ H$ is EA-inequivalent to any power function if $\Tr(c\beta\gamma)=0$.
 \end{enumerate}

\end{theorem}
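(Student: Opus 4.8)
The statement is essentially an assembly of the machinery built in Sections \ref{sec-ccz}--\ref{sec-degree}, so my plan is to chain the relevant results together and then isolate the one genuinely new computation, the EA-inequivalence to power functions. First I would secure the linear structure: for $p=2$ Proposition \ref{LS-quad}(1) furnishes, for any prescribed $\gamma\in\Fbn^\times$, a $\beta\in\Fbn^\times$ with $\Tr(\beta L_\gamma(x))=0$ for all $x$, while for $p>2$ the hypothesis $\Delta_{F_0}>1$ lets Proposition \ref{LS-quad}(2) produce such a pair $(\gamma,\beta)$. This places us exactly in the hypotheses of Proposition \ref{key-prop}. The conditions defining $\alpha$ and $c$ are each a single inhomogeneous equation for a trace functional with nonzero defining element ($\gamma\neq0$, resp. $\beta\gamma\neq0$), hence solvable; applying Proposition \ref{key-prop} makes $H$ an involution and, crucially, forces $\gamma$ to be a $0$-linear structure of $\Tr(\beta F)$ with $F=F_0+cx$, i.e. $\Tr(\beta D_\gamma F(x))=0$ for all $x$. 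Theorem \ref{thm1} then gives $F_0\eqea F\eqccz F\circ H$, hence $F\circ H\eqccz F_0$, and Proposition \ref{key-prop2} supplies the explicit polynomial form in each characteristic. (For $p=2$, note $\Tr(\alpha\gamma)=-2$ reads as $\Tr(\alpha\gamma)=0$ and the $c$-condition loses its sign.)

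Next I would pin down the algebraic degree. Since $F=F_0+cx$ differs from $F_0$ by a linear map, both the linearity and the differential uniformity are unchanged ($\Delta_F(a,b)=\Delta_{F_0}(a,b-ca)$ and $\cW_F(a,b)=\cW_{F_0}(a-bc,b)$), so the hypotheses on $F_0$ transfer verbatim to $F$. Proposition \ref{cubic-degree} then yields $d^\circ(\eps D_\gamma F)=3$, and since $d^\circ(F)=2$ the cubic monomials cannot cancel, so $d^\circ(F\circ H)=3$ whenever the $\eps^2$-term is absent (all of $p=2$, and $p>2$ with $F_{\text{\tiny DO}}(\gamma)=0$). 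When $p>2$ and $F_{\text{\tiny DO}}(\gamma)\neq0$, Proposition \ref{quartic-degree} gives $d^\circ(\eps^2)=4$, and because nothing else in the expression reaches degree $4$ the quartic term survives, so $d^\circ(F\circ H)=4$. In every case $d^\circ(F\circ H)\in\{3,4\}$, and since algebraic degree is EA-invariant whereas $d^\circ(F_0)=2$, we conclude $F\circ H\noteqea F_0$.

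The step that needs an actual idea is EA-inequivalence to every power function, for which I would invoke Proposition \ref{buda-prop}: it suffices to exhibit $c'\in\Fpn^\times$ with $d^\circ(\Tr(c'F\circ H))=2\notin\{0,1,d^\circ(F\circ H)\}$. The natural candidate is $c'=\beta$. Using that $\eps\in\Fp$ (so $\eps^2-\eps\in\Fp$ is a scalar) together with the $0$-linear-structure identity $\Tr(\beta D_\gamma F)\equiv0$, the component collapses:
\begin{align*}
 \Tr(\beta\, F\circ H) &=\Tr(\beta F)+\eps\,\Tr(\beta D_\gamma F)+\Tr(\beta F_{\text{\tiny DO}}(\gamma))(\eps^2-\eps)\\
 &=\Tr(\beta F)+\Tr(\beta F_{\text{\tiny DO}}(\gamma))(\eps^2-\eps).
\end{align*}
In the degree-$3$ situations the coefficient $\Tr(\beta F_{\text{\tiny DO}}(\gamma))$ is absent or zero; in the degree-$4$ situation the extra hypothesis $\Tr(c\beta\gamma)=0$, which for a DO-polynomial $F_0$ is precisely $\Tr(\beta F_{\text{\tiny DO}}(\gamma))=0$ (since then $\Tr(c\beta\gamma)=-\Tr(\beta F_{\text{\tiny DO}}(\gamma))$), kills it. Thus in all cases $\Tr(\beta F\circ H)=\Tr(\beta F)$.

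Finally I would argue that this surviving component has degree exactly $2$. Here the linearity hypothesis is essential: if $\Tr(\beta F_0)$ were affine, then every element of $\Fpn$ would be a linear structure, forcing $\dim\LK_{\Tr(\beta F_0)}=n$ and hence $|\cW_{\Tr(\beta F_0)}|=p^n$ by Proposition \ref{pbentwalsh}, contradicting $|\cW_{F_0}|<p^n$. Therefore $\Tr(\beta F_0)$, and with it $\Tr(\beta F)=\Tr(\beta F_0)+\Tr(c\beta x)$, is genuinely quadratic, so $d^\circ(\Tr(\beta F\circ H))=2$ and Proposition \ref{buda-prop} applies. I expect this last step — the collapse $\Tr(\beta F\circ H)=\Tr(\beta F)$ and the verification that it is not affine — to be the crux; everything else is bookkeeping on top of the earlier propositions, with the only real care being the correct matching of the $\alpha$- and $c$-conditions in each characteristic.
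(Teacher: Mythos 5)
Your proposal is correct and follows essentially the same route as the paper: chain Propositions \ref{LS-quad}, \ref{key-prop}, \ref{key-prop2} and Theorem \ref{thm1} for the construction and CCZ-equivalence, use Propositions \ref{cubic-degree} and \ref{quartic-degree} for the degree (hence EA-inequivalence to $F_0$), and apply Proposition \ref{buda-prop} to the component $\Tr(\beta\, F\circ H)=\Tr(\beta F)$ for EA-inequivalence to power functions. Your two small additions — verifying explicitly that $|\cW_F|$ and $\Delta_F$ transfer from $F_0$ to $F=F_0+cx$, and deriving $d^\circ(\Tr(\beta F))=2$ from the linearity bound via $\dim\LK_{\Tr(\beta F_0)}<n$ — are details the paper leaves implicit, not a different argument.
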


\noindent
\begin{proof}
 Existence of such $\gamma, \beta \in \Fpn^\times$ is guaranteed by Proposition \ref{LS-quad}. Existence of
 $\alpha, c \in \Fpn$ is guaranteed by Proposition \ref{key-prop}. CCZ-equivalence of constructed
  $F\circ H$ is provided by Theorem \ref{thm1}. Explicit expressions of $F\circ H$ for each characteristic are provided
  by Proposition \ref{key-prop2}. EA-inequivalence (or $d^\circ (F\circ H) > 2$) is guaranteed by Propositions
  \ref{cubic-degree} and \ref{quartic-degree}. Therefore
   one only needs to prove the last part of each assertion.

\medskip

 \noindent
 ({\it 1.}) Since $\NL(F)\neq 0$ and $\beta\neq 0$, one has $d^\circ(\Tr(\beta F))=2$. Also from
     $$
      \Tr(\beta [F\circ H(x)])=\Tr(\beta F(x))+\eps \Tr(\beta D_\gamma F(x))=\Tr(\beta F(x)),
     $$
     one gets $d^\circ(\Tr(\beta [F\circ H]) )=d^\circ( \Tr(\beta F))=2$. Therefore using Proposition \ref{buda-prop},
     one concludes that $F\circ H \noteqea x^s$ for any $s\in \Z$.

 \smallskip
  \noindent
  ({\it 2.}) The proof of the case $F_{\text{\tiny DO}}(\gamma)=0$ is exactly same to the assertion {\it 1.}
   If $F_{\text{\tiny DO}}(\gamma)\neq 0$, then since $F_0$ is a DO-polynomial,
    one has $F_0(x)=F_{\text{\tiny DO}}(x)$. Therefore
   using
    $$0=\Tr(c\beta\gamma)=-\Tr(\beta (F_0(\gamma)-F_0(0)))
     =-\Tr(\beta F_0(\gamma))=-\Tr(\beta F_{\text{\tiny DO}}(\gamma)),
     $$
     one gets
   \begin{align*}
   \Tr(\beta[F\circ H(x)])
    =\Tr(\beta F(x))+\eps \Tr(\beta D_\gamma F(x)) +(\eps^2-\eps) \Tr(\beta F_{\text{\tiny DO}}(\gamma))
     =\Tr(\beta F(x)).
   \end{align*}
   And the rest is similar to the assertion {\it 1.}
\end{proof}

\begin{corollary}
 Let $n$ be even and let $F(x)$ be a quadratic APN function from $\Fbn$ to itself. Then one has
 $\clsEA(F)\subsetneqq \clsCCZ(F)$, i.e., there exists $F' \in \clsCCZ(F)\setminus \clsEA(F)$.
\end{corollary}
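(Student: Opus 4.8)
The plan is to apply the Main Theorem (Theorem \ref{mainthm}, case $p=2$) directly to a quadratic APN function $F$ on $\Fbn$ with $n$ even. To invoke that theorem, I need to verify its two hypotheses: that $|\cW_{F}| < 2^n$ (equivalently $\NL(F) \neq 0$), and that $\Delta_F \leq 2^{n-3}$. The second condition is immediate for an APN function, since $\Delta_F = 2$ and $n$ even with $n \geq 4$ gives $2^{n-3} \geq 2^{1} = 2$; indeed APN functions exist on $\Fbn$ only when $n \geq 2$, and I should note that the statement implicitly requires $n$ large enough for an APN function to exist and for $n \geq 4$. So the main work is checking the nonlinearity condition and confirming $n \geq 4$.

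First I would observe that a quadratic APN function on $\Fbn$ with $n$ even cannot have a linear (affine) component function. If some nonzero component $\Tr(bF)$ were affine, then $d^\circ(\Tr(bF)) \leq 1$ would force $\cW_{\Tr(bF)}(a) = \pm 2^n$ for a suitable $a$, i.e., $|\cW_F| = 2^n$ and $\NL(F) = 0$. I would rule this out using the strongly plateaued structure: every quadratic function is strongly plateaued (Remark \ref{pbentremark}), so by Proposition \ref{pbentwalsh}, $|\cW_{\Tr(bF)}| = 2^{(n+k_b)/2}$ where $k_b = \dim \LK_{\Tr(bF)}$. For APN functions, each $D_\gamma F$ is $2$-to-$1$, so $L_\gamma$ has a $1$-dimensional kernel for every $\gamma \neq 0$, which tightly controls the possible dimensions $k_b$ of the linear kernels. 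The key point is that $\NL(F) = 0$ would require some component to be affine, which for a quadratic APN function would contradict the known Walsh spectrum.

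The cleanest route is to recall that quadratic APN functions on $\Fbn$ have their nonlinearity bounded away from zero. When $n$ is even, a quadratic APN function is in fact \emph{almost bent} is false in general (almost bentness requires $n$ odd), but the relevant fact I need is simply that $\NL(F) > 0$, which follows because an APN function can have no affine (nonconstant) component: an affine component $\Tr(bF)$ would have all its derivatives $D_\gamma \Tr(bF) = \Tr(b\,D_\gamma F)$ constant, making $\LK_{\Tr(bF)} = \Fbn$, i.e.\ $k_b = n$; but for an APN function every $L_\gamma$ ($\gamma \neq 0$) has image of dimension $n-1$, so $\LK_{\Tr(bF)} = \{\gamma : \Ima L_\gamma \subset \langle b\rangle^\perp\}$ cannot be all of $\Fbn$ for $b \neq 0$ unless every $(n-1)$-dimensional image lies in the same hyperplane, which is impossible. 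Hence $k_b \leq n-1$ for all $b \neq 0$, giving $|\cW_F| = 2^{(n+k_m)/2} \leq 2^{(2n-1)/2} < 2^n$, so $\NL(F) > 0$.

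With both hypotheses verified, Theorem \ref{mainthm}(1) produces an explicit $F' = F \circ H$ of algebraic degree $3$ with $F' \eqccz F$ and $F' \noteqea F$, which exhibits $F' \in \clsCCZ(F) \setminus \clsEA(F)$ and hence $\clsEA(F) \subsetneqq \clsCCZ(F)$. The main obstacle I anticipate is the nonlinearity step: establishing $k_b \leq n-1$ rigorously for all nonzero $b$ requires the APN structural fact that every $L_\gamma$ has exactly an $(n-1)$-dimensional image and arguing that these images cannot all be trapped in a single hyperplane $\langle b\rangle^\perp$. This is where the APN hypothesis does the essential work, and I would make sure the argument is airtight rather than appealing vaguely to ``known spectra.''
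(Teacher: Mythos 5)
Your overall strategy coincides with the paper's: the corollary is a direct application of Theorem \ref{mainthm}(1), and the only hypotheses to check are $n\geq 4$, $\Delta_F=2\leq 2^{n-3}$, and $\NL(F)\neq 0$. The paper's own proof is a single sentence that does not verify the nonlinearity condition at all; it instead cites the non-existence of quadratic APN permutations on even dimensions \cite{SZZ94}, whose only role is to stress that $F\circ H$ is a \emph{non-obvious} witness (i.e.\ not merely a compositional inverse). So the one place where you go beyond the paper is the verification of $\NL(F)>0$, and that is exactly where your argument has a genuine gap.

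You claim that the $(n-1)$-dimensional images $\Ima L_\gamma$ ($\gamma\neq 0$) ``cannot all be trapped in a single hyperplane $\langle b\rangle^\perp$, which is impossible,'' but you give no reason, and the assertion is not a piece of elementary linear algebra. It is false for $n=2$: for $F(x)=x^3$ on $\F_4$ one has $\Ima L_\gamma=\F_2$ for every $\gamma\neq 0$, and indeed $\NL(F)=0$. Moreover, for even $n$ the configuration you must exclude --- one component with $k_b=n$ and the remaining $2^n-2$ components bent --- is consistent with the fourth-moment identity $\sum_{b\neq 0}2^{k_b}=2(2^n-1)$ satisfied by plateaued APN functions, so no counting of Walsh values will dispose of it. The missing argument is the classical one: if $\Tr(bF)$ were affine for some $b\neq 0$, then after subtracting an affine function (which changes neither APN-ness nor quadraticity) $F$ would take its values in the hyperplane $\langle b\rangle^\perp$ and hence be an $(n,n-1)$-function with $\Delta_F=2=2^{n-(n-1)}$, i.e.\ a vectorial bent function; by Nyberg's bound \cite{Nyb92} these exist only when $n-1\leq n/2$, i.e.\ $n\leq 2$. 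Once this is supplied, $k_b\leq n-1$ for all $b\neq 0$, hence $|\cW_F|\leq 2^{(2n-1)/2}<2^n$, and the rest of your proof is exactly the paper's.
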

\begin{proof}
Since there is no quadratic APN permutation for even $n$ (See \cite{SZZ94}.), $F'=F\circ H$ is non-obvious function
which is CCZ-equivalent but EA-inequivalent to $F$.
\end{proof}



\section{Concrete Examples}\label{sec-concrete}

\subsection{Gold Functions for $p=2$}\label{subsec-goldbinary}
For a binary case,  Budaghyan et al. \cite{BCP06} constructed examples of $(n,n)$-function $F'$ such that
$$
 G \noteqea F' \quad \text{ but }  \quad G\eqccz F'
$$
where $G(x)=x^{2^i+1}$ (with $\gcd(i,n)=1$ and $n\geq 4$) is the Gold function. Such $G$ is an AB(almost bent) function
for odd $n$, and is an APN function for even $n$. We summarize, in Table \ref{table-buda}, $F'(x)$ and the
corresponding linear permutation $\caliL$ on $\Fbn\times \Fbn$ (for CCZ-equivalence) used in \cite{BCP06}.

\begin{table}[ht]
\caption{$F'(x)$ and $\caliL$ used in \cite{BCP06}}
\smallskip
\centering
 \renewcommand{\arraystretch}{1.1}
\begin{tabular}{c | c | c | c}  
\hline\hline
$n$ & $F'(x)$ & $\caliL (x,y)$ & $d^\circ (F')$ \\ [0.5ex] 
\hline
odd & $x^{2^i+1}+(x^{2^i}+x)\Tr(x^{2^i+1}+x)$ & $(x+\Tr(x+y),y+\Tr(x+y))$ & $3$  \\
even & $x^{2^i+1}+(x^{2^i}+x+1)\Tr(x^{2^i+1})$ & $(x+\Tr(y),y)$ & $3$ \\
\hline\hline
\end{tabular}
\label{table-buda}  
\end{table}

\begin{table}[ht]
\caption{Our Rediscovery}
\smallskip
\centering
 \renewcommand{\arraystretch}{1.1}
\begin{tabular}{c | c | c | c | c |c  }  
\hline\hline
$n$ & $F\circ H$ & $\alpha$& $\beta$ & $\gamma$ &  $c$ \\ [0.5ex] 
\hline
odd & $x^{2^i+1}+x+(x^{2^i}+x)\Tr(x^{2^i+1}+x)$ & $0$ & $1$ & $1$ & $1$  \\
\hline
even & $x^{2^i+1}+(x^{2^i}+x+1)\Tr(x^{2^i+1})$ & $0$ & $1$ & $1$ & $0$ \\
even & $x^{2^i+1}+x+(x^{2^i}+x)\Tr(x^{2^i+1}+x)$ & $0$ & $1$ & $1$ & $1$ \\
\hline\hline
\end{tabular}
\label{table-buda}  
\end{table}

\medskip
\noindent
 Using our methods in Section \ref{sec-ccz} and \ref{sec-degree}, we can reproduce the above result in a
unified way. Recall that, for a given function $F_0(x)=G(x)=x^{2^i+1}$, we constructed
 $$
F(x)=F_0(x)+cx, \quad
  \caliL(x,y)=(x+\gamma\Tr(\alpha x +\beta y), y), \quad H(x)=x+\gamma\Tr(\alpha x+\beta F(x))
 $$
 satisfying
 \begin{align*}
     \Tr(\alpha\gamma)&=-2=0, \\
     \Tr(c\beta\gamma)&= \Tr(\beta D_\gamma F_0(0))= \Tr(\beta D_\gamma F_0(x)) \quad \text{ for all } x\in \Fbn,
 \end{align*}
 where the second condition guarantees that $\gamma$ is a $0$-linear structure of $\Tr(\beta F(x))$ (See Proposition
 \ref{key-prop}). Under these conditions, we proved that $F_0 \eqccz F\circ H$.
   For simplicity, one may first choose $\alpha=0$.  Also, choosing $\gamma=1=\beta$, one has
   $\Tr(\beta D_\gamma F_0(x))=\Tr((x+1)^{2^i+1}+x^{2^i+1})=\Tr(x^{2^i}+x+1)=\Tr(1)=n$ for all $x\in\Fbn$.
   Therefore it is enough to choose $c$ satisfying
   $\Tr(c)=\Tr(c\beta\gamma)=n=\Tr(1)$. When $n$ is odd, then one may choose $c=1$ so that $\Tr(c)=1$.
   In this case, one has
   $$F(x)=x^{2^i+1}+x, \quad H(x)=x+\gamma\Tr(\alpha x +\beta F(x))=x+\Tr(x^{2^i+1}+x)=x+\eps$$ and
   $$F\circ H(x)=F(x)+\eps D_\gamma F(x)=x^{2^i+1}+x+\Tr(x^{2^i+1}+x)(x^{2^i}+x),$$
   which is EA-equivalent to $F'(x)$ in \cite{BCP06}.
   On the other hand, when $n$ is even, one may choose either $c=0$ or $c=1$ since $\Tr(c)=0=\Tr(1)$
   is satisfied in either case. Therefore, if $c=0$, one gets
  $$F(x)=x^{2^i+1}, \quad H(x)=x+\gamma\Tr(\alpha x +\beta F(x))=x+\Tr(x^{2^i+1})=x+\eps$$ and
   $$F\circ H(x)=F(x)+\eps D_\gamma F(x)=x^{2^i+1}+\Tr(x^{2^i+1})(x^{2^i}+x+1),$$
   which is same to $F'(x)$ in \cite{BCP06}. Also if $c=1$, then one has
    $$F(x)=x^{2^i+1}+x, \quad H(x)=x+\gamma\Tr(\alpha x +\beta F(x))=x+\Tr(x^{2^i+1}+x)=x+\eps$$ and
   $$F\circ H(x)=F(x)+\eps D_\gamma F(x)=x^{2^i+1}+x+\Tr(x^{2^i+1}+x)(x^{2^i}+x),$$
   which is CCZ-equivalent to $x^{2^i+1}$. In other words, one concludes that
   \begin{remark}\label{goldremark}
   Independent of the parity of $n$, one has
   \begin{align*}
        x^{2^i+1} \noteqea  x^{2^i+1}+(x^{2^i}+x) \Tr(x^{2^i+1}+x) \eqccz x^{2^i+1}                                           .
   \end{align*}
   \end{remark}


\subsection{Other APN functions on $\Fbn$}

One may also apply our main theorem to all quadratic APN functions found so far. For example, in page $407$ of the book
of Carlet \cite{Car20}, there is a table which lists  CCZ-inequivalent quadratic functions. We rewrite the list in our
Table \ref{maintable} with our choice of $\alpha,\beta,\gamma$ and $c$.

\begin{table}[h!]
     \scriptsize
    \caption{Known classes of quadratic APN functions CCZ inequivalent to power functions on $\Fbn$,
      and $\ell(x)$ satisfying $\ell(\beta)=0$}
    \label{maintable}
    \smallskip
    \renewcommand{\arraystretch}{1.5}
    \begin{tabular}{cccc}
         \hline
        $F_0(x)$ &  Conditions on $F_0$ &  $\ell(x)$  & $\begin{array}{l} \text{APN} \\ \text{proven} \end{array}$ \\ \hline \hline
$\begin{array}{l}x^{2^{s}+1}+g^{2^{k}-1} x^{2^{i k}+2^{tk+s}}\end{array}$ &$\begin{array}{l}
    n=p k, \operatorname{gcd}(k, p)=\operatorname{gcd}(s, p k)=1, \\
    p \in\{3,4\}, i=s k \bmod p, t=p-i, \\
    n \geq 12, g \text { primitive in } \mathbb{F}_{2^{n}}^\times
\end{array}$
&$\begin{array}{l}
    x^{2^{n-s}} +x+\\
    (\omega x)^{2^{tk}} + (\omega x)^{2^{ik-s}}
\end{array}$
& \cite{BCL08} \\ \hline
$\begin{array}{l}
        a x^{2^{i}(q+1)}+x^{2^{i}+1}+ \\
        x^{q\left(2^{i}+1\right)}+x^{q+1}+\\
        v x^{2^{i} q+1}+v^{q} x^{2^{i}+q}
\end{array}$
&$\begin{array}{l}
    q=2^{m}, n=2 m, \operatorname{gcd}(i, m)=1, \\
    v \in \mathbb{F}_{2^{n}}, a \in \mathbb{F}_{2^{n}} \backslash \mathbb{F}_{q}, \\
    X^{2^{i}+1}+v X^{2^{i}}+v^{q} X+1\\
    \text{has no zero in } \mathbb{F}_{2^n} \text{ s.t. } x^{q+1}=1
\end{array}$
&$\begin{array}{l}
    \left[(a+v^{2^m}+1)x\right] ^{2^{2m-i}}+\\
    \left[(a+v+1)x\right] ^{2^{m-i}} +\\
    v x^{2^m}+v x
\end{array}$
& \cite{BC08} \\ \hline
$\begin{array}{l}
    x^{3}+a^{-1} \Tr\left(a^{3} x^{9}\right)
\end{array}$
&$\begin{array}{l}
    a \neq0
\end{array}$
&$\begin{array}{l}
    x^{2^{n-1}} + x +  \tilde{a}\Tr(\frac{x}{a})\\
\end{array}$
&\cite{BCL09FFTA} \\ \hline
$\begin{array}{l}
    x^{3}+a^{-1} \Tr_{3}^{n}\left(a^{3} x^{9}+a^{6} x^{18}\right)
\end{array}$
&$\begin{array}{l}
    3\mid n, a \neq 0
\end{array}$
&$\begin{array}{l}
     x^{2^{n-1}} + x + \tilde{a}\Tr^n_3\left(\frac{x}{a} + \left(\frac{x}{a}\right)^4\right)\\
\end{array}$
& \cite{BCL09IEEE} \\ \hline
$\begin{array}{l}
    x^{3}+a^{-1} \Tr_{3}^{n}\left(a^{6} x^{18}+a^{12} x^{36}\right)
\end{array}$
&$\begin{array}{l} 3\mid n, a  \neq 0
\end{array}$
&$\begin{array}{l}
     x^{2^{n-1}} + x + \tilde{a}\Tr^n_3\left(\left(\frac{x}{a}\right)^2 + \left(\frac{x}{a}\right)^4\right)\\
\end{array}$
&\cite{BCL09IEEE} \\ \hline
$\begin{array}{l}
    g x^{2^{s}+1}+g^{2^{k}} x^{2^{2 k}+2^{k+s}}+ \\
    v x^{2^{2 k}+1}+w g^{2^{k}+1} x^{2^{s}+2^{k+s}}
\end{array}$
&$\begin{array}{l}
    n=3 k, \operatorname{gcd}(k, 3)=\operatorname{gcd}(s, 3 k)=1, \\
    v, w \in \mathbb{F}_{2^{k}}, v w \neq 1, \\
    3 \mid(k+s),  g \text { primitive in } \mathbb{F}_{2^{n}}^\times
\end{array}$
&$\begin{array}{l}
\left[(g + w g^{2^{k}+1})x\right]^{2^{3k-s}}\hspace{-1em}+\\
\left[(g^{2^{k}} + w g^{2^{k}+1})x\right]^{2^{2k-s}} \hspace{-1em}+\\
\left[(g^{2^{k}} +  v)x\right]^{2^{k}} +    (g+v) x
\end{array}$
&\cite{BBMM08} \\ \hline
$\begin{array}{l}
    \left(x+x^{2^{m}}\right)^{2^{k}+1} + \\
    v \left(g x+g^{2^{m}} x^{2^{m}}\right)^{ (2^{k}+1) 2^{i}}+ \\
    g\left(x+x^{2^{m}}\right)\left(g x+g^{2^{m}} x^{2^{m}}\right)
\end{array}$
&$\begin{array}{l}
    n=2 m, m \geq 2 \text { even, } \\
    \operatorname{gcd}(k, m)=1 \text { and } i \geq 2 \text { even }\\
    g \text{ primitive in } \mathbb{F}_{2^{n}}^\times, v \in \mathbb{F}_{2^{m}} \text{ not cube}
\end{array}$
&$\begin{array}{l}
g\tilde{g}^{ 2^{m-k} }\left[(vx)^{2^m} + vx\right]^{ 2^{m-k-i} } \hspace{-2em}+\\
g\tilde{g}^{ 2^{k} } \left[(vx)^{2^{m}}+vx\right]^{2^{m-i}} \hspace{-1em}+\\
\tilde{g}\left[(gx)^{2^{m}} + gx\right]
\end{array}$
& \cite{ZP13}\\ \hline
$\begin{array}{l}
    a^{2} x^{2^{2 m+1}+1}+v^{2} x^{2^{m+1}+1}+\\
    a x^{2^{2 m}+2} + v x^{2^{m}+2}+\\
    \left(w^{2}+w\right) x^{3}
\end{array}$    &
$\begin{array}{l}n=3 m, m \text{ odd; } U \text{ subgroup of}\\
\mathbb{F}_{2^{n}}^\times \text { of order } 2^{2 m}+2^{m}+1,\\
L(x)= a x^{2^{2 m}}+v x^{2^{m}}+w x \in \mathbb{F}_{2^{n}}[x], \\
\text { s.t. }\forall u_1, u_2 \in U, L(u_1) \notin\{0, u_1\} \text { and } \\
u_1^{2} L(u_2) +u_2 L(u_1)^{2} \neq 0 \\
\Rightarrow \frac{u_2^{2} L(u_1)+u_1 L(u_2)^{2}}{u_1^{2} L(u_2)+u_2 L(u_1)^{2}} \notin \mathbb{F}_{2^{m}} \\
\end{array}$ & $\begin{array}{l}
\left[\left(a  + v  + w^{2}+w\right)x\right]^{2^{3m-1}}  +\\
\left(v x \right)^{2^{2m}} +
\left(v^2 x\right)^{2^{2m-1}} +\\
\left(a x \right)^{2^{m}} +
\left(a^2 x\right)^{2^{m-1}} +\\
\left(a^{2} + v^{2}+ w^{2}+w \right)x
\end{array}$& \cite{BCCCV} \\ \hline
$\begin{array}{l}
    v\left(v^{q} x+v x^{q}\right)\left(x^{q}+x\right)+\\
    \left(v^{q} x+v x^{q}\right)^{2^{2 i}+2^{3 i}}+  \\
    a\left(v^{q} x+v x^{q}\right)^{2^{2 i}}\left(x^{q}+x\right)^{2^{i}}+\\
    w\left(x^{q}+x\right)^{2^{i}+1}
\end{array}$        & $\begin{array}{l}
q=2^{m}, n=2 m, \operatorname{gcd}(i, m)=1 \\
X^{2^{i}+1}+a X+w \\
\text { has no zero in } \mathbb{F}_{2^{m}}
\end{array}$
&$\begin{array}{l}
\tilde{v}^{2^{i}}\left[(ax)^{2^m}+ ax\right]^{2^{m-i}}+ \\
v^{2^{m}}\tilde{v}^{2^{i}} \left[x^{2^{m}}+ x\right]^{2^{m -2i}} + \\
v^{2^{m}} \tilde{v}^{2^{m-i}}  \left[x^{2^{m}}+x \right]^{2^{m - 3 i}}+ \\
\tilde{v} \left[ (vx)^{2^m}+vx \right]
\end{array}$& \cite{Tani19} \\ \hline
$\begin{array}{l}
    x^{3}+\omega\left(x^{2^{i}+1}\right)^{2^{k}}+\\
    \omega^{2} x^{3 \cdot 2^{m}}+    \left(x^{2^{i+m}+2^{m}}\right)^{2^{k}}
\end{array}$        &$\begin{array}{l}n=2 m ; m \text { odd } ; 3 \nmid m \\
i=m-2 \text { or } i \equiv (m-2)^{-1}\pmod{n}\\
\omega \in \mathbb{F}_{4} \setminus \mathbb{F}_2\end{array}$  &
$\begin{array}{l} x^{2^{2m-1}}   +\omega^{2^k} x^{2^{2m-k}}+\\
\omega^{2^{k+1}} x^{2^{2m- k-i}} +\omega x^{2^m}+\\
\omega^2 x^{ 2^{m-1}}  + x^{2^{m-k}}  +\\
x^{2^{m-k-i}}+ x\end{array}$& \cite{BHK19}\\ \hline
    \end{tabular}

\medskip
\noindent $-$ $\beta \in \Fbn$ is a unique nonzero root of $\ell(x)$.\\
\noindent $-$ $\tilde{a} = a^{3\cdot 2^{n-3}} + a^{3} $,  $\tilde{g} =  g^{2^m} + g$, and $\tilde{v} = v^q + v$ in the
third column.

\end{table}

\normalsize

\bigskip

\noindent {\bf Explanation of Table \ref{maintable}:}
\medskip

\noindent
For all APN function $F_0(x)$ in the Table \ref{maintable}, we choose
\begin{align}
\alpha=0, \gamma= 1 \text{ and } c=F_0(1), \label{table-parameter}
\end{align}
and $\beta\in \Fbn$ is a unique nonzero root of linearized $\ell(x)$. Then the resulting function
 \begin{align*}
 F\circ H(x)&=F(x)+\Tr(\beta F(x))D_1F(x) \\
 &=F_0(x)+cx +\Tr(\beta F(x))(F_0(x+1)+F_0(x)+c)
 \end{align*}
is cubic and is CCZ-equivalent to $F_0(x)$. Here, $\ell(x)$ is determined as follows:
 Since $\beta$ must satisfy $\Tr(\beta L_1(x))=0$ for all $x$ with $L_1(x)=F_0(x+1)+F_0(x)+F_0(1)+F_0(0)$,
 we write $L_1(x)=\sum_{i=0}^{n-1}a_ix^{2^i}$. Then
 \begin{align*}
 0=\Tr(\beta L_1(x))=\Tr\left(\sum_{i=0}^{n-1}\beta a_i x^{2^i}\right)=
          \Tr\left(\left(\sum_{i=0}^{n-1}(\beta a_i)^{2^{n-i}}\right) x\right).
 \end{align*}
 Therefore, $\beta$ is a root of
 \begin{align}
   \ell(x)=\sum_{i=0}^{n-1} a_{n-i}^{2^i}x^{2^i}. \label{ell}
 \end{align}

\bigskip

\noindent The choices of $\alpha, \beta, \gamma $ and $c$ are not unique and there are many alternative choices are
possible. For example, other than Gold functions, Budaghyan et al. \cite{BCL09FFTA} found CCZ-equivalent $F'$ of
algebraic degree $3$ and $4$ for the quadratic $F_0(x)=x^3+a^{-1}\Tr(a^3x^9)$ with different parameters. Among the $10$
APN functions in Table \ref{maintable}, $F_0(x)=x^3+a^{-1}\Tr(a^3x^9)$ is the only (published) one where the existence
of (non-obvious) $F'$  is known such that $F'\eqccz F_0$ and $F'\noteqea F_0$. Moreover, (quadratic) APN permutation is
very rare in the sense that there is no quadratic APN permutation on even dimension \cite{SZZ94}, and for odd $n$, the
only permutation in Table \ref{maintable} is the function in the first row
 with odd $n=3k$ and $\gcd(k,3)=1=\gcd(s,3k)$. It is conjectured in \cite{BCL08} that the algebraic degree of the inverse
 function is $\frac{3k+1}{2}$. Therefore, based on that conjecture, it is very probable that our cubic $F\circ H$ is
 also EA-inequivalent to the inverse function.

\medskip

It should be mentioned that, though not listed in Table \ref{maintable}, there is a sporadic case of (recently
discovered)  quadratic APN permutation on dimension $9$. That is, Beierle and Leander \cite{BL22} found  the following
two quadratic APN permutations on $\F_{2^9}$:
 \begin{align}
   F_0(x)&=x^3+u^2x^{2^3+2}+ux^{2^4+2^3}+u^4x^{2^6+2^4}+u^6x^{2^7+2^3}, \label{spo1} \\
   F_1(x)&=x^3+ux^{2^3+2}+u^2x^{2^4+1}+u^4x^{2^6+2^4}+u^5x^{2^7+2^6}, \label{spo2}
 \end{align}
 where $u\in \F_{2^3}$ with $u^3+u+1=0$. (See also \cite{BCLP22}.) It is straightforward to check that the inverses of $F_0$ and $F_1$ both
 have algebraic degree $5$. Therefore our $F\circ H$ are  non-obvious functions which are CCZ-equivalent but not
 EA-equivalent to the functions in Equations  \eqref{spo1} and \eqref{spo2}. Let us explain how the `Explanation of Table \ref{maintable}'
  can also be applied to this sporadic case.
 \begin{example}
  Let
  $$F_0(x)=x^3+u^2x^{2^3+2}+ux^{2^4+2^3}+u^4x^{2^6+2^4}+u^6x^{2^7+2^3} \in \F_{2^9}[x]$$
   be a APN permutation in Equation
   \eqref{spo1}. Let $\alpha=0, \gamma=1$ and $c=F_0(1)=u^2$ similarly as in Table \ref{maintable} (See Equation \eqref{table-parameter}.).
   Then, one gets
   \begin{align*}
     L_1(x)&=F_0(x+1)+F_0(x)+F_0(1)+F_0(0) \\
          &=x^2+x+u^2(x^{2^3}+x^2)+u(x^{2^4}+x^{2^3})+u^4(x^{2^6}+x^{2^4})+u^6(x^{2^7}+x^{2^3})\\
          &=x+u^6x^2+u^3x^{2^3}+u^2x^{2^4}+u^4x^{2^6}+u^6x^{2^7}.
   \end{align*}
   The corresponding $\ell(x)$ (see Equation \eqref{ell}) is easily calculated as
   $$
   \ell(x)=u^3x^{2^8}+u^3x^{2^6}+ux^{2^5}+u^4x^{2^3}+u^3x^{2^2}+x,
   $$
   and $\beta=u^5=u^2+u+1$ is a unique nonzero root of $\ell(x)$. Therefore one has
  \begin{align*}
   F(x)=F_0(x)+u^2x,\,\, H(x)=x+\gamma\Tr(\alpha x +\beta F(x))=x+\Tr(u^5F_0(x)+x),
   \end{align*}
   and
   \begin{align*}
  F\circ H(x)&= F(x)+\Tr(\beta F(x))D_1 F(x) \\
          &=F_0(x)+u^2x+\left(F_0(x+1)+F_0(x)+u^2\right)\Tr\left(u^5F_0(x)+x\right),
   \end{align*}
   where
   \begin{align*}
    F\circ H \eqccz F_0  \quad \text{and} \,\,F\circ H \noteqea F_0.
   \end{align*}
   The case for $F_1$ in Equation \eqref{spo2} can be dealt in the same manner.
   \end{example}

\bigskip

 We present another example from the Table \ref{maintable},
 where one may use different approach for determining $\alpha, \beta,
\gamma$ and $c$.

\begin{example}
Let
\begin{align*}
F_0(x)=x^3+a^{-1}\Tr_3^n(a^3x^9+a^6x^{18}) \quad \text{ with }\, 3|n \text { and } a\in \Fbn^\times.
\end{align*}
  It is well-known \cite{BCL09IEEE} that
 $F_0$ is APN for all nonzero $a$. Though a systematic approach is possible as is shown in Table \ref{maintable},
 we provide an alternative description on finding CCZ-equivalent $F\circ H$ for this case.
 For this $F_0(x)$, letting $q=2^3$, one has
 \begin{align}
 L_\gamma(x)=x^2\gamma+x\gamma^2+a^{-1}J \quad \text{ with }\,
   J=\Tr_3^n(a^3(x^q\gamma+x\gamma^q)+a^6(x^{2q}\gamma^2+x^2\gamma^{2q})) \in \F_{2^3}. \label{eqeg-nonGold}
 \end{align}

\noindent {\it When $\Tr_3^n(a)=0$:}

\noindent Let $\beta=a^3, \gamma=a^{-1}$ and $c=a^{-2}$. Then, for all $x\in \Fbn$, one gets
\begin{align*}
 \Tr(\beta L_\gamma(x))&=\Tr(x^2\beta\gamma+x\beta\gamma^2)+\Tr_1^3\Tr_3^n(a^2J)\\
    &=\Tr(x^2\beta\gamma+x\beta\gamma^2)+\Tr_1^3 (J[\Tr_3^n(a)]^2)
    =\Tr(x^2\beta\gamma+x^2\beta^2\gamma^4)  \quad (\because \Tr_3^n(a)=0)\\
    &=\Tr(\beta\gamma(1+\beta\gamma^3)x^2)=0. \quad\quad (\because \beta\gamma^3=1)
\end{align*}
Also, from
 \begin{align*}
 \Tr(\beta(F_0(\gamma)+F_0(0)))&=\Tr(a^3\{a^{-3}+a^{-1}\Tr_3^n(a^3\gamma^9+a^6\gamma^{18}) \}) \\
   &=\Tr(1)+ \Tr_1^3\Tr_3^n\{a^2\Tr_3^n(a^3\gamma^9+a^6\gamma^{18}) \}=\Tr(1), \quad (\because \Tr_3^n(a)=0)
 \end{align*}
 one has
 $$
 \Tr(c\beta\gamma)=\Tr(1)=\Tr(\beta(F_0(\gamma)+F_0(0))).
 $$

\smallskip
 \noindent {\it When $\Tr_3^n(a)\neq 0$: In this case, let $\omega=\Tr_3^n(a) \in \F_{2^3}^\times$.}

\noindent Let $\beta=\omega^5 a^3, \gamma=\omega^3 a^{-1}$ and $c=\omega^6 a^{-2}$. Then, from Equation
\eqref{eqeg-nonGold},
 one gets
\begin{align*}
 \Tr(\beta L_\gamma(x))&=\Tr(x^2\beta\gamma+x\beta\gamma^2)+\Tr_1^3\Tr_3^n(\omega^5a^2J)\\
    &=\Tr(x^2\beta\gamma+x\beta\gamma^2)+\Tr_1^3 (J\omega^5[\Tr_3^n(a)]^2) \\
    &=\Tr(x^2\beta\gamma+x^2\beta^2\gamma^4)+\Tr_1^3(J)  \quad (\because \Tr_3^n(a)=\omega \text{ and } \omega^7=1).
\end{align*}
Therefore, from Equation \eqref{eqeg-nonGold}, writing $J=\Tr_3^n(y+y^2)$ with $y=a^3(x^q\gamma+x\gamma^q)$,
 one finally has
$$
 \Tr(\beta L_\gamma(x))=\Tr(\beta\gamma(1+\beta\gamma^3)x^2)+\Tr_1^3\Tr_3^n(y+y^2)=0.
  \quad (\because \beta\gamma^3=1 \text{ and } \Tr_1^n(y+y^2)=0)
$$

 \noindent
 Also, from
 \begin{align*}
 \Tr(\beta(F_0(\gamma)+F_0(0)))&=\Tr(\omega^5a^3\{\omega^9a^{-3}+a^{-1}\Tr_3^n(a^3\gamma^9+a^6\gamma^{18}) \}) \\
   &=\Tr(1)+ \Tr_1^3\Tr_3^n\{\omega^5a^2\Tr_3^n(a^3\gamma^9+a^6\gamma^{18}) \} \\
   &=\Tr(1)+ \Tr_1^3\Tr_3^n(a^3\gamma^9+a^6\gamma^{18} )=\Tr(1), \quad (\because \Tr_3^n(a)=\omega \text{ and } \omega^7=1)
 \end{align*}
 one has
 $$
 \Tr(c\beta\gamma)=\Tr(1)=\Tr(\beta(F_0(\gamma)+F_0(0))).
 $$

\noindent
 As a consequence, assuming $a\neq 0$ and $3|n $, one gets
\begin{align*}
 x^3+a^{-1}\Tr_3^n(a^3x^9+a^6x^{18})= F_0(x) \eqccz F\circ H(x)= F(x)+\Tr(\alpha x+\beta F(x))D_\gamma F(x),
\end{align*}
and $F_0\noteqea F\circ H$,
  where $F(x)=F_0(x)+cx$ and $\alpha,\beta, \gamma$ and $c$ are chosen as in Table \ref{budaegtrace}.

\begin{table}[ht]
\caption{$\alpha,\beta, \gamma$ and $c$ satisfying $F\circ H \eqccz F_0$ but $F\circ H \noteqea F_0$}
 \label{budaegtrace}
\smallskip
\centering
 \renewcommand{\arraystretch}{1.1}
\begin{tabular}{c || c | c | c | c}  
\hline\hline
 & $\alpha$ & $\beta$ & $\gamma$ & $c$ \\ [0.5ex] 
\hline\hline
 When $\Tr_3^n(a)=0$ & $0$ & $a^3$ & $a^{-1}$ & $a^{-2}$ \\
\hline
 When $\Tr_3^n(a)=\omega\neq 0$ & $0$ & $\omega^5a^3$ & $\omega^3a^{-1}$ & $\omega^6 a^{-2}$\\
\hline
\end{tabular}
\label{table-egnongold}  
\end{table}
\end{example}

\begin{remark}
The above choice of $\alpha,\beta, \gamma$ and $c$ is just one example of parameter choices.
 For APN case as above, for any $\gamma\neq 0$, there exists a unique $\beta\neq 0$ satisfying
 $\Tr(\beta L_\gamma(x))=0$ for all $x$. Also, for each given choice of $(\gamma,\beta)$ pair,
 there are $2^{n-1}$ possible choices of $\alpha$ and $c$, respectively. For example, one may choose
 any $\alpha\in \Fbn$ satisfying $\Tr(\alpha\gamma)=0$, and $\alpha=0$ or $\alpha=(\omega^{-1}+\omega^{-2})a$
 is just one of such choices.
\end{remark}

\medskip

\subsection{Gold Functions for $p>2$}\label{sec-paryGold}

For Gold function $G(x)=x^{p^i+1} \in \Fpn[x]$ with $p$ odd, the following is well-known.
 (See for example \cite{Zie15, BT20}.)
 \begin{enumerate}
 \item[1.] $G(x)$ is planar (i.e., PN) if and only if $\frac{n}{\gcd(i,n)}$ is odd.
 \item[2.] If $\frac{n}{\gcd(i,n)}$ is even, then $|\Ker L_\gamma |= p^{\gcd(i,n)}$ for every $\gamma\in \Fpn^\times$,
   where $L_\gamma(x)=G(x+\gamma)-G(x)-(G(\gamma)-G(0))=\gamma x^{p^i}+\gamma^{p^i}x$. Consequently, one has
   $\Delta_G=p^{\gcd(i,n)}$.
 \end{enumerate}

\begin{remark}
 It is proven in \cite{BH08, KP08}  that for the case of planar function $F(x)$, one has
 $\mathscr{C}_\textrm{EA}(F)=\mathscr{C}_\textrm{CCZ}(F)$, i.e., every $F'$ satisfying $F'\eqccz F$ also satisfies
  $F'\eqea F$.
\end{remark}

\noindent From the previous remark, we are only interested in the case of $G(x)=x^{p^i+1}\in \Fpn[x]$ with even $n$
 such that $\frac{n}{\gcd(i,n)}$ is also even.


\begin{corollary}\label{coro-quartic}
Let $n\geq 4$ and $G(x)=x^{p^i+1} \in \Fpn[x]$ where $\frac{n}{d}$ is even with $d=\gcd(i,n)$. Suppose that $i\neq
\frac{n}{2}$. Then the function in Theorem \ref{mainthm}
 \begin{align*}
  F\circ H(x) &=F(x)+\eps D_\gamma F(x) +F_{\text{\tiny DO}}(\gamma)(\eps^2-\eps) \\
            &=F(x)+\eps D_\gamma F(x) +\gamma^{p^i+1}(\eps^2-\eps) \\
            (\text{with } &F(x)=x^{p^i+1}+cx, \,\,\, \eps=\Tr(\alpha x +\beta F(x)), \,\,\, H(x)=x+\gamma\eps)
 \end{align*}
 is of algebraic degree $4$ and  $F\circ H\eqccz F\eqea F_0=G$.
\end{corollary}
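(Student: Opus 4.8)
The plan is to recognize this corollary as a direct specialization of Theorem \ref{mainthm}(2) to the Gold function $F_0=G$, so that the real work is only to check that $G$ meets the two quantitative hypotheses of that theorem and that the cubic-versus-quartic dichotomy there resolves in favour of degree $4$. Concretely, once I verify that
$$
1<\Delta_G\leq p^{n-3}\qquad\text{and}\qquad |\cW_G|\leq p^{n-1},
$$
Theorem \ref{mainthm}(2) immediately yields $F\circ H\eqccz F\eqea F_0=G$ together with $F\circ H\noteqea F_0$ and the displayed formula for $F\circ H$ (via Proposition \ref{key-prop2}). Since there the algebraic degree equals $4$ precisely when $F_{\text{\tiny DO}}(\gamma)\neq 0$, it then remains only to note that for the Gold function $F_{\text{\tiny DO}}(x)=x^{p^i+1}$, whence $F_{\text{\tiny DO}}(\gamma)=\gamma^{p^i+1}\neq 0$ for every $\gamma\in\Fpn^\times$; so $d^\circ(F\circ H)=4$ automatically.

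For the differential uniformity I would invoke the two well-known facts recalled just before the statement. Since $\tfrac{n}{d}$ is even, $G$ is not PN, so $\Delta_G>1$, and moreover $\Delta_G=p^{d}$ with $d=\gcd(i,n)$. The arithmetic heart of this step is to bound $d$. Because the only $i$ with $1\leq i\leq n-1$ and $\gcd(i,n)=n/2$ is $i=n/2$, the hypothesis $i\neq n/2$ rules out $\tfrac{n}{d}=2$; combined with $\tfrac{n}{d}$ being even this forces $\tfrac{n}{d}\geq 4$, i.e. $d\leq n/4$. For $n\geq 4$ one has $n/4\leq n-3$, so $\Delta_G=p^{d}\leq p^{n-3}$.

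For the linearity I would compute $k_m=\max_{b}\dim\LK_{\Tr(bG)}$ and feed it into Proposition \ref{pbentwalsh}, which gives $|\cW_G|=p^{(n+k_m)/2}$. Writing $L_\gamma(x)=\gamma^{p^i}x+\gamma x^{p^i}$ and pushing a Frobenius onto the linear argument, the condition $\Tr(bL_\gamma(x))=0$ for all $x$ collapses to the single linearized equation $b\gamma+b^{p^i}\gamma^{p^{2i}}=0$ in $\gamma$; its nonzero solutions satisfy $\gamma^{p^{2i}-1}=-b^{1-p^i}$, so their number is either $0$ or $\gcd(p^{2i}-1,p^n-1)=p^{\gcd(2i,n)}-1$. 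As the solution set is an $\Fp$-subspace, this gives $\dim\LK_{\Tr(bG)}\in\{0,\gcd(2i,n)\}$; since $G$ is not PN, some component has a nontrivial linear structure (Proposition \ref{LS-quad}), so $k_m=\gcd(2i,n)$. A short gcd bookkeeping, using that $\tfrac{n}{d}$ even forces $i/d$ odd, yields $\gcd(2i,n)=2d$. Hence $|\cW_G|=p^{(n+2d)/2}=p^{\,n/2+d}$, and $d\leq n/4\leq n/2-1$ (again valid for $n\geq 4$) gives $|\cW_G|\leq p^{n-1}$.

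With both hypotheses established, Theorem \ref{mainthm}(2) closes the argument, and the degree is $4$ by the observation on $F_{\text{\tiny DO}}(\gamma)$ above. I expect the main obstacle to be the two linked computations of the last paragraph: identifying $k_m$ with $\gcd(2i,n)$ by actually \emph{solving} the linearized equation (not merely bounding the kernel), and then the number-theoretic step $\gcd(2i,n)=2d$ that makes the linearity estimate tight enough. Note there is no slack: at $n=4$ both bounds $\Delta_G\leq p^{n-3}$ and $|\cW_G|\leq p^{n-1}$ are attained with equality, so the inequalities $d\leq n/4\leq\min(n-3,\,n/2-1)$ must be handled with care rather than by a crude estimate.
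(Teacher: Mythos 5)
Your proof is correct and reaches the same conclusion, but it takes a genuinely different route for the one nontrivial estimate. The paper's proof is essentially a citation: it quotes Coulter's evaluations of Weil sums to assert $|\cW_G(a,b)|\in\{0,\,p^{n/2},\,p^{n/2+d}\}$, hence $|\cW_G|=p^{n/2+d}$, feeds this into Proposition \ref{quartic-degree} (which needs only $|\cW_G|\le p^{n-1}$ to force $d^\circ(\eps^2)=4$), and finishes with $d\le n/2-1$. You instead derive $|\cW_G|=p^{n/2+d}$ from Proposition \ref{pbentwalsh} by solving the linearized equation $b\gamma+b^{p^i}\gamma^{p^{2i}}=0$ for the linear kernel of $\Tr(bG)$, obtaining $\dim\LK_{\Tr(bG)}\in\{0,\gcd(2i,n)\}$ and then $\gcd(2i,n)=2d$ from the fact that $n/d$ even forces $i/d$ odd; this makes the linearity estimate self-contained at the cost of some gcd bookkeeping (all of which checks out). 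You also verify the hypothesis $\Delta_G=p^d\le p^{n-3}$ explicitly, which the paper's proof omits because it invokes Proposition \ref{quartic-degree} directly rather than the full Theorem \ref{mainthm}, and you are more careful than the paper in ruling out $d=n/2$ via $i\ne n/2$ before concluding $d\le n/4\le\min(n-3,\,n/2-1)$ for $n\ge 4$. Both routes are valid; yours buys independence from the external references \cite{Cou981, Cou982}, the paper's buys brevity.
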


\noindent
\begin{proof}
When $\frac{n}{d}$ is even with $d=\gcd(i,n)$, it is shown in \cite{Cou981, Cou982} (See also \cite{XLZT23}.)
 that $|\cW_G(a,b)| \in \{0, p^\frac{n}{2},p^{\frac{n}{2}+d}\}$ for all $a,b\in \Fpn$, which implies
 that $|\cW_G|=p^{\frac{n}{2}+d}$. On the other hand, from Proposition
 \ref{quartic-degree}, our function $F\circ H$ satisfying
 $F\circ H\eqccz F\eqea F_0=G$ has an algebraic degree $4$ if $|\cW_G|\leq p^{n-1}$.
  Therefore one has $F\circ H \not \eqea G$ if
 $\frac{n}{2}+d\leq n-1$ (i.e., if $d\leq \frac{n}{2}-1$).
Since any proper divisor $d$ of $n$ must satisfy $d\leq \frac{n}{2}$,
 it is guaranteed that
 one always have $d^\circ (F\circ H)=4$ (i.e., $F\circ H \noteqea G$) if $i\neq \frac{n}{2}$ and $\frac{n}{d}=\text{even}$.
\end{proof}

\medskip
\noindent For explicit choices of $\alpha,\beta,\gamma$ and $c$, let us first choose $\gamma=1$. Then, from Proposition
\ref{key-prop}, $\alpha$ is determined by the condition $\Tr(\alpha\gamma)=\Tr(\alpha)=-2$. Also, since
$L_\gamma(x)=x^{p^i}\gamma+\gamma^{p^i}x=x^{p^i}+x$, one gets
$$
 \Tr(\beta L_\gamma(x))=\Tr(\beta (x^{p^i}+x))=\Tr((\beta^{p^{n-i}}+\beta)x)=0 \quad \text{ for all } x\in \Fpn
$$
if and only if $\beta^{p^{n-i}}=-\beta$, i.e., $\beta^{p^{n-i}-1}=-1$. Letting $n=dn'$ and $i=di'$ with $d=\gcd(n,i)$,
one has $n'=\text{even}$ and $i'=\text{odd}$ $(\because \frac{n}{d}=\text{even})$. Since
 $$
 \beta^{p^{n-i}-1}=\left(\beta^{p^d-1}\right)^{\frac{(p^d)^{n'-i'}-1}{p^d-1}}
   =\left(\beta^{p^d-1}\right)^{(p^d)^{n'-i'-1}+(p^d)^{n'-i'-2}+\cdots +(p^d)+1},
 $$
 and since
 $$
 (p^d)^{n'-i'-1}+(p^d)^{n'-i'-2}+\cdots +(p^d)+1\equiv n'-i' \equiv 1 \pmod{2},
 $$
choosing $\beta$ satisfying $\beta^{p^d-1}=-1$ (i.e., $\beta^{p^d}+\beta=0$), one has
$$
 \beta^{p^{n-i}-1}=\left(\beta^{p^d-1}\right)^{\frac{(p^d)^{n'-i'}-1}{p^d-1}}=(-1)^{n'-i'}=-1.
$$
For this choice of $\beta$, one also gets
\begin{align*}
\Tr(\beta)=\Tr_1^d\Tr_d^n(\beta)
 &=\Tr_1^d \left(\beta+\beta^{p^d}+\beta^{p^{2d}}+\beta^{p^{3d}}+\cdots +\beta^{p^{(n'-2)d}}+\beta^{p^{(n'-1)d}}\right)\\
       & =\Tr_1^d \left(\beta+\beta^{p^d}+(\beta+\beta^{p^d})^{p^{2d}}+\cdots +(\beta+\beta^{p^d})^{p^{(n'-2)d}}\right)
        =\Tr_1^d(0)=0.
\end{align*}
Therefore, the condition on $c$ in Proposition \ref{key-prop},
 $$
 \Tr(c\beta\gamma)=-\Tr(\beta D_\gamma G(0))=-\Tr(\beta (G(\gamma)-G(0))=-\Tr(\beta\gamma^{p^i+1})=-\Tr(\beta)=0
$$
is satisfied by choosing $c=0$. Consequently, for our choice of parameters
  ($c=0, \gamma=1$, $\Tr(\alpha)=-2, \beta^{p^d}+\beta=0$), the function
  $F\circ H= G\circ H$ ($\because c=0$) is written as
 \begin{align*}
  G\circ H(x)
            &=G(x)+\eps D_\gamma G(x) +\gamma^{p^i+1}(\eps^2-\eps) \\
            &=x^{p^i+1}+\eps (x^{p^i}+x+1)+\eps^2-\eps,
            \qquad (\text{with } \eps=\Tr(\alpha x +\beta x^{p^i+1}) )
 \end{align*}
 where the algebraic degree of $G\circ H$ is $4$.

\begin{remark}\label{rmkalpbet}
The elements $\alpha$ and $\beta$ satisfying $\Tr(\alpha)=-2$ and $\beta^{p^d}+\beta=0$ can be chosen as follows.
 \begin{enumerate}
 \item[1.] If $\gcd(n,p)=1$, then $\Tr(1)=n\neq 0$. Thus letting
 $\alpha=-\frac{2}{n}\in \Fp$, one gets $\Tr(\alpha)=-2$. If $p|n$ or or if no restriction of $p$ and $n$ is given,
 let $\tau$ be any primitive root of $\Fpn^\times$ and let $h(x)$ be the monic irreducible polynomial of $\tau$. Then
 letting $\frac{h(x)}{x-\tau}=\sum_{i=0}^{n-1} a_i x^i $, one has $\Tr\left(\frac{a_0}{h'(\tau)}\right)=1$ where
  $h'(x)$ is the usual (formal) derivative of $h$. (See Theorem 1.5 and 4.23 in \cite{Men93}.) Therefore letting
   $\alpha=-2\frac{a_0}{h'(\tau)}$, one gets $\Tr(\alpha)=-2$.

 \item[2.] Choose any $\zeta \in \F_{p^{2d}}\setminus \F_{p^{d}}$. Since $2d | n$, one gets $\zeta\in \Fpn^\times$.
  Letting  $\beta=\zeta-\zeta^{p^d}$, one has $\beta^{p^d}+\beta=\zeta^{p^d}-\zeta+\zeta-\zeta^{p^d}=0$.

 \end{enumerate}

\end{remark}

\begin{corollary}\label{coro-quartic2}
Under the same assumption as in Corollary \ref{coro-quartic} and letting $c=0, \gamma=1$ and choosing $\alpha$ and
$\beta$ as in Remark \ref{rmkalpbet}, the function
 $$
 G\circ H(x)
            =x^{p^i+1}+\eps (x^{p^i}+x+1)+\eps^2-\eps
            \qquad (\text{with } \eps=\Tr(\alpha x +\beta x^{p^i+1}) )
 $$
  is CCZ-equivalent to $G(x)=x^{p^i+1}$ but not EA-equivalent to each other. That is,
 $$
  G\circ H \noteqea G \qquad \text{and } \quad  G\circ H \eqccz G.
 $$
 Moreover, $G\circ H$ is EA-inequivalent to any power function.
 \end{corollary}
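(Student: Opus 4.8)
The plan is to reduce almost everything to Corollary \ref{coro-quartic}, which has already done the heavy lifting, and then to harvest the final ``moreover'' clause from Proposition \ref{buda-prop}. First I would observe that, since $c=0$, we have $F(x)=x^{p^i+1}=G(x)$, so that $F\circ H = G\circ H$. Corollary \ref{coro-quartic} then immediately gives $G\circ H \eqccz F \eqea G$, which yields the CCZ-equivalence $G\circ H \eqccz G$, and it also gives $d^\circ(G\circ H)=4$. Because the algebraic degree is an EA-invariant while $d^\circ(G)=2\neq 4=d^\circ(G\circ H)$, this already forces $G\circ H \noteqea G$. Hence the only genuinely new content is the final assertion that $G\circ H$ is EA-inequivalent to every power function.

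For that assertion I would invoke Proposition \ref{buda-prop}: it suffices to exhibit a single $b\in\Fpn^\times$ for which $d^\circ(\Tr(b\,[G\circ H]))\notin\{0,1,d^\circ(G\circ H)\}=\{0,1,4\}$. The natural candidate is $b=\beta$, the element already fixed by the construction. Using the explicit form of $G\circ H$ together with $F=G$, I would expand
\begin{align*}
 \Tr(\beta\,[G\circ H(x)]) = \Tr(\beta G(x)) + \eps\,\Tr(\beta D_\gamma G(x)) + (\eps^2-\eps)\,\Tr(\beta\gamma^{p^i+1}),
\end{align*}
and then show that the last two trace terms vanish. The middle term dies because, by the choice $\beta^{p^d}+\beta=0$ established just before the corollary, $\gamma=1$ is a $0$-linear structure of $\Tr(\beta G)$, i.e.\ $\Tr(\beta D_\gamma G(x))=0$ for all $x\in\Fpn$. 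The last term dies because $\gamma=1$ gives $\Tr(\beta\gamma^{p^i+1})=\Tr(\beta)=0$, again a fact already verified in the run-up to the corollary. Thus $\Tr(\beta\,[G\circ H(x)])=\Tr(\beta G(x))$.

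It then remains to check that this common component function has degree exactly $2$, so that $2\notin\{0,1,4\}$ and Proposition \ref{buda-prop} applies. Here I would use the standing hypothesis $|\cW_G|\leq p^{n-1}<p^n$ carried over from Corollary \ref{coro-quartic}: by Proposition \ref{pbentwalsh} this forces $\dim\LK_{\Tr(bG)}<n$ for every $b\in\Fpn^\times$, so no nonzero component of $G$ is affine and in particular $d^\circ(\Tr(\beta G))=2$. (Concretely, every monomial of $\Tr(\beta x^{p^i+1})$ has $p$-weight $2$, and its polarized bilinear form cannot collapse to an affine function precisely because $i\neq \tfrac{n}{2}$ keeps the Frobenius powers $x^{p^i}$ and $x^{p^{n-i}}$ distinct, so no cancellation occurs.) With $d^\circ(\Tr(\beta\,[G\circ H]))=2\notin\{0,1,4\}$, Proposition \ref{buda-prop} yields that $G\circ H$ is EA-inequivalent to any power function.

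The main obstacle, such as it is, lies in this last degree bookkeeping: everything hinges on confirming that passing to the component $\Tr(\beta\cdot)$ genuinely annihilates the two ``higher'' contributions $\eps D_\gamma G$ and $\eps^2-\eps$ while leaving a \emph{bona fide} quadratic, rather than an accidentally affine, function. This is exactly where the hypotheses $i\neq n/2$ and $|\cW_G|\leq p^{n-1}$ are consumed; once they are in force the remaining steps are routine substitutions.
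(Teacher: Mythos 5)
Your proposal is correct and follows essentially the same route as the paper: the paper's proof is a one-line appeal to Theorem \ref{mainthm} (with $F_{\text{\tiny DO}}(\gamma)=\gamma^{p^i+1}=1\neq 0$ and $\Tr(c\beta\gamma)=0$), and the relevant part of that theorem's proof is exactly what you unfold, namely that $\Tr(\beta\,[G\circ H])=\Tr(\beta G)$ has algebraic degree $2\notin\{0,1,4\}$ so that Proposition \ref{buda-prop} applies. The remaining claims ($\eqccz$, degree $4$, hence $\noteqea$) are, as you say, already contained in Corollary \ref{coro-quartic}.
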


\noindent
\begin{proof}
 Since  $G_{\text{\tiny DO}}(\gamma)= \gamma^{p^i+1}=1\neq 0$ and since $\Tr(c\beta\gamma)=0$, using
 Theorem \ref{mainthm}, one has $G\circ H \noteqea x^s$ for any $s\in \Z$.
\end{proof}

%


\section{Applications to Quadratic $(n,m)$-Functions}

In this section, we state that our main Theorem \ref{mainthm} in previous section has a natural generalization to
$(n,m)$-functions $: \Fpn \rightarrow \Fpm $, where $m$ is a positive divisor of $n$. One can define the Walsh
transform and the differential uniformity of $(n,m)$-function $F(x)$ in a similar manner. For a given quadratic
$(n,m)$-function $F$,  if $\Delta_F > p^{n-m}$, one can show that there exists $\gamma \in \Fpn^\times$ and $\beta\in
\Fpm^\times$ such that $\gamma$ is a linear structure of $\Tr_1^m(\beta F)$. That is, using the following
modifications;
$$
 \Delta_F>1 \longrightarrow  \Delta_F >p^{n-m} \quad \text{and}\quad \Tr(\beta F(x)) \longrightarrow \Tr_1^m(\beta F(x)),
$$
it is straightforward to show that Proposition \ref{LS-quad} also holds for the case of $(n,m)$-function $F$. Lemma
\ref{lemma-cczmap} is valid for the following $\mathscr{L}$:
\begin{align*}
 \mathscr{L} : \Fpn \times \Fpm &\rightarrow \Fpn \times \Fpm \\
(x,y) &\mapsto (x+\gamma (\Tr(\alpha x)+\Tr_1^m(\beta y)), y),
\end{align*}
where $\alpha,\gamma \in \Fpn$ and $\beta \in \Fpm$. Also, for  a given $(n,m)$-function $F_0$, Proposition
\ref{key-prop} is still valid  using the following modifications;
\begin{align*}
 F(x)=F_0(x)+cx \quad &\longrightarrow\quad  F(x)=F_0(x)+\Tr_m^n(cx), \\
 \Tr(c\beta\gamma)= -\Tr(\beta D_\gamma F_0(0)) \quad &\longrightarrow\quad \Tr(c\beta\gamma)= -\Tr_1^m(\beta D_\gamma
 F_0(0)), \\
  H(x)=x+\gamma\Tr\left(\alpha x +\beta F(x)\right) \quad &\longrightarrow\quad
  H(x)=x+\gamma \left( \Tr (\alpha x) +\Tr_1^m (\beta F(x)) \right).
\end{align*}
With these modifications, the validity of Theorem \ref{thm1} and Proposition \ref{key-prop2} is unchanged. Lemma
\ref{lemF3} is valid with the modifications $\Tr(\beta F(x)) \rightarrow \Tr_1^m(\beta F(x))$ and $\Tr(\beta
L_{\gamma_i}(\gamma_j)) \rightarrow \Tr_1^m(\beta L_{\gamma_i}(\gamma_j))$. Proposition \ref{cubic-degree} is still
true with no modification except for the previously mentioned $\cW_{\Tr_1^m(bF)}$ and $\eps=\Tr (\alpha x) +\Tr_1^m
(\beta F(x))$. In the exactly same manner as in Lemma \ref{lemF3}  and Proposition \ref{cubic-degree}, one finds that
Lemma \ref{quartic} and Proposition \ref{quartic-degree} are also valid. Summing up all these together, one has the
following generalization of Theorem \ref{mainthm} as follows.

\begin{theorem}{\bf(Generalization of Theorem \ref{mainthm})}\label{mainthm2}
 Let $F_0(x)$ be a quadratic $(n,m)$-function on $\Fpn$ with $n\geq 4$ and $m$ is a positive divisor of $n$.
 Then,  one concludes the
 followings.
 \begin{enumerate}
 \item[1.] When $p=2$: Suppose that $|\cW_{F_0}|<2^n$ (i.e., $\NL(F_0)\neq 0$) and $2^{n-m}<\Delta_{F_0}\leq 2^{n-3}$.
       Then for any $\gamma\in\Fbn^\times$, one can choose $\beta\in\Fbm^\times$ such that $\Tr_1^m(\beta D_\gamma F(x))=0$
           for all $x\in \Fbn$, where $F(x)=F_0(x)+\Tr_m^n(cx)$ and $c\in \Fbn$ is determined by
           the condition $\Tr(c\beta\gamma)=\Tr_1^m(\beta (F_0(\gamma)+F_0(0)))$.
            Choosing $\alpha\in \Fbn$ satisfying $\Tr(\alpha\gamma)=0$, the function
            $$
              F\circ H(x)=F(x)+\eps D_\gamma F(x) \quad (\,\,\text{with }\eps=\Tr(\alpha x) + \Tr_1^m(\beta F(x)) \,\,)
            $$
         has an algebraic degree $3$ and satisfies $F\circ H \eqccz F_0$ and $F\circ H \noteqea F_0$.

 \item[2.] When $p>2$: Suppose that $|\cW_{F_0}|\leq p^{n-1}$ and $p^{n-m}<\Delta_{F_0}\leq p^{n-3}$.
         Then there exist $\gamma\in\Fpn^\times$ and $\beta\in \Fpm^\times $ such that $\Tr_1^m(\beta D_\gamma F(x))=0$
           for all $x\in \Fpn$, where $F(x)=F_0(x)+\Tr_m^n(cx)$ and $c\in \Fpn$ is determined by
           the condition $\Tr(c\beta\gamma)=-\Tr_1^m(\beta (F_0(\gamma)-F_0(0)))$.
            Choosing $\alpha\in \Fpn$ satisfying $\Tr(\alpha\gamma)=-2$, the function
            $$
              F\circ H(x)=F(x)+\eps D_\gamma F(x) +F_{\text{\tiny DO}}(\gamma)(\eps^2-\eps)
                \quad (\,\,\text{with }\eps=\Tr(\alpha x) +\Tr_1^m(\beta F(x)) \,\,)
            $$
         satisfies $F\circ H \eqccz F_0$ and $F\circ H \noteqea F_0$, where
         $d^\circ (F\circ H)=3$ if $F_{\text{\tiny DO}}(\gamma)=0$, and
         $d^\circ (F\circ H)=4$ if $F_{\text{\tiny DO}}(\gamma)\neq 0$.
 \end{enumerate}

\end{theorem}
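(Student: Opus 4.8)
The plan is to obtain Theorem \ref{mainthm2} by transporting the entire $(n,n)$-machinery of Sections \ref{sec-ccz} and \ref{sec-degree} to the $(n,m)$-setting, following the dictionary of substitutions recorded just before the statement: replace every inner component $\Tr(\beta F(x))$ by $\Tr_1^m(\beta F(x))$ with $\beta\in\Fpm^\times$, replace the affine term $cx$ by $\Tr_m^n(cx)$, and let the CCZ-map act on $\Fpn\times\Fpm$. Concretely, I would verify the chain of ingredients one at a time and check that each modification preserves its argument. First I would establish the $(n,m)$-analogue of Proposition \ref{LS-quad}: the existence of $\gamma\in\Fpn^\times$ and $\beta\in\Fpm^\times$ with $\Tr_1^m(\beta L_\gamma(x))=0$ for all $x$. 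Here $L_\gamma:\Fpn\to\Fpm$ is $\Fp$-linear, so $\Ima L_\gamma$ is an $\Fp$-subspace of $\Fpm$. The hypothesis $\Delta_{F_0}>p^{n-m}$ forces some $\gamma$ with $|\Ker L_\gamma|=\Delta_{F_0}>p^{n-m}$, hence $|\Ima L_\gamma|<p^m$, i.e. $L_\gamma$ is not surjective onto $\Fpm$; choosing a nonzero $\beta$ in the $\Fpm$-trace orthogonal complement of $\Ima L_\gamma$ yields the desired linear structure. This is exactly where the threshold $p^{n-m}$ (rather than $1$) enters.

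Next I would carry over the CCZ-construction. Lemma \ref{lemma-cczmap} applies to $\mathscr{L}(x,y)=(x+\gamma(\Tr(\alpha x)+\Tr_1^m(\beta y)),y)$ on $\Fpn\times\Fpm$, and its computation shows $\mathscr{L}$ is a linear permutation if and only if $\Tr(\alpha\gamma)\neq-1$. With the modified constant-adjusting condition on $c$ (making $\gamma$ a $0$-linear structure of $\Tr_1^m(\beta F)$) and the choice $\Tr(\alpha\gamma)=-2$, the involution argument of Proposition \ref{key-prop} and the graph-isomorphism argument of Theorem \ref{thm1} go through unchanged, giving $F_0\eqea F\eqccz F\circ H$. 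The polynomial expansion of Proposition \ref{key-prop2} also survives, producing $F\circ H(x)=F(x)+\eps D_\gamma F(x)+F_{\text{\tiny DO}}(\gamma)(\eps^2-\eps)$ with the new $\eps=\Tr(\alpha x)+\Tr_1^m(\beta F(x))$.

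For the algebraic-degree statements I would rerun Lemma \ref{lemF3}, Proposition \ref{cubic-degree}, Lemma \ref{quartic} and Proposition \ref{quartic-degree} with $\Tr$ replaced by $\Tr_1^m$ on the $\beta$-slots. The third-derivative formula then reads $D_{\gamma_3}D_{\gamma_2}D_{\gamma_1}F_3(x)=\Tr_1^m(\beta L_{\gamma_2}(\gamma_3))L_\gamma(\gamma_1)+\Tr_1^m(\beta L_{\gamma_1}(\gamma_3))L_\gamma(\gamma_2)+\Tr_1^m(\beta L_{\gamma_1}(\gamma_2))L_\gamma(\gamma_3)$, and the degree-$3$ conclusion follows by selecting $\gamma_1,\gamma_2,\gamma_3$ outside $\LK_{\Tr_1^m(\beta F)}$ so that $L_\gamma(\gamma_1),L_\gamma(\gamma_2)$ are independent and the scalar coefficients are nonvanishing. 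The bound $\Delta_{F_0}\leq p^{n-3}$ guarantees $\dim_{\Fp}\Ima L_\gamma=n-s\geq 3$, while $|\cW_{F_0}|<p^n$ (respectively $\leq p^{n-1}$) guarantees $\dim\LK<n$ (respectively $\leq n-2$), exactly as in the $(n,n)$-case. The quartic statement for odd $p$ follows identically from the modified Lemma \ref{quartic} together with the balancedness of $\Tr_1^m(\beta L_\gamma(x))$. Finally $d^\circ(F\circ H)>2=d^\circ(F_0)$ forces $F\circ H\noteqea F_0$, since algebraic degree is an EA-invariant whereas CCZ-equivalence can raise it.

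I expect the main obstacle to be the bookkeeping in the degree arguments, because of the mixed fields. In the $(n,n)$-case the image $\Ima L_\gamma$, the kernel $\LK$, and the duality all lived in $\Fpn$, whereas now $\Ima L_\gamma\subseteq\Fpm$ while $\gamma_1,\gamma_2,\gamma_3$ and $\LK$ sit in $\Fpn$. I must check that the coset-counting of Proposition \ref{cubic-degree} (comparing $|\LK|\leq p^{n-1}$ against the cosets of $\Ker L_\gamma$ in $\Fpn$) and the balancedness input to Proposition \ref{quartic-degree} remain valid across the two fields, and in particular that two independent vectors $L_\gamma(\gamma_1),L_\gamma(\gamma_2)$ can still be produced inside the $(n-s)$-dimensional space $\Ima L_\gamma\subseteq\Fpm$. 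This is precisely where the combined hypotheses $p^{n-m}<\Delta_{F_0}\leq p^{n-3}$ (which already force $m\geq 4$) are genuinely used: they ensure that $\Ima L_\gamma$ is simultaneously small enough to possess a nonzero annihilator $\beta$ and large enough (dimension $\geq 3$) to host the independent vectors demanded by the derivative computation.
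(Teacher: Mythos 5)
Your proposal is correct and follows essentially the same route as the paper: the paper's own "proof" is precisely the dictionary of substitutions ($\Tr \to \Tr_1^m$ on the $\beta$-slots, $cx \to \Tr_m^n(cx)$, $\mathscr{L}$ acting on $\Fpn\times\Fpm$) applied to Proposition \ref{LS-quad}, Lemma \ref{lemma-cczmap}, Proposition \ref{key-prop}, Theorem \ref{thm1}, Proposition \ref{key-prop2}, Lemma \ref{lemF3} and Propositions \ref{cubic-degree} and \ref{quartic-degree}, exactly as you describe. Your added remarks --- that $\Delta_{F_0}>p^{n-m}$ is what replaces $\Delta_{F_0}>1$ to guarantee non-surjectivity of $L_\gamma:\Fpn\to\Fpm$ and hence a nonzero annihilator $\beta\in\Fpm^\times$, and that the coset-counting still takes place in $\Fpn$ while $\Ima L_\gamma$ now sits in $\Fpm$ --- are correct and in fact make explicit the points the paper leaves to the reader.
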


\noindent The above theorem implies that, for most of quadratic $(n,m)$-functions $F$, there exists a function $F'$
such that
 $F'\eqccz F$ but $F'\noteqea F$. In other words, the notion of CCZ-equivalence is strictly more general than the
 notion of EA-equivalence in the following sense.

\begin{corollary}
Let $p$ be an arbitrary prime. Let $n\geq 4$ and $m$ is a divisor of $n$ such that $4 \leq m <n$. Let $F$ be a
quadratic $(n,m)$-function on $\Fpn$. Suppose that
 $$
 |\cW_{F}|\leq p^{n-1} \text{ and }  p^{n-m}<\Delta_{F}\leq p^{n-3}.
 $$
 Then the CCZ-equivalence class of $F$ is strictly larger than the EA-equivalence class of $F$. That is, one has $\clsEA(F) \subsetneqq
 \clsCCZ(F)$.
\end{corollary}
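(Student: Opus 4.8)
The plan is to deduce the corollary directly from the generalized main theorem (Theorem \ref{mainthm2}), whose hypotheses are met verbatim by the assumptions stated here, so that the only real work is bookkeeping across the two characteristics. First I would check the hypothesis matching. For $p=2$, the bound $|\cW_{F}|\le p^{n-1}=2^{n-1}$ forces $|\cW_{F}|<2^{n}$, which is exactly the condition $\NL(F)\ne 0$ demanded in part~1 of Theorem \ref{mainthm2}, while the differential-uniformity constraint $2^{n-m}<\Delta_{F}\le 2^{n-3}$ is identical. For $p>2$, both hypotheses $|\cW_{F}|\le p^{n-1}$ and $p^{n-m}<\Delta_{F}\le p^{n-3}$ coincide with those of part~2. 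Here the restriction $4\le m<n$ is precisely what keeps the window $p^{n-m}<\Delta_{F}\le p^{n-3}$ admissible, since $m\ge 4$ forces $n-m+1\le n-3$, and $m<n$ places us in the genuine $(n,m)$ setting.

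Next I would invoke Theorem \ref{mainthm2} to produce an explicit witness $F'=F\circ H$. In every case the theorem yields $F'\eqccz F$ together with $d^\circ(F')\in\{3,4\}$, hence $d^\circ(F')>2=d^\circ(F)$. Because algebraic degree is invariant under EA-equivalence (and is in general \emph{not} a CCZ-invariant), this strict increase immediately gives $F'\noteqea F$. Thus $F'\in\clsCCZ(F)$ but $F'\notin\clsEA(F)$, so $F'$ lies in the set-theoretic difference $\clsCCZ(F)\setminus\clsEA(F)$.

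Finally I would combine this witness with the general containment $\clsEA(F)\subseteq\clsCCZ(F)$, which holds because EA-equivalence implies CCZ-equivalence (Remark \ref{rmkequv1}). The existence of $F'\in\clsCCZ(F)\setminus\clsEA(F)$ then upgrades this inclusion to the strict inclusion $\clsEA(F)\subsetneqq\clsCCZ(F)$, which is the assertion of the corollary.

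There is no genuine analytic obstacle at this stage: the corollary is a packaging of Theorem \ref{mainthm2}, and all the substantive content — the existence of a linear structure of $\Tr_1^m(\beta F)$ under $\Delta_{F}>p^{n-m}$ (the $(n,m)$-analogue of Proposition \ref{LS-quad}), the involution $H$ making $F\circ H$ CCZ-equivalent to $F$, and the degree computations forcing $d^\circ(F\circ H)\ge 3$ (the analogues of Propositions \ref{cubic-degree} and \ref{quartic-degree}) — has already been established. The only points demanding care are the verification that the single uniform hypothesis $|\cW_{F}|\le p^{n-1}$ specializes correctly to the slightly weaker-looking requirement $|\cW_{F}|<2^{n}$ in the binary case, and the observation that the stated range of $m$ keeps the differential-uniformity interval nonempty so that the hypotheses are not vacuous.
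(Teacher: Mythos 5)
Your proposal is correct and follows exactly the route the paper intends: the corollary is stated as an immediate consequence of Theorem \ref{mainthm2} (the paper gives no separate proof), and your hypothesis-matching for the two characteristics, the extraction of the witness $F\circ H$ of degree $3$ or $4$, and the upgrade of $\clsEA(F)\subseteq\clsCCZ(F)$ to a strict inclusion are precisely the implicit argument. Your side remarks — that $|\cW_F|\le 2^{n-1}$ implies the binary condition $\NL(F)\ne 0$, and that $m\ge 4$ keeps the interval $p^{n-m}<\Delta_F\le p^{n-3}$ nonempty — are accurate and are the only points needing verification.
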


\begin{remark}
It should be mentioned that two examples of $(n,m)$-functions $F$ with $m|n$ satisfying $\clsEA(F) \subsetneqq
 \clsCCZ(F)$ are already known. One is the binary case $F(x)=\Tr_m^n(x^3)$ in \cite{BC10eqccz}, and the other is a
 $p$-ary ($p>2$) case $F(x)=\Tr_m^n(x^2-x^{p+1})$ in \cite{BH11}. Moreover, for a given $(n,m)$-function $F$ satisfying
$\clsEA(F) \subsetneqq \clsCCZ(F)$ and for any $k\geq 1$, it is shown in \cite{BC10eqccz} that one can construct
$(n,m+k)$-function $F'$ satisfying $\clsEA(F') \subsetneqq \clsCCZ(F')$. See also \cite{PZ13} for a generalization of
these methods over finite abelian groups.
\end{remark}

\noindent  Please note that, when  $m=n$, Theorem \ref{mainthm2} is exactly same to Theorem \ref{mainthm}. When $m$ is
a proper divisor of $m$ (i.e., $m<n$), then one needs a stronger assumption of the lower bound of $\Delta_F$ in Theorem
\ref{mainthm2},
 that is, $p^{n-m} < \Delta_F$. Moreover, the case $\Delta_F=p^{n-m}$ happens if and only if $F$ is a bent (or perfect non-linear)
  $(n,m)$-function.
 Recall that, in Section \ref{preli}, we defined that $(n,1)$-function $f$ on $\Fpn$ is bent if
 $|\cW_f|=p^{\frac{n}{2}}$. In a similar manner, we define that $(n,m)$-function $F$ on $\Fpn$ is bent if
 every component function $\Tr_1^m(bF) \,\,(b\in \Fpm^\times)$ is bent. Then the following equivalent conditions are well-known
  (See for example \cite{Nyb92}.):
  \begin{enumerate}
  \item[--] $F: \Fpn \rightarrow \Fpm$ is bent.
  \item[--] $D_\gamma \Tr_1^m(\beta F)$ is balanced for every $\beta \in \Fpm^\times$ and $\gamma \in\Fpn^\times$.
  \item[--] $D_\gamma F$ is balanced for every $\gamma \in \Fpn^\times$.
  \item[--] $\Delta_F=p^{n-m}$.
  \end{enumerate}

  \noindent Therefore, \emph{explicit examples} of $(n,m)$-function $F$ satisfying the assumptions of Theorem \ref{mainthm2}
  can be constructed by finding non-bent $(n,m)$-function $F$ satisfying $\Delta_F \leq p^{n-3}$. Since every
  $(n,m)$-function is of the form $\Tr_m^n(G)$ for some $(n,n)$-function $G$, we may construct explicit examples of
  $(n,m)$-functions satisfying Theorem \ref{mainthm2} using our previous results in Section \ref{sec-degree} as follows.

\begin{theorem}\label{thm-mn}
Let $F_0(x)$ be a quadratic $(n,n)$-function on $\Fpn$ satisfying the conditions of Theorem \ref{mainthm} with
corresponding $\alpha,\beta, \gamma, c$ in $\Fpn$ and $F(x)=F_0(x)+cx$. Letting $\Delta_F =p^s$, assume further that
$0<s\leq m-3$ (i.e., $p^{n-m}<p^{n-m+s}\leq p^{n-3}$) where $m|n$ with $m \geq 4$. Then the $(n,m)$-function
$$F_0'(x)=\Tr_m^n(\beta F(x))$$  satisfies the
conditions of Theorem \ref{mainthm2} with $\alpha'=\alpha, \beta'=1, \gamma'=\gamma$ and $c'=0$. Therefore, using
 $$\eps'=\Tr(\alpha' x) + \Tr_1^m(\beta' F'(x))=\Tr(\alpha x) + \Tr_1^m( \Tr_m^n(\beta F(x)))=\Tr(\alpha x +\beta F(x))=\eps,$$
 one has the followings.

 \begin{enumerate}
 \item[1.] When $p=2$:
  \begin{align*}
   F'\circ H'&=F'+\eps' D_{\gamma'} F'=F'+\eps D_{\gamma} F' \\
     &=\Tr_m^n(\beta F)+\eps  \Tr_m^n(\beta D_\gamma F)
   \end{align*}
   is
  CCZ-equivalent to $F'=\Tr_m^n(\beta F)$ but not EA-equivalent to each other.

 \item[2.] When $p>2$:
   \begin{align*}
    F'\circ H'&=F'+\eps' D_{\gamma'} F' +{F}_{\text{\tiny DO}}'(\gamma')({\eps'}^2-\eps')\\
      &=\Tr_m^n(\beta F)+\eps \Tr_m^n(\beta D_{\gamma}F) +{\Tr_m^n(\beta F_{\text{\tiny DO}})}(\gamma)({\eps}^2-\eps)
    \end{align*}
    is
  CCZ-equivalent to $F'=\Tr_m^n(\beta F)$ but not EA-equivalent to each other.
 \end{enumerate}
\end{theorem}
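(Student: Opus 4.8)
The plan is to reduce everything to Theorem \ref{mainthm2}: I would show that the $(n,m)$-function $F_0'=\Tr_m^n(\beta F)$ meets all the hypotheses of Theorem \ref{mainthm2} with the declared parameters $\alpha'=\alpha$, $\beta'=1$, $\gamma'=\gamma$, $c'=0$, and then invoke that theorem directly. Observe first that $F_0'$ is again quadratic because $\Tr_m^n(\beta\,\cdot\,)$ is $\Fp$-linear, that $c'=0$ forces $F'=F_0'+\Tr_m^n(c'x)=F_0'$, and that $\eps'=\eps$ exactly as computed in the statement. The engine of the whole argument is the transitivity identity $\Tr_1^m\circ\Tr_m^n=\Tr$, which converts the relevant component functions of $F_0'$ into component functions of $F$: for every $b'\in\Fpm^\times$ one has $\Tr_1^m(b'F_0')=\Tr(b'\beta F)$, and in particular, for $b'=\beta'=1$, $\Tr_1^m(\beta'F_0')=\Tr(\beta F)$.

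I would next verify the parameter conditions and the linearity hypothesis together. By the construction in Theorem \ref{mainthm}, $\gamma$ is a $0$-linear structure of $\Tr(\beta F)$, i.e. $\Tr(\beta L_{\gamma,F}(x))=0$ for all $x$; via the identity above this says precisely that $\gamma'=\gamma$ is a linear structure of $\Tr_1^m(\beta'F_0')$. The trace condition $\Tr(\alpha'\gamma')=\Tr(\alpha\gamma)$ is inherited from Theorem \ref{mainthm} ($=-2$ for $p>2$, $=0$ for $p=2$), and the compatibility condition on $c'$ holds trivially since both sides vanish: $\Tr(c'\beta'\gamma')=0$ as $c'=0$, while $\Tr_1^m(\beta'D_{\gamma'}F_0'(0))=\Tr(\beta D_\gamma F(0))=0$. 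For the linearity, the component identity gives $\cW_{F_0'}(a,b')=\cW_{\Tr(b'\beta F)}(a)$, hence $|\cW_{F_0'}|=\max_{b'\in\Fpm^\times}|\cW_{\Tr(b'\beta F)}|\le\max_{b\in\Fpn^\times}|\cW_{\Tr(bF)}|=|\cW_F|=|\cW_{F_0}|$, the last equality because $F$ and $F_0$ are EA-equivalent. Thus $|\cW_{F_0'}|<2^n$ (resp. $\le p^{n-1}$) is inherited from the assumption on $F_0$.

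The main obstacle is the differential bound $p^{n-m}<\Delta_{F_0'}\le p^{n-3}$. Since $F_0'$ is quadratic, $\Delta_{F_0'}=\max_{\gamma_0\neq0}|\Ker\phi_{\gamma_0}|$, where $\phi_{\gamma_0}(x)=\Tr_m^n(\beta L_{\gamma_0,F}(x))$ is $\Fp$-linear into $\Fpm$, so $|\Ker\phi_{\gamma_0}|=p^{\,n-\dim\Ima\phi_{\gamma_0}}$. For the upper bound I would use that $x\mapsto\Tr_m^n(\beta x)$ is a surjection $\Fpn\to\Fpm$ with kernel of dimension $n-m$, whence $\dim\Ima\phi_{\gamma_0}\ge\dim\Ima L_{\gamma_0,F}-(n-m)$; combining this with $|\Ker L_{\gamma_0,F}|\le\Delta_F=p^s$, i.e. $\dim\Ima L_{\gamma_0,F}\ge n-s$, gives $\dim\Ima\phi_{\gamma_0}\ge m-s\ge3$ by the hypothesis $s\le m-3$, hence $|\Ker\phi_{\gamma_0}|\le p^{n-3}$ for every $\gamma_0$. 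For the strict lower bound I would evaluate at the distinguished direction $\gamma$: because $\Tr_1^m(\phi_\gamma(x))=\Tr(\beta L_{\gamma,F}(x))=0$ for all $x$, the image $\Ima\phi_\gamma$ is contained in the hyperplane $\Ker\Tr_1^m\subsetneq\Fpm$, so $\dim\Ima\phi_\gamma\le m-1$ and therefore $\Delta_{F_0'}\ge|\Ker\phi_\gamma|\ge p^{\,n-m+1}>p^{n-m}$.

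With all hypotheses of Theorem \ref{mainthm2} confirmed, I would conclude at once that $F'\circ H'\eqccz F'$ and $F'\circ H'\noteqea F'$ for $F'=F_0'=\Tr_m^n(\beta F)$. The explicit expressions then follow by substituting $\eps'=\eps$, $D_{\gamma'}F'=\Tr_m^n(\beta D_\gamma F)$, and $F_{\text{\tiny DO}}'(\gamma')=\Tr_m^n(\beta F_{\text{\tiny DO}}(\gamma))$ into the formula of Theorem \ref{mainthm2}, recovering the two displayed forms for $p=2$ and $p>2$. I expect the bookkeeping of the differential-uniformity estimate to be the only delicate point, since it is where both the direction-independent upper bound (driven by $s\le m-3$) and the direction-specific strict lower bound (driven by the linear structure at $\gamma$) must be assembled; the remaining verifications are formal consequences of $\Tr_1^m\circ\Tr_m^n=\Tr$.
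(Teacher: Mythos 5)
Your proposal is correct and mirrors the paper's own proof: both reduce everything to Theorem \ref{mainthm2} via the identity $\Tr_1^m\circ\Tr_m^n=\Tr$, obtaining the $0$-linear structure at $\gamma$, the linearity bound from $\cW_{F_0'}(a,b)=\cW_F(a,b\beta)$, and the upper bound $\Delta_{F_0'}\le p^{n-m+s}\le p^{n-3}$ by the same kernel count for $\Tr_m^n\circ\beta L_\delta$. The only cosmetic difference is the strict lower bound $\Delta_{F_0'}>p^{n-m}$: the paper deduces it from the fact that $F_0'$ is not bent (since $\Tr_1^m(F_0')=\Tr(\beta F)$ admits the linear structure $\gamma$), whereas you compute directly that $\Ima(\Tr_m^n\circ\beta L_\gamma)\subseteq\Ker\Tr_1^m$ forces $|\Ker(\Tr_m^n\circ\beta L_\gamma)|\ge p^{n-m+1}$ --- two phrasings of the same observation.
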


\noindent
\begin{proof}
Since $c'=0$, one has $F'(x)=F_0'(x)=\Tr_m^n(\beta F(x))$ and
$$
\Tr_1^m(\beta' D_{\gamma'}F'(x))=\Tr_1^m(D_{\gamma}\Tr_m^n(\beta F(x)))
 =\Tr_1^m(\Tr_m^n(\beta D_\gamma F(x)))=\Tr(\beta D_\gamma F(x))=0,
$$
which says that $\gamma'=\gamma$ is a $0$-linear structure of $\Tr_1^m(F')$.
 Also, for $a\in \Fpn$ and $b\in \Fpm$, one has
 \begin{align*}
 \cW_{F'}(a,b)&=\sum_{x\in \F_{p^n}}\xi^{\Tr_1^m(bF'(x))-\Tr(ax)}=\sum_{x\in
\F_{p^n}}\xi^{\Tr_1^m(b\Tr_m^n(\beta F(x)))-\Tr(ax)} \\
 &=\sum_{x\in \F_{p^n}}\xi^{\Tr_1^m(\Tr_m^n(b\beta F(x)))-\Tr(ax)}
 =\sum_{x\in \F_{p^n}}\xi^{\Tr (b\beta F(x))-\Tr(ax)}= \cW_{F}(a,b\beta).
 \end{align*}
 Therefore one gets $|\cW_{F'}|=|\cW_{F}|< 2^n$ for a binary case and $|\cW_{F'}|=|\cW_{F}|\leq p^{n-1}$ for odd
characteristic case. Note that $F'=\Tr_m^n(\beta F)$ is not a bent function because $\Tr_1^m (F')=\Tr(\beta F)$ is not
a bent function (since $\gamma$ is a $0$-linear structure of $\Tr(\beta F)$). Thus one has the lower bound
$p^{n-m}<\Delta_{F'} $. For the upper bound, one gets $\Delta_{F'}=\underset{\delta\in \Fpn^\times}{\max} |\Ker
(\Tr_m^n \circ \beta L_\delta)| \leq p^{n-m+s}\leq p^{n-3}$ where $L_\delta(x)=F(x+\delta)-F(x)-(F(\delta)-F(0))$.
\end{proof}

 \bigskip
 \noindent An obvious implication of the above theorem for the case of APN function over $\Fbn$ is the
 folowing.
\begin{corollary}\label{coro-mn}
For every quadratic APN function $F_0(x)$ on $\Fbn$ $(n\geq 4)$ with $F(x)=F_0(x)+cx$ satisfying $\Tr(\beta D_\gamma
F(x))=0$ for all $x$, one has $\Delta_{F'}=2^{n-m+1}$ with $F'=\Tr_m^n(\beta F)$. Thus the condition $2^{n-m}
<\Delta_{F'}\leq 2^{n-3}$ is satisfied for every positive divisor $m$ of $n$ with $m \geq 4$, and the following
function
 $$
   F'+\eps D_\gamma F' =\Tr_m^n(\beta F)+\eps \Tr_m^n(\beta D_\gamma F) \qquad  (\eps=\Tr(\alpha x+\beta F(x)))
 $$
  is of algebraic degree $3$ and is CCZ-equivalent to $\Tr_m^n(\beta F)$ but not EA-equivalent to each other.
\end{corollary}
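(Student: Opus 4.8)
The plan is to apply the case $p=2$ of Theorem~\ref{thm-mn} (equivalently Theorem~\ref{mainthm2}); everything except the exact value of $\Delta_{F'}$ is inherited verbatim from those statements, so the only genuine task is to prove $\Delta_{F'}=2^{n-m+1}$. First I would record that $F=F_0+cx$ is EA-equivalent to the APN function $F_0$, and since $\Delta_F$ is an EA-invariant we get $\Delta_F=\Delta_{F_0}=2$. In the notation $\Delta_F=2^{s}$ of Theorem~\ref{thm-mn} this means $s=1$, so the requirement $0<s\le m-3$ there reads $m\ge 4$, matching the hypothesis of the corollary. Hence Theorem~\ref{thm-mn} applies for every divisor $m\ge 4$ of $n$, and it remains only to pin down $\Delta_{F'}$.

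Second, I would bound $\Delta_{F'}$ from above by a dimension count. For $\delta\in\Fbn^\times$ the differential $D_\delta F'(x)=\Tr_m^n(\beta D_\delta F(x))$ is affine with linear part $\phi_\delta(x)=\Tr_m^n(\beta L_\delta(x))$, so $\Delta_{F'}(\delta,b)$ equals $|\Ker\phi_\delta|$ or $0$. Writing $V=\{y\in\Fbn:\Tr_m^n(\beta y)=0\}$, which has dimension $n-m$ since $\beta\ne 0$, one has $\Ker\phi_\delta=L_\delta^{-1}(V)$ and therefore $\dim\Ker\phi_\delta=\dim(V\cap\Ima L_\delta)+\dim\Ker L_\delta$. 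Because $F$ is quadratic and APN, $\dim\Ker L_\delta=1$ and $\dim\Ima L_\delta=n-1$ for every $\delta\ne 0$; combined with $\dim V=n-m$ this forces $\dim(V\cap\Ima L_\delta)\in\{n-m-1,\,n-m\}$, whence $\dim\Ker\phi_\delta\in\{n-m,\,n-m+1\}$. Thus every differential of the quadratic function $F'$ lies in $\{0,\,2^{n-m},\,2^{n-m+1}\}$, and in particular $\Delta_{F'}\le 2^{n-m+1}$.

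Third, I would rule out the lower value using bentness. The defining hypothesis $\Tr(\beta D_\gamma F(x))=0$ says that $\gamma$ is a nonzero $0$-linear structure of $\Tr(\beta F)=\Tr_1^m(F')$, so this component function of $F'$ is not bent; hence $F'$ is not a bent $(n,m)$-function, and by the well-known equivalence recalled in this section (an $(n,m)$-function is bent if and only if its differential uniformity equals $p^{n-m}$) we obtain $\Delta_{F'}>2^{n-m}$. Together with $\Delta_{F'}\le 2^{n-m+1}$ this yields $\Delta_{F'}=2^{n-m+1}$. Finally $2^{n-m}<2^{n-m+1}\le 2^{n-3}$ holds precisely when $m\ge 4$, so all hypotheses of the $p=2$ part of Theorem~\ref{mainthm2} are met, and that theorem gives that $F'+\eps D_\gamma F'$ has algebraic degree $3$ and satisfies $F'+\eps D_\gamma F'\eqccz F'$ and $F'+\eps D_\gamma F'\noteqea F'$.

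The main obstacle is the sharp determination $\Delta_{F'}=2^{n-m+1}$ rather than a mere inequality: the dimension count only traps $\Delta_{F'}$ in the two-element set $\{2^{n-m},2^{n-m+1}\}$, and it is the non-bentness of $F'$ --- forced by the very linear structure $\gamma$ that underlies the whole construction --- that selects the correct value while simultaneously certifying the strict inequality $\Delta_{F'}>2^{n-m}$ required for Theorem~\ref{mainthm2}.
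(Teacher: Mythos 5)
Your proof is correct and follows essentially the same route the paper takes: the corollary is stated as an immediate specialization of Theorem~\ref{thm-mn}, whose proof already contains both ingredients you use, namely the upper bound $\Delta_{F'}\leq p^{n-m+s}$ (with $s=1$ in the APN case) via the kernel of $\Tr_m^n\circ\beta L_\delta$, and the strict lower bound $\Delta_{F'}>p^{n-m}$ from the non-bentness of $F'$ forced by the $0$-linear structure $\gamma$ of $\Tr(\beta F)=\Tr_1^m(F')$. Your dimension count $\dim(V\cap\Ima L_\delta)+\dim\Ker L_\delta$ is just a slightly more explicit rendering of the paper's estimate, and the conclusion $\Delta_{F'}=2^{n-m+1}$ then follows exactly as you argue.
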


\begin{remark}
When $m=n$, one has $F'=\beta F$ and $F'+\eps D_\gamma F'=\beta (F+\eps D_\gamma F)$ so that one gets the same result
of the case of $(n,n)$-functions in Theorem \ref{mainthm}. Also when $m=\frac{n}{2}$ with even $n$, one gets
$(2m,m)$-function $F'$ such that $F'$ has a CCZ-equivalent but not EA-equivalent function $F'+\eps D_\gamma F'$. That
is, examples of $(2m,m)$ function $F'$ satisfying $\clsEA (F') \subsetneqq \clsCCZ (F')$.
\end{remark}

\noindent We present more explicit examples related to the equivalent classes of $(n,m)$-functions arising from Gold
functions over $\Fbn$.

\begin{corollary}{$($Binary $(n,m)$-functions arising from Gold function$)$}\label{coro-goldnm}
 Let $n\geq 4$ and $m$ be a divisor of $n$ with $m\geq 4$. Let $i\geq 1$ be an integer such that
 $\gcd(i,n)=1$ (i.e., $x^{2^i+1}$ is APN). Then,
 $$
  \Tr_m^n(x^{2^i+1}) +\Tr_m^n(x^{2^i}+x) \Tr(x^{2^i+1}+x)
 $$
 is of algebraic degree \it{three}, and is CCZ-equivalent to  $\Tr_m^n(x^{2^i+1})$ but not EA-equivalent to each other.
 \end{corollary}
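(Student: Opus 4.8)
The plan is to obtain this statement as a direct specialization of Corollary~\ref{coro-mn} to the Gold function $F_0(x)=x^{2^i+1}$, reusing the parameter choices from the binary Gold computation in Section~\ref{subsec-goldbinary}. Since $\gcd(i,n)=1$, the function $F_0$ is a quadratic APN function on $\Fbn$, so it meets the standing hypothesis of Corollary~\ref{coro-mn}. I would take $\alpha=0$, $\beta=\gamma=1$ and $c=1$, so that $F(x)=F_0(x)+cx=x^{2^i+1}+x$.

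First I would verify the linear-structure condition required by the corollary, namely $\Tr(\beta D_\gamma F(x))=0$ for all $x$. A short computation gives $D_1F_0(x)=(x+1)^{2^i+1}+x^{2^i+1}=x^{2^i}+x+1$, whence $D_1F(x)=x^{2^i}+x$ and $\Tr(D_1F(x))=\Tr(x^{2^i})+\Tr(x)=0$; equivalently the required condition $\Tr(c)=\Tr(1)$ holds with $c=1$, independently of the parity of $n$. Thus $\gamma=1$ is a $0$-linear structure of $\Tr(F)$, and since $F_0$ is APN, Corollary~\ref{coro-mn} gives $\Delta_{F'}=2^{n-m+1}$ for $F'=\Tr_m^n(\beta F)=\Tr_m^n(x^{2^i+1}+x)$. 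The hypothesis $m\ge 4$ guarantees $2^{n-m}<2^{n-m+1}\le 2^{n-3}$, so the conditions of Theorem~\ref{mainthm2} are in force.

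Next I would write out the function produced by Corollary~\ref{coro-mn}. With $\eps=\Tr(\alpha x+\beta F(x))=\Tr(x^{2^i+1}+x)$ and $D_\gamma F'=\Tr_m^n(\beta D_1F)=\Tr_m^n(x^{2^i}+x)$, the corollary yields the cubic function
\begin{align*}
 F'+\eps D_\gamma F'
   &=\Tr_m^n(x^{2^i+1}+x)+\Tr(x^{2^i+1}+x)\,\Tr_m^n(x^{2^i}+x) \\
   &=\Tr_m^n(x^{2^i+1})+\Tr_m^n(x)+\Tr(x^{2^i+1}+x)\,\Tr_m^n(x^{2^i}+x),
\end{align*}
which has algebraic degree $3$, is CCZ-equivalent to $F'$, and is EA-inequivalent to the quadratic $F'$.

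Finally I would reconcile this with the stated function. The displayed function $\Tr_m^n(x^{2^i+1})+\Tr_m^n(x^{2^i}+x)\Tr(x^{2^i+1}+x)$ differs from $F'+\eps D_\gamma F'$ only by the $\F_2$-linear (hence affine) term $\Tr_m^n(x)$, so the two are EA-equivalent and in particular the stated function also has algebraic degree $3$. Likewise $F'=\Tr_m^n(x^{2^i+1})+\Tr_m^n(x)\eqea\Tr_m^n(x^{2^i+1})$. Chaining these EA-equivalences with $F'+\eps D_\gamma F'\eqccz F'$ shows the stated function is CCZ-equivalent to $\Tr_m^n(x^{2^i+1})$; and since it has degree $3$ while $\Tr_m^n(x^{2^i+1})$ has degree at most $2$, and algebraic degree is an EA-invariant, the two are EA-inequivalent. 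The main obstacle is not conceptual but bookkeeping: matching the output of Corollary~\ref{coro-mn} to the function as written, and checking that the spurious linear summand $\Tr_m^n(x)$ affects neither the algebraic degree nor the CCZ/EA classes.
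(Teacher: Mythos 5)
Your proposal is correct and follows essentially the same route as the paper: the paper's own proof is a one-line appeal to Theorem \ref{thm-mn}/Corollary \ref{coro-mn} with exactly your parameters $\gamma=\beta=c=1$, $\alpha=0$, $F(x)=x^{2^i+1}+x$, $\eps=\Tr(x^{2^i+1}+x)$. Your explicit reconciliation of the spurious affine summand $\Tr_m^n(x)$ (which the paper silently drops when passing from $F'+\eps D_\gamma F'$ to the displayed function) is a welcome extra bit of bookkeeping, not a deviation in method.
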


 \noindent
 \begin{proof}
 This is an immediate consequence of Theorem \ref{thm-mn} or Corollary \ref{coro-mn} saying that
 $$
   \Tr_m^n(F(x)) \noteqea  \Tr_m^n(F(x)) +\Tr_m^n(D_1 F(x)) \Tr(x^{2^i+1}+x) \eqccz \Tr_m^n(F(x))
 $$
 with $F(x)= x^{2^i+1}+x$ and $\gamma=1, \beta=1, \alpha=0, c=1$ so that $\eps=\Tr(x^{2^i+1}+x)$,
  which is analogous to  Remark \ref{goldremark} of Section
 \ref{subsec-goldbinary}.
 \end{proof}

\begin{remark}
From Corollary \ref{coro-goldnm}, consider the case that $n$ is even and $m=\frac{n}{2}$ ($ m\geq 4$ and
$\gcd(i,2m)=1$). Then  one gets
 \begin{align*}
  \Tr_m^{2m}(x^{2^i+1})= x^{2^i+1}+x^{(2^i+1)2^m} \text{ {\rm and} }\,  \Tr_m^{2m}(x^{2^i}+x)=x+x^{2^i}+x^{2^m}+x^{2^{i+m}}
 \end{align*}
 so that the following cubic $(2m,m)$-function
  $$
     x^{2^i+1}+x^{(2^i+1)2^m} +\left(x+x^{2^i}+x^{2^m}+x^{2^{i+m}}\right)\Tr\left(x^{2^i+1}+x\right)
  $$
  is CCZ-equivalent to $x^{2^i+1}+x^{(2^i+1)2^m}$ but not EA-equivalent to each other. We think that this is an
  interesting example in the sense that the greatest $1 \leq m \leq n $ such that there exists a binary bent $(n,m)$-function is
  the case $m=\frac{n}{2}$ with $n=$even (See \cite{Nyb92}.). Moreover, it is shown in \cite{BC09eqccz} that CCZ-equivalence coincides with
  EA-equivalence for the case of bent functions. Summing up these together, we have the following remark.
\end{remark}

\begin{remark}
Let $n\geq 8$ be even and let $F$ be an $(n,m)$-function on $\F_2^n$. Then, $m=\frac{n}{2}$ is the greatest possible
integer $m$ such that there exist two $(n,m)$-functions $F$ and $F'$ such that $\clsEA(F)=\clsCCZ(F)$ and
$\clsEA(F')\subsetneqq \clsCCZ(F')$. Examples of such $F$ are bent functions, and examples of $F'$ are non-bent
functions satisfying $|\cW_{F'}|<2^n$ and $\Delta_{F'}\leq 2^{n-3}$.
\end{remark}






\noindent For non-binary cases, examples of $(n,m)$-functions on $\Fpn$ satisfying Theorem \ref{mainthm2} and
\ref{thm-mn} can also be given in the same manner. We will give such examples arising from Gold function
$G(x)=x^{p^i}+1$ on $\Fpn$ with {\it even} $n$ and $\gcd(i,n)=1$. In this case, from Section \ref{sec-paryGold}, recall
that $|\cW_G|=p^{\frac{n}{2}+1}$ and $\Delta_G=p$. Therefore, with the same parameters $c=0, \gamma=1$ and
$\alpha,\beta \in \Fpn$
 as in Remark \ref{rmkalpbet} and Corollary \ref{coro-quartic2}, one has

 \begin{corollary}{$($Non-binary $(n,m)$-functions arising from Gold function$)$}\label{coro-parygold}
 Let $p$ be an odd prime.  Let $n\geq 4$ be even and $m\geq 4$ be a divisor of $n$.
 Let $\alpha\in \Fpn$ and $\beta\in \F_{p^2}^\times$ such that $\Tr(\alpha)=-2$ and $\beta^p+\beta=0$.
 Let $\eps=\Tr(\alpha x +\beta x^{p^i+1})$ and $\gcd(i,n)=1$.
  Then the following
 $(n,m)$-function on $\Fpn$,
   \begin{align}
      \Tr_m^n(\beta x^{p^i+1})+\eps \Tr_m^n \left( \beta (x^{p^i}+x+1) \right)+
       \Tr_m^n(\beta)(\eps^2-\eps),  \label{nm-pary}
   \end{align}
   is of algebraic degree $3$ or $4$, and is CCZ-equivalent to $\Tr_m^n(\beta x^{p^i+1})$ but not EA-equivalent to each
   other.
 \end{corollary}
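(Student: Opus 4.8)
The plan is to deduce Corollary~\ref{coro-parygold} directly from Theorem~\ref{thm-mn} applied to the Gold function $F_0(x)=G(x)=x^{p^i+1}$, reusing the parameter choices $c=0$, $\gamma=1$, $\Tr(\alpha)=-2$ and $\beta^p+\beta=0$ (with $\beta\in\F_{p^2}^\times$) that were already validated in the discussion preceding Corollary~\ref{coro-quartic2} and in Remark~\ref{rmkalpbet}. Since $2\mid n$ we have $\F_{p^2}\subseteq\Fpn$, so $\beta$ is a legitimate element of $\Fpn$, and the computation before Corollary~\ref{coro-quartic2} shows that $\gamma=1$ is a $0$-linear structure of $\Tr(\beta G)$ under the stated choice $c=0$. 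First I would verify that $G$ satisfies the hypotheses of Theorem~\ref{mainthm}(2): because $n$ is even and $\gcd(i,n)=1$, the ratio $n/\gcd(i,n)=n$ is even, so by the facts recalled at the beginning of Section~\ref{sec-paryGold} the function $G$ is non-planar with $\Delta_G=p^{\gcd(i,n)}=p$ and $|\cW_G|=p^{n/2+1}$. Both required inequalities, $1<\Delta_G\le p^{n-3}$ and $|\cW_G|\le p^{n-1}$, then reduce to $n\ge4$, which is assumed; the exclusion $i\neq n/2$ of Corollary~\ref{coro-quartic} is automatic here since $\gcd(n/2,n)\ge2$.

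Next I would invoke Theorem~\ref{thm-mn}. With $c=0$ one has $F=F_0=G$, hence $\Delta_F=p$, so writing $\Delta_F=p^s$ gives $s=1$; the extra hypothesis $0<s\le m-3$ is therefore exactly $m\ge4$, which holds. Thus Theorem~\ref{thm-mn} applies to the $(n,m)$-function $F_0'(x)=\Tr_m^n(\beta F(x))=\Tr_m^n(\beta x^{p^i+1})$ and yields a function $F'\circ H'$ with $F'\circ H'\eqccz F'$. It then remains to identify $F'\circ H'$ with the expression in \eqref{nm-pary}. Using $\gamma=1$ one computes $D_\gamma F(x)=D_1 x^{p^i+1}=x^{p^i}+x+1$, and since $G$ is a single DO-monomial its DO-part equals $F$ itself, so the quartic coefficient in Theorem~\ref{thm-mn}(2) is $\Tr_m^n(\beta F_{\text{\tiny DO}})(\gamma)=\Tr_m^n(\beta\cdot1^{p^i+1})=\Tr_m^n(\beta)$. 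Substituting these, together with $\eps'=\eps=\Tr(\alpha x+\beta x^{p^i+1})$, into the formula of Theorem~\ref{thm-mn}(2) reproduces \eqref{nm-pary} verbatim.

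Finally, for the degree claim I would appeal to the $(n,m)$-analogues of Propositions~\ref{cubic-degree} and \ref{quartic-degree}, whose validity in the $(n,m)$ setting is recorded in the text introducing Theorem~\ref{mainthm2}: the $(\eps^2-\eps)$-term has coefficient $\Tr_m^n(\beta)$, so $d^\circ(F'\circ H')=4$ when $\Tr_m^n(\beta)\neq0$ and $d^\circ(F'\circ H')=3$ when $\Tr_m^n(\beta)=0$, giving the asserted ``$3$ or $4$''. In either case the degree exceeds $2=d^\circ\big(\Tr_m^n(\beta x^{p^i+1})\big)$, and since EA-equivalence preserves the algebraic degree we conclude $F'\circ H'\noteqea\Tr_m^n(\beta x^{p^i+1})$, while the CCZ-equivalence $F'\circ H'\eqccz\Tr_m^n(\beta x^{p^i+1})$ is supplied by Theorem~\ref{thm-mn}.

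The only point needing care—rather than a genuine obstacle—is confirming that the hypotheses of Theorem~\ref{thm-mn}, in particular the non-bentness lower bound $p^{n-m}<\Delta_{F'}$ and the upper bound $\Delta_{F'}\le p^{n-3}$ for $F'=\Tr_m^n(\beta G)$, are inherited from $G$; but these are precisely what Theorem~\ref{thm-mn} derives from $0<s\le m-3$, so once the arithmetic on $n,m,i$ is checked there is nothing further to establish. Everything else is the mechanical substitution of the Gold data into an already-proved theorem.
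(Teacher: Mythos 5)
Your proposal is correct and follows essentially the same route as the paper: the paper's own proof likewise sets $F'=\Tr_m^n(\beta x^{p^i+1})$, invokes Theorem \ref{thm-mn} with the parameters $c=0$, $\gamma=1$ from Remark \ref{rmkalpbet} and Corollary \ref{coro-quartic2}, and computes $D_1 F'=\Tr_m^n(\beta(x^{p^i}+x+1))$ and $F'_{\text{\tiny DO}}(1)=\Tr_m^n(\beta)$ to recover \eqref{nm-pary}. Your write-up merely makes explicit the hypothesis checks ($\Delta_G=p$, $|\cW_G|=p^{n/2+1}$, $s=1\le m-3$) that the paper leaves to the reader.
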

 \begin{proof}
   Letting $F'(x)=\Tr_m^n(\beta x^{p^i+1})$, Theorem \ref{thm-mn} implies that
   $$
     F'(x) \eqccz F'(x)+\eps D_\gamma F'(x) + F'_{\text{\tiny DO}}(\gamma) (\eps^2-\eps) \noteqea  F'(x),
   $$
   where
    \begin{align*}
    D_\gamma F'(x) &=D_1 \Tr_m^n(\beta x^{p^i+1})=\Tr_m^n( \beta D_1 x^{p^i+1}) \\
      & = \Tr_m^n(\beta( (x+1)^{p^i+1}-x^{p^i+1} ) )=\Tr_m^n(\beta (x^{p^i}+x+1)), \text{ and}\\
       F'_{\text{\tiny DO}}(\gamma)&=F'(\gamma)=F'(1)=\Tr_m^n(\beta).
    \end{align*}
    It is also straightforward to see $\Delta_{F'}=p^{n-m+1}$.
 \end{proof}

\begin{remark}
 The $(n,m)$-function in Corollary \ref{coro-parygold} is of algebraic degree three
  if $\Tr_m^n(\beta)=0$, and four if $\Tr_m^n(\beta)\neq 0$. Since $n$ is even and $\beta \in \F_{p^2}$,
  one has  $\Tr_m^n(\beta)=0$ if $m$ is odd, and  $\Tr_m^n(\beta)=\frac{n}{m} \beta$ if $m$ is even.
  Therefore, the function in \eqref{nm-pary} is cubic if $m$ is odd or $\frac{n}{m} \equiv 0 \pmod{p}$, and is quartic
  otherwise.
\end{remark}

\begin{corollary}\label{coro-3arygold}
 Let $p=3$.  Let $n\geq 4$ be even and $m\geq 4$ be a divisor of $n$.
 Let $\alpha\in \F_{3^n}$ and $\beta\in \F_{3^2}^\times$ such that $\Tr(\alpha)=1$ and $\beta^2=-1$.
 Let $\eps=\Tr(\alpha x +\beta x^{3^i+1})$ and $\gcd(i,n)=1$.
  Then the following
 $(n,m)$-function on $\F_{3^n}$,
   $$
      \Tr_m^n(\beta x^{3^i+1})+\eps \Tr_m^n \left( \beta (x^{3^i}+x+1) \right)+
       \Tr_m^n(\beta)(\eps^2-\eps),
   $$
   is of algebraic degree $3$ if $m\equiv 1\pmod{2}$ or $\frac{n}{m} \equiv 0 \pmod{3}$, and is of degree $4$ otherwise,
   and is CCZ-equivalent to $\Tr_m^n(\beta x^{3^i+1})$ but not EA-equivalent to each
   other.
 \end{corollary}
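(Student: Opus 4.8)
The plan is to obtain Corollary~\ref{coro-3arygold} as the direct specialization of Corollary~\ref{coro-parygold} to the prime $p=3$, so that the whole argument reduces to checking that the parameter hypotheses stated here are exactly the $p=3$ instances of those in Corollary~\ref{coro-parygold}. First I would translate the condition on $\alpha$: in $\F_3$ one has $-2\equiv 1$, so the requirement $\Tr(\alpha)=-2$ of Corollary~\ref{coro-parygold} is literally $\Tr(\alpha)=1$ here. Next I would translate the condition on $\beta$: setting $p=3$ in $\beta^p+\beta=0$ gives $\beta^3+\beta=\beta(\beta^2+1)=0$, and since $\beta\in\F_{3^2}^\times$ is nonzero this is equivalent to $\beta^2=-1$. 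Such a $\beta$ lies in $\F_{3^2}\setminus\F_3$ because $-1$ is a non-square in $\F_3$ but becomes a square in the quadratic extension, so the hypothesis is consistent and $\eps=\Tr(\alpha x+\beta x^{3^i+1})$ is the same $\eps$ occurring in Corollary~\ref{coro-parygold}.

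Once the parameters are matched, Corollary~\ref{coro-parygold} immediately yields that the displayed $(n,m)$-function is CCZ-equivalent to $\Tr_m^n(\beta x^{3^i+1})$ yet EA-inequivalent to it, and it remains only to determine the algebraic degree for $p=3$. Here I would invoke the degree dichotomy recorded in the remark following Corollary~\ref{coro-parygold}, namely that the function is cubic when $\Tr_m^n(\beta)=0$ and quartic when $\Tr_m^n(\beta)\neq 0$. Since $n$ is even and $\beta\in\F_{3^2}$, the trace $\Tr_m^n(\beta)$ collapses: when $m$ is odd the terms $\beta,\beta^{3},\beta,\beta^{3},\dots$ cancel in pairs via $\beta^3+\beta=0$ (and $n/m$ is even because $n$ is even and $m$ is odd), giving $\Tr_m^n(\beta)=0$; when $m$ is even every summand equals $\beta$, giving $\Tr_m^n(\beta)=\tfrac{n}{m}\beta$. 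As $\beta\neq 0$, the latter vanishes precisely when $\tfrac{n}{m}\equiv 0\pmod 3$. Combining the two cases gives degree $3$ exactly when $m\equiv 1\pmod 2$ or $\tfrac{n}{m}\equiv 0\pmod 3$, and degree $4$ otherwise, which is the assertion.

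Since the construction, the CCZ-equivalence, and the EA-inequivalence are all supplied verbatim by Corollary~\ref{coro-parygold}, there is no genuine obstacle in this argument; it is purely a verification. The only points demanding any care are the faithful reading of the two parameter conditions $\Tr(\alpha)=1$ and $\beta^2=-1$ as the $p=3$ forms of $\Tr(\alpha)=-2$ and $\beta^p+\beta=0$, together with the short trace computation pinning down when $\Tr_m^n(\beta)$ vanishes, both of which are settled by the elementary identities above.
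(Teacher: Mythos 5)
Your proposal is correct and follows exactly the route the paper intends: Corollary \ref{coro-3arygold} is the $p=3$ specialization of Corollary \ref{coro-parygold}, with $\Tr(\alpha)=-2$ becoming $\Tr(\alpha)=1$, the condition $\beta^{p}+\beta=0$ becoming $\beta^{2}=-1$ for nonzero $\beta$, and the degree dichotomy settled by the trace computation $\Tr_m^n(\beta)=0$ for odd $m$ and $\Tr_m^n(\beta)=\tfrac{n}{m}\beta$ for even $m$, precisely as in the remark preceding the corollary. No gaps.
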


\section{Conclusions and Future Works}

For almost all quadratic $(n,n)$-functions $F_0$ over $\Fpn$ with $n\geq 4$, we constructed $F'$ satisfying $F'\eqccz
F_0$ but $F'\noteqea F_0$. For a binary case, assuming $\NL(F_0)\neq 0$ and $\Delta_{F_0} \leq 2^{n-3}$, the function
$$
F'=F+\eps D_\gamma F \quad (\text{with } F(x)=F_0(x)+cx \text{ and } \eps=\Tr(\alpha x +\beta F(x)))
$$
satisfies $F_0\noteqea F' \eqccz F_0$ for many suitably chosen $\alpha, \beta, \gamma $ and $c$ in $\Fbn$, where the
number of possible choices of $\{\alpha,\beta, \gamma, c\}$ is at least $2^{2n-2}(2^n-1)$ and exactly $2^{2n-2}(2^n-1)$
if $F_0$ is APN. For a field $\Fpn$ with $p>2$, assuming $ |\cW_{F_0}|\leq p^{n-1}$ and $1<\Delta_{F_0}\leq p^{n-3}$,
the function
$$
F'=F+\eps D_\gamma F +F_{\text{\tiny DO}}(\gamma)(\eps^2-\eps) \quad (\text{with } F(x)=F_0(x)+cx \text{ and }
\eps=\Tr(\alpha x +\beta F(x)))
$$
satisfies $F_0\noteqea F' \eqccz F_0$ for many suitably chosen $\alpha, \beta, \gamma $ and $c$ in $\Fbn$, where the
number of possible choices of $\{\alpha,\beta, \gamma, c\}$ is at least $p^{2n-2}(p-1)$.

For arbitrary characteristic $p$, one can choose $\alpha,\beta, \gamma$ and $c$ (in the order of $\gamma\mapsto
\{\beta,\alpha\} \mapsto c$) satisfying
 \begin{align*}
\Tr(\beta D_\gamma F_0(x))&=\Tr(\beta D_\gamma F_0(0)) \text{ for all } x\in \Fpn,  \,\,\Tr(\alpha\gamma)=-2, \\
\text{\emph{and} }\,\, \Tr(c\beta\gamma)&=-\Tr(\beta D_\gamma F_0(0)).
 \end{align*}

\noindent For a fixed $F_0$, such  construction generates many $F'$ because $F'$ is depending on the choices of
$\alpha, \beta,\gamma$ and $c$. Though all of such $F'$ are CCZ-equivalent to each other, EA-equivalence or
(EA-inequivalence) among them are unclear at this moment. Moreover such $F'$ are no longer plateaued, and the
corresponding the Walsh spectrum needs to be investigated further. Also, it may be possible to apply our technique to
general $(n,m)$-functions, where $m$ is not necessarily a divisor of $n$. Our technique (especially in Proposition
\ref{key-prop}) heavily relies on the equivalence between the linear structure and the linear translator over the prime
field $\F_p$, where the linear map $\Tr_1^n$ has its image in $\F_p$. That is the reason why we could not extended the
ideas in \cite{BCP06, BCL09FFTA}, where they also used the trace maps over the intermediate fields such as $\Tr_3^n$ or
$\Tr_d^n$ with $d|n$. Therefore, extending Proposition \ref{key-prop} to the trace map over the intermediate fields is
also desirable.

\end{document}